\documentclass[11pt]{article}

\newif\ifmaincontent
\newif\ifsupplementcontent
\ifdefined\supplementaryonly
    \maincontentfalse
    \supplementcontenttrue
\else
    \maincontenttrue
    \ifdefined\submissionmainonly
        \supplementcontentfalse
    \else
        \supplementcontenttrue
    \fi
\fi

\usepackage[letterpaper,margin=1in]{geometry}
\usepackage[T1]{fontenc}
\usepackage[utf8]{inputenc}
\usepackage{newtxtext}
\usepackage{microtype}
\usepackage{setspace}
\usepackage{graphicx}
\graphicspath{{figures/}}
\usepackage{caption}
\usepackage{placeins}
\usepackage{booktabs}
\usepackage{array}
\usepackage{tabularx}
\usepackage{xcolor}
\usepackage[hidelinks]{hyperref}
\ifdefined\supplementaryonly
    \usepackage{xr-hyper}
    \hypersetup{
        pdftitle={Supplementary Information for Learned-projector QAOA for hierarchical optimization},
        pdfauthor={Kangyun Zhou}
    }
\else
    \ifdefined\submissionmainonly
        \usepackage{xr-hyper}
    \fi
    \hypersetup{
        pdftitle={Learned-projector QAOA for hierarchical optimization},
        pdfauthor={Kangyun Zhou}
    }
\fi

\usepackage{amsmath,amssymb,amsthm,mathtools}
\usepackage{enumitem}
\numberwithin{equation}{section}

\theoremstyle{definition}

\newtheorem{assumption}{Assumption}[section]
\theoremstyle{plain}
\newtheorem{theorem}{Theorem}[section]
\newtheorem{lemma}{Lemma}[section]

\theoremstyle{remark}
\newtheorem{remark}{Remark}[section]

\usepackage[backend=biber,style=numeric,sorting=none,maxbibnames=99]{biblatex}
\AtBeginBibliography{\small}
\newcommand{\bits}{\{0,1\}}
\newcommand{\ket}[1]{\left|#1\right\rangle}
\newcommand{\bra}[1]{\left\langle#1\right|}
\newcommand{\braket}[2]{\left\langle#1\middle|#2\right\rangle}

\newcommand{\norm}[1]{\left\lVert#1\right\rVert}
\newcommand{\Id}{\mathbb{I}}
\newcommand{\RU}{\mathrm{RU}}

\newcommand{\argmin}{\operatorname*{arg\,min}}

\newcommand{\R}{\mathbb{R}}

\newcommand{\calG}{\mathcal{G}}
\newcommand{\calH}{\mathcal{H}}

\newcommand{\calO}{\mathcal{O}}

\newcommand{\suppref}[1]{%
    \ifdefined\submissionmainonly
        Supplementary Section~\ref{#1}%
    \else
        \ifdefined\supplementaryonly
            Section~\ref{#1}%
        \else
            Appendix~\ref{#1}%
        \fi
    \fi
}

\ifdefined\supplementaryonly
    \title{\textit{Learned-projector QAOA for hierarchical optimization}\\[0.6em]{\large Supplementary Information}}
\else
    \title{Learned-projector QAOA for hierarchical optimization}
\fi
\author{Kangyun Zhou$^{1,2}$, Dong An$^{3}$, Jin-Peng Liu$^{1,4,5,\thanks{Corresponding author: liujinpeng@tsinghua.edu.cn}}$\\
\footnotesize $^{1}$ Yau Mathematical Sciences Center, Tsinghua University\\
\footnotesize $^{2}$ Qiuzhen College, Tsinghua University\\
\footnotesize $^{3}$ Beijing International Center for Mathematical Research, Peking University\\
\footnotesize $^{4}$ Institute for Applied Mathematics, Tsinghua University\\
\footnotesize $^{5}$ Beijing Institute of Mathematical Sciences and Applications}

\ifdefined\supplementaryonly
    \date{}
\else
    \date{}
\fi

\begin{document}
\ifmaincontent
\maketitle

\begin{abstract}
Many optimization problems reveal inexpensive structural information before requiring costly final evaluation, whereas the conventional quantum approximate optimization algorithm (QAOA) applies a single aggregate objective throughout. We introduce the learned-projector quantum alternating operator ansatz (LP-QAOA), a multistage protocol that freezes optimized circuits and uses their output states to define later projector mixers. We prove a stability theorem for LP-QAOA that bounds the propagation of approximation errors through successive frozen stages. Our analysis also shows how learned-projector mixing avoids the high-order tunnelling suppression of local mixers. We conduct state-vector simulations on block-constrained spin tiling (BCST) instances, where LP-QAOA achieves higher probabilities of sampling the optimum, lower estimated logical-resource requirements, and better trainability than the tested baselines. A supporting stochastic block model experiment extends LP-QAOA to optimization problems with soft hierarchical structure. Our results highlight the potential of LP-QAOA to improve variational quantum optimization by exploiting hierarchical problem structure.
\end{abstract}

\section{Introduction}
\label{sec:introduction}

Quantum algorithms approach combinatorial optimization by encoding objectives and constraints in Hamiltonians and using quantum dynamics to concentrate probability on high-quality solutions. Quantum annealing and adiabatic quantum computation (AQC) were among the earliest general approaches \cite{kadowaki1998quantum,farhi2000quantum}. In AQC, a simple driver Hamiltonian is slowly transformed into a problem Hamiltonian whose ground state represents an optimum. This construction inspired the quantum approximate optimization algorithm (QAOA), which alternates evolutions under cost and mixer Hamiltonians and optimizes their durations classically \cite{farhi2014quantum}. Product-formula discretization of a sufficiently slow interpolation supplies convergent QAOA schedules as depth $p$ increases, while variational optimization can identify finite-time nonadiabatic schedules \cite{farhi2014quantum,binkowski2024elementary,zhou2020quantum,brady2021optimal}. Its tunable depth and hybrid quantum--classical structure make QAOA a prominent candidate for noisy intermediate-scale quantum processors \cite{farhi2016quantum,zhou2020quantum}.

The QAOA circuit template is, in principle, broadly applicable to combinatorial optimization, with constraints encoded as penalty terms in the cost Hamiltonian. Nevertheless, its standard formulation combines all constraints and objectives from the outset, without explicitly exploiting their hierarchy. The standard mixer explores feasible and infeasible assignments through individual bit flips, making it difficult to concentrate probability on optimal solutions under complex constraints \cite{wang2020xy}. Studies of the corresponding adiabatic dynamics identify weak tunnelling between competing low-energy configurations and small energy gaps as obstacles to reaching the optimum \cite{amin2009first,altshuler2010anderson}. Slow passage through these gaps translates into deep circuits for QAOA schedules obtained by discretizing the adiabatic evolution \cite{farhi2014quantum,jansen2007bounds}. An alternative is to design mixers that preserve specified constraints, keeping the search within the corresponding feasible space \cite{hadfield2019quantum}. For example, XY mixers preserve the number of selected variables and therefore support searches subject to fixed-cardinality constraints \cite{wang2020xy}. More flexible approaches incorporate problem information through initial-state preparation, recursive reduction or the construction of the mixer itself \cite{tate2023warm,bravyi2020obstacles,bartschi2020grover}.

Two developments in quantum algorithm theory motivate combining staged optimization with finite-angle QAOA circuits.  Tunable variable-time amplitude amplification (VTAA) characterizes generic nested amplitude amplification through tunable stage thresholds and deterministic amplification schedules \cite{ambainis2012variable,low2024quantum}.  Its staged reflections distribute amplification effort across stopping times.  Discrete adiabatic theorems bound the error in following an evolving eigenspace through a sequence of slowly varying unitary operations \cite{costa2022optimal}.  Recent results prove self-cancellation of digital adiabatic errors and show that large-step product walks can possess eigenphase gaps distinct from continuous Hamiltonian energy gaps \cite{lu2026digital,an2025large}.  For unstructured search, the same framework supplies QAOA angles with Grover-optimal asymptotic query scaling \cite{an2025large}.  Together, these ideas support a staged QAOA in which an optimized state defines the projector mixer for the next stage; the later circuit then uses finite-angle layers without being restricted to a small-step discretization of continuous evolution. Guided by these principles, we introduce the learned-projector quantum alternating operator ansatz (LP-QAOA) for hierarchical optimization.

\subsection{Overview of LP-QAOA}
\paragraph{Design of LP-QAOA.}
LP-QAOA is a multistage variational algorithm that uses the quantum state learned at one stage to define the mixer for the next.  Each optimized preparation circuit is frozen and reused as the algorithm progresses from structural or coarse objectives to the final objective.

If stage $j$ prepares $\ket{\Phi_{j}}$, the projector onto this learned state is $\Pi_{\Phi_{j}}=\ket{\Phi_{j}}\!\bra{\Phi_{j}}$, and a later learned-projector (LP) mixer uses
\begin{equation}
    R_{\Phi_{j}}=\Id-\Pi_{\Phi_{j}}
    =\Id-\ket{\Phi_{j}}\!\bra{\Phi_{j}}
    \label{eq:history_mixer_intro}
\end{equation}
as its generator. Section~\ref{subsec:main_construction} gives the formal definition and circuit realization, while \suppref{app:related} compares LP-QAOA with constraint-specific mixers, warm starts, Grover-mixer QAOA and other staged QAOA variants.

Our primary focus is on optimization problems with a strict hierarchy of constraints.  Successive constraint levels define nested feasible sets, and the final objective ranks the remaining candidates.  This structure allows LP-QAOA to resolve lower-level requirements first and to reuse the resulting preparation during later refinement.  The same construction can also be applied to a soft hierarchy, in which an earlier objective provides an approximate guide to promising final solutions.  We examine this broader setting as a supporting application.

\paragraph{Stability and energy-gap analysis.}
We establish two analytical results for LP-QAOA. The first bounds how approximation errors propagate through successive frozen stages. It relates the final-state error to the accuracy of intermediate preparations and their subsequent use in learned-projector mixers. This provides a quantitative basis for setting preparation accuracies across the hierarchy (Theorem~\ref{thm:main_stability}).

The second shows how learned-projector mixing avoids the suppression caused by long sequences of local moves between distant feasible configurations. We examine this effect through the adiabatic evolutions associated with local and learned-projector mixers, using their energy gaps to identify obstacles to reaching the optimum. In a hierarchical Exact Cover 3 (EC3) model, local mixing connects distant configurations through many intermediate moves, whereas learned-projector mixing couples them directly. This direct coupling avoids high-order tunnelling suppression and yields a larger minimum energy gap (Theorem~\ref{thm:main_ec3_gap}). The learned state therefore changes both which configurations enter refinement and how the quantum search moves between them.

Together, the staged construction and the analytical results motivate three consequences that organize the evaluation below.  Early stages use cheaper structural Hamiltonians, reserving the expensive final objective for later refinement.  The learned projector couples configurations represented in an earlier state collectively, rather than only through sequences of local moves.  Finite-depth optimization is divided across the hierarchy: each stage optimizes its current circuit before supplying a fixed preparation to the next.  The numerical studies accordingly examine resource use, target concentration and finite-depth trainability.

\paragraph{Numerical results.}
To evaluate LP-QAOA in a strict multilevel setting, we construct a generalized spin-tiling benchmark.  The starting point is the tiling-puzzle formulation of spin-glass instances, in which locally specified subproblems combine into a global constraint-satisfaction problem \cite{hamze2018near}.  We extend this idea to several nested levels of exact constraints followed by a final objective, and call the resulting family block-constrained spin tiling (BCST).  In this benchmark, each LP-QAOA stage corresponds to one level of the strict hierarchy.

Across the BCST instances studied here, LP-QAOA achieves a markedly higher probability of sampling the unique optimum and a lower total cost than the tested two-stage LP-QAOA and block-XY QAOA alternatives.  Ablation tests further show that both the optimized intermediate preparation and its use as the projector reference contribute to these gains.  BCST also provides a common setting for examining trainability through the response of the objective to changes in circuit angles.  At the tested depths, this response is substantially stronger for LP-QAOA than for the local block-XY alternatives, as measured by normalized gradients.

For a complementary study of soft hierarchies, we use the stochastic block model (SBM), originally introduced as a probabilistic model of social networks in which vertices are assigned to groups and connection probabilities depend on those assignments \cite{holland1983stochastic}.  The results show that LP-QAOA can use approximate structural information to improve the probability of finding the optimum in problems with soft hierarchies.

\section{Results}
\label{sec:results}

\subsection{Formal definition and variants of LP-QAOA}
\label{subsec:main_construction}

We define LP-QAOA through its stage-wise variational problems and the construction of the mixers linking successive stages.

Formally, let $\ket{\Phi_0}$ be a fixed, easy-to-prepare initial state for an $m$-stage protocol with $m\ge2$.  The first variational stage ($j=1$) uses a prescribed mixer $M_1$, such as a transverse-field or XY mixer, together with a diagonal phase Hamiltonian $C_1$, a depth $p_1$, and a classical loss functional $\mathcal L_1$.  For each later stage $j\ge2$, choose $C_j$, $p_j$ and $\mathcal L_j$.  Its mixer is constructed from the preceding optimized state as
\begin{equation}
    M_j=R_{\Phi_{j-1}}
    =\Id-\ket{\Phi_{j-1}}\!\bra{\Phi_{j-1}},
    \qquad j\ge2.
    \label{eq:lp_mixer}
\end{equation}
The stage-$j$ circuit is
\begin{equation}
    U_j(\theta_j)
    =
    \prod_{\ell=p_j}^{1}
    \left[
    e^{-i\beta_{j,\ell}M_j}
    e^{-i\gamma_{j,\ell}C_j}
    \right],
    \qquad
    \theta_j=(\beta_{j,\ell},\gamma_{j,\ell})_{\ell=1}^{p_j}.
    \label{eq:stage_ansatz}
\end{equation}
The optimizer solves the stage-local variational problem
\begin{equation}
    \theta_j^\star\in
    \argmin_{\theta_j}
    \mathcal L_j\!\left(
    U_j(\theta_j)\ket{\Phi_{j-1}}
    \right),
    \qquad
    \ket{\Phi_j}=
    U_j(\theta_j^\star)\ket{\Phi_{j-1}}.
    \label{eq:stage_optimization}
\end{equation}
The optimized parameters and preparation circuit for $\ket{\Phi_j}$ are then held fixed when stage $j+1$ is constructed.

\begingroup
\setlength{\abovedisplayskip}{5pt}
\setlength{\belowdisplayskip}{5pt}
\setlength{\abovedisplayshortskip}{0pt}
\setlength{\belowdisplayshortskip}{5pt}
The LP mixer has a Grover-style circuit form, but its reference is learned by previous frozen stages rather than supplied by a prescribed exact reference-state preparation \cite{bartschi2020grover}.  For circuit synthesis, let $P_0\ket{0^N}=\ket{\Phi_0}$, where $P_0$ is a fixed initial-state preparation circuit, and let
\[
    A_{j-1}
    =
    U_{j-1}(\theta_{j-1}^\star)
    \cdots
    U_1(\theta_1^\star)P_0
\]
be the full frozen preparation circuit for $\ket{\Phi_{j-1}}$.  Then an LP-mixer layer can be realized as
\begin{equation}
    e^{-i\beta R_{\Phi_{j-1}}}
    =
    A_{j-1}\,
    e^{-i\beta(\Id-\ket{0^N}\!\bra{0^N})}\,
    A_{j-1}^{\dagger}.
    \label{eq:history_circuit}
\end{equation}
\endgroup

Operationally, the circuit uncomputes the fixed earlier-stage preparation, applies a selective phase to the all-zero state and then re-prepares the reference state.  Figure~\ref{fig:lp_qaoa_nested_circuit} summarizes how the fixed preparation circuit is replayed and uncomputed to implement the learned-projector mixer.

\begingroup
\setlength{\intextsep}{4pt}
\begin{figure}[!ht]
\centering
\captionsetup{font=small,skip=5pt}
\includegraphics[width=0.90\textwidth]{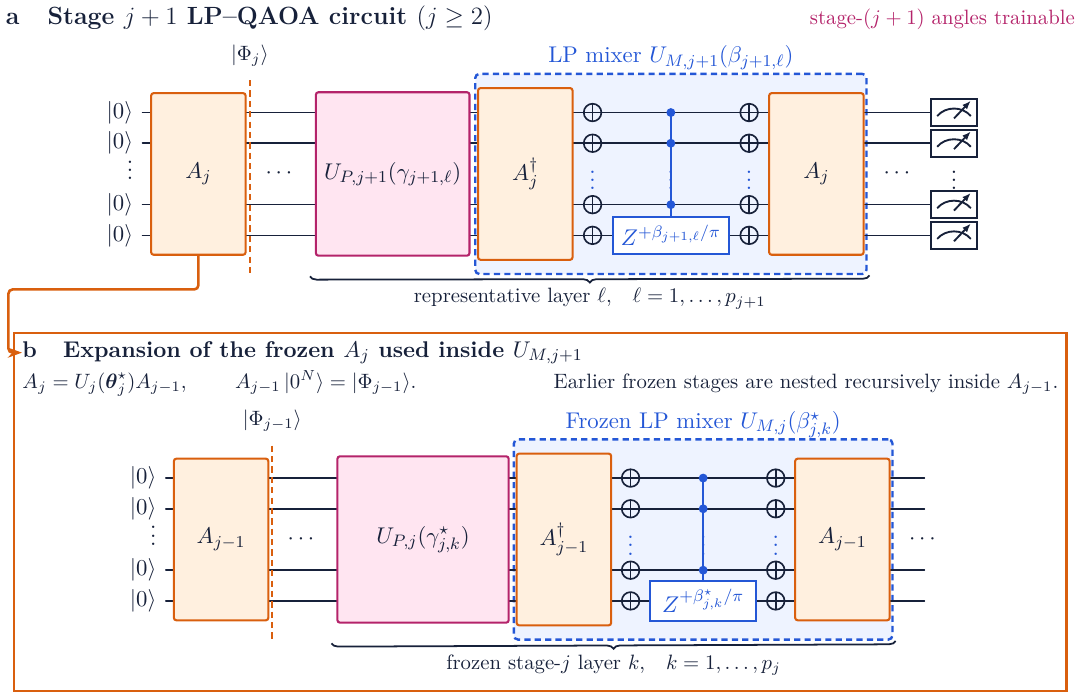}
\caption{\textbf{Learned-projector mixing and recursive nesting in LP-QAOA.}
\textbf{a}, Stage $j+1$ ($j\geq2$) begins with the frozen preparation $A_j$ of $\ket{\Phi_j}$. Up to a global phase, its LP mixer uncomputes with $A_j^{\dagger}$, applies a selective phase to $\ket{0^N}$ and re-prepares with $A_j$. \textbf{b}, Expanding $A_j$ reveals the frozen stage-$j$ circuit, whose LP mixer is constructed from $A_{j-1}$. Starred parameters are fixed.}
\label{fig:lp_qaoa_nested_circuit}
\end{figure}
\FloatBarrier
\endgroup

The construction applies to optimization problems with either strict or soft hierarchical structure.  In the strict case, the stage ground-state sets obey
\begin{equation}
    \calG_m\subseteq\calG_{m-1}\subseteq\cdots\subseteq\calG_1,
    \qquad
    \calG_j=\argmin_z C_j(z).
    \label{eq:strict_refinement}
\end{equation}
If an earlier stage concentrates nearly all of its probability mass in the feasible subspace, later phase separators may omit the corresponding constraint terms.  This reduces the circuit cost of implementing each subsequent phase separator.

In the soft case, exact nesting is replaced by a measurable enrichment criterion: relative to the initial state, the coarse-stage state should assign greater probability to a prespecified low-energy set of the final objective.  Supplementary Algorithm~A0 summarizes the general protocol, and Algorithms~A1--A2 describe the implementations used in our numerical experiments.

LP-QAOA recursively nests the optimized preparation circuits of earlier stages within later circuits. This repeated use of approximate preparations calls for a stability bound on error accumulation, which the following theorem provides.

\begin{theorem}[Frozen-stage stability]
\label{thm:main_stability}
When applied to the ideal preceding state with the ideal mixer, the parameters returned at stage $j$ prepare a chosen ideal target within error $\varepsilon_j$. Let $e_j$ denote the distance between the actual and ideal states after stage $j$. Set $B_j=\sum_{\ell=1}^{p_j}|\beta_{j,\ell}|$ for an LP-mixer stage and $B_j=0$ for a fixed-mixer stage.  Then
\begin{equation}
    e_m\le
    \sum_{j=1}^{m}\varepsilon_j
    \prod_{k=j+1}^{m}(1+B_k).
    \label{eq:main_stability_bound}
\end{equation}
The amplification factors depend on the optimized mixer angles.
\end{theorem}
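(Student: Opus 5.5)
We set up ideal and actual states and derive a one-step recursion. Let $\ket{\Psi_j}$ denote the chosen ideal target of stage $j$, with $\ket{\Psi_0}=\ket{\Phi_0}$ so that $e_0=0$. Let $\ket{\Phi_j}$ denote the actually prepared state and $e_j=\norm{\ket{\Phi_j}-\ket{\Psi_j}}$. We compare the actual stage-$j$ unitary $U_j(\theta_j^\star)$, built with mixer $R_{\Phi_{j-1}}$, against the ideal unitary $\widetilde U_j$ built with the same angles but with mixer $R_{\Psi_{j-1}}$. By hypothesis, $\norm{\widetilde U_j\ket{\Psi_{j-1}}-\ket{\Psi_j}}\le\varepsilon_j$. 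The triangle inequality then gives
\[
e_j\le \norm{U_j\ket{\Phi_{j-1}}-U_j\ket{\Psi_{j-1}}}+\norm{(U_j-\widetilde U_j)\ket{\Psi_{j-1}}}+\varepsilon_j .
\]
The first term equals $e_{j-1}$ by unitarity. The remaining task is to bound the second term by $B_j e_{j-1}$. For a fixed-mixer stage it vanishes, so $B_j=0$, consistent with the statement.

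To bound the perturbation term, we telescope over layers. Write $U_j$ and $\widetilde U_j$ as products of the same cost factors $e^{-i\gamma_{j,\ell}C_j}$ interleaved with mixer factors $e^{-i\beta_{j,\ell}M}$. The hybrid argument, replacing one mixer factor at a time, gives
\[
\norm{U_j-\widetilde U_j}\le\sum_{\ell}\norm{e^{-i\beta_{j,\ell}R_{\Phi}}-e^{-i\beta_{j,\ell}R_{\Psi}}}.
\]
For each factor, Duhamel's formula yields $\norm{e^{-iA}-e^{-iB}}\le\norm{A-B}$ for Hermitian $A,B$. Hence each factor is bounded by $|\beta_{j,\ell}|\,\norm{\Pi_{\Phi_{j-1}}-\Pi_{\Psi_{j-1}}}$. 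The difference of rank-one projectors is then bounded by the distance between the vectors: $\norm{\ket{\phi}\!\bra{\phi}-\ket{\psi}\!\bra{\psi}}\le\norm{\ket\phi-\ket\psi}$. This can be checked through the sharper value $\sqrt{1-|\langle\phi|\psi\rangle|^2}\le\norm{\phi-\psi}$, or more crudely by writing the difference as $\ket\phi(\bra\phi-\bra\psi)+(\ket\phi-\ket\psi)\bra\psi$, which gives the factor $2$. Combining these gives $e_j\le(1+B_j)e_{j-1}+\varepsilon_j$. Unrolling from $e_0=0$ produces exactly the bound \eqref{eq:main_stability_bound}.

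The main obstacle is getting the constant $1$ rather than $2$ in the projector estimate, and handling global phases. The projectors are phase-invariant, so they depend only on the rays. The distance $e_j$ should therefore be read either as a phase-minimized vector distance or in a form compatible with the sharp rank-one bound $\norm{\Pi_\phi-\Pi_\psi}=\sqrt{1-|\langle\phi|\psi\rangle|^2}$. I would first try to prove the sharp identity, whose operator norm follows from the $2\times2$ eigenvalues $\pm\sqrt{1-|\langle\phi|\psi\rangle|^2}$. I would then note that this quantity is at most the vector distance after choosing the phase, and that it also bounds the phase-minimized distance up to the needed direction. If that fails, I would fall back to defining $e_j$ as the phase-optimized distance, since unitarity and the triangle step are preserved under phase alignment.
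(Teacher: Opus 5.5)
Your proposal is correct and follows the paper's proof essentially step for step. You use the same three-term triangle inequality: unitarity handles the input error, layer-wise telescoping with the Duhamel bound $\lVert e^{-i\beta R}-e^{-i\beta R'}\rVert\le|\beta|\,\lVert R-R'\rVert$ gives $B_j e_{j-1}$, the hypothesis gives $\varepsilon_j$, and the constant $1$ comes from the sharp identity $\lVert\Pi_u-\Pi_v\rVert=\sqrt{1-\lvert\langle u|v\rangle\rvert^2}\le d(u,v)$, exactly as in the paper's projector lemma. The paper handles phases as your fallback does, taking $e_j$ to be the phase-minimized distance and phase-aligning representatives at each stage. Your plain-norm version also works, because $\sqrt{1-\lvert\langle u|v\rangle\rvert^2}\le\lVert u-v\rVert$ holds for any representatives and the phases of the ideal targets can be chosen to match the hypothesis without changing the ideal mixers.
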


The factor $1+B_j$ reflects the two ways in which an imperfect frozen preparation affects stage $j$, through its input state and its mixer. The input-state contribution carries forward the preceding error, while each LP-mixer layer contributes at most its absolute angle times that error. The sum $B_j$ therefore controls the mixer contribution, and the products in Eq.~\eqref{eq:main_stability_bound} describe how each stage's error propagates through later stages. For a detailed statement and proof, see \suppref{app:stability}.

\subsection{Mechanisms of LP-QAOA}
\label{subsec:main_mechanism}

We examine how staged optimization and learned-projector mixing contribute to the performance of LP-QAOA.  Our analysis addresses the allocation of quantum resources across objective levels, the coupling of distant feasible configurations and the trainability of the variational circuits.

\paragraph{Resource reallocation.}
Resource reallocation is a central motivation for the staged design of LP-QAOA.  The learned-projector mixer incorporates previously acquired structural information into every refinement layer, allowing the final-stage search to build on the earlier optimization.  Implementing this mixer requires the frozen preparation and its inverse, together with a register-wide selective phase operation.  LP-QAOA thus trades additional mixer resources for a search guided by the learned structure, enabling stronger concentration on the optimum with fewer expensive final-objective applications.

\paragraph{Nonlocal learned-projector mixing.}
Learned-projector mixing provides direct, nonlocal coupling between configurations represented in the learned reference state.  The projector in Eq.~\eqref{eq:lp_mixer} couples any two configurations with nonzero reference amplitudes, with coupling magnitudes given by the products of their amplitude magnitudes.  By contrast, transverse-field mixer Hamiltonians couple configurations differing by one bit, while constraint-preserving XY mixers couple configurations related by an exchange of two qubit occupations \cite{wang2020xy}.  These local Hamiltonians connect distant configurations through sequences of intermediate configurations.  The learned projector supplies a direct coupling between the endpoints, irrespective of their Hamming distance.

We examine the effect of nonlocal mixing through the spectrum of $H(s)=(1-s)M+sC$, which interpolates from the mixer Hamiltonian $M$ to the cost Hamiltonian $C$.  Previous work connects this spectral description both to the depth of QAOA schedules approximating adiabatic evolution \cite{farhi2014quantum,binkowski2024elementary,jansen2007bounds} and to the nonadiabatic dynamics of optimized QAOA \cite{zhou2020quantum}.  We use this connection to compare how local and learned-projector mixers couple competing low-energy configurations during refinement.

We make this comparison in quantum optimization of Exact Cover 3 (EC3), an NP-complete problem in which each clause requires exactly one of three binary variables to equal one.  A solution is an assignment of the variables that satisfies all clauses simultaneously.  EC3 has served as a model for studying localization-induced small gaps in AQO, providing a concrete setting for comparing local and collective mixing \cite{altshuler2010anderson}.  In our hierarchical construction, a subset of disjoint clauses defines the initial feasible set, and refinement optimizes the remaining clauses within this set.  The following theorem gives the learned-projector minimum gap for this refinement.

\begin{theorem}[EC3 feasible-subspace gap]
\label{thm:main_ec3_gap}
Consider an EC3 instance on $N$ bits with polynomially many clauses and a unique satisfying assignment.  Let $F$ be the set of configurations satisfying $m$ disjoint clauses, with $L=|F|=2^N(3/8)^m$.  Define $\ket{F}=L^{-1/2}\sum_{x\in F}\ket{x}$, and let $C_F$ denote the EC3 cost Hamiltonian restricted to the feasible subspace.  On this subspace, the interpolation $H_F(s)=(1-s)(\Id-\ket{F}\!\bra{F})+sC_F$, with $0\le s\le1$, has minimum gap
\begin{equation}
    g_{\min}=q_N L^{-1/2}(1+o(1)),
    \label{eq:main_ec3_gap}
\end{equation}
where $q_N$ lies between an inverse polynomial and a constant.
\end{theorem}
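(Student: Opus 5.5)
The approach is to reduce $H_F(s)$ to a rank-one perturbation of a diagonal matrix on the $L$-dimensional feasible subspace and then locate the avoided crossing explicitly. Write $C_F=\sum_{x\in F}c(x)\ket{x}\!\bra{x}$, where $c(x)$ counts the violated clauses among the remaining (non-disjoint) clauses. Then $c(x^\star)=0$ for the unique satisfying assignment $x^\star\in F$, and $c(x)\ge1$ for $x\neq x^\star$. Since $H_F(s)=(1-s)\Id+sC_F-(1-s)\ket{F}\!\bra{F}$, the standard secular-equation argument for diagonal plus rank-one matrices gives the following. The eigenvalues $E$ of $sC_F-(1-s)\ket{F}\!\bra{F}$ that are not energy levels of $sC_F$ satisfy
\[
1=\frac{1-s}{L}\sum_{x\in F}\frac{1}{s\,c(x)-E}.
\]
Degenerate levels of $C_F$ with multiplicity $d_k$ contribute $d_k-1$ trivial eigenvalues, which sit exactly at $s\,k$ and lie at least $s$ above the lowest nontrivial root. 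So the gap is governed by the two lowest roots of this equation. I will group the sum by cost value: let $d_k=|\{x\in F:c(x)=k\}|$, so that $d_0=1$ and $\sum_k d_k=L$. Define $A(E)=L^{-1}\sum_{k\ge1}d_k/(sk-E)$. Since the instance has polynomially many clauses, $k$ ranges over polynomially many values.

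The key step is the two-level (Roland--Cerf / Grover-type) analysis near the crossing. Setting $E$ close to $0$, the secular equation becomes
\[
\frac{1}{1-s}=\frac{1}{L(-E)}+A(E).
\]
Away from $E\approx0$ the term $A(E)\approx A(0)=L^{-1}\sum_{k\ge1}d_k/(sk)=\frac{1}{s}\langle 1/c\rangle$, where $\langle 1/c\rangle$ is the average of $1/c$ over $F\setminus\{x^\star\}$, up to $O(1/L)$. This average lies between $1/\mathrm{poly}(N)$ and $1$. The crossing point $s^\ast$ therefore solves $s^\ast/(1-s^\ast)=\langle1/c\rangle$. Near $s^\ast$, the uniform state (energy $\approx s^\ast\langle c\rangle$-shifted) and $\ket{x^\star}$ become quasi-degenerate, and they are coupled through overlap $L^{-1/2}$. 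I will perform a perturbative expansion of the two smallest roots in $\delta=s-s^\ast$ and $E$, keeping $E=O(L^{-1/2})$, and using $A'(E)=L^{-1}\sum d_k/(sk-E)^2=O(1)$. This reduces the problem to a $2\times2$ effective Hamiltonian with off-diagonal element $\approx(1-s^\ast)\sqrt{\text{(weight)}}\,L^{-1/2}$. The resulting gap is
\[
g_{\min}=q_N L^{-1/2}(1+o(1)),
\]
with $q_N$ an explicit function of $\langle1/c\rangle$ and $\langle 1/c^2\rangle$; roughly $q_N=2(1-s^\ast)\big/\sqrt{1+\langle1/c^2\rangle/\langle1/c\rangle^{2}\cdots}$ after normalization. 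The bounds $1\le c\le\mathrm{poly}(N)$ then place $q_N$ between an inverse polynomial and a constant.

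It remains to verify that the minimum is attained at this crossing. For $s$ bounded away from $s^\ast$ by more than $L^{-1/2}\mathrm{poly}(N)$, I will show that the two lowest roots differ by at least $\Omega(|s-s^\ast|)$. The argument compares the ground root, which is pinned near $0$ or near the mixer branch, with the second root, which lies below $s\cdot1$ (the first excited cost level) but above the uniform branch. This holds because the secular function is monotone between consecutive poles. I would also treat $s$ near $0$ and near $1$ separately: there the gap is at least $1-s$ (mixer gap) or at least $s$ (cost gap $1$), both of which are constants.

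I expect the main obstacle to be controlling the error terms uniformly: showing that $A(E)$ is well approximated by its Taylor expansion at $E=0$ on the window $|E|\lesssim L^{-1/2}$. This requires $E\ll s$, which is valid since $s^\ast$ is at least inverse-polynomial while $L$ is exponentially large. The same control is needed to establish that the $o(1)$ correction is genuinely uniform in the instance. I would first try bounding the remainder by $\sup|A''|\cdot E^2=O(L^{-1}\mathrm{poly}(N))$, which is negligible.
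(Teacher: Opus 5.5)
Your proposal follows essentially the same route as the paper's proof. You write $H_F(s)$ as a diagonal operator plus a rank-one term, use the secular equation, split off the pole of $\ket{x^\star}$ and expand the remaining sum about zero. You then locate the crossing at $s^\ast/(1-s^\ast)=A_1$, with $A_k=L^{-1}\sum_{x\neq x^\star}c(x)^{-k}$ (your $\langle 1/c\rangle$ up to a factor $(L-1)/L$), and read $g_{\min}$ off the resulting quadratic. The plan is sound, but several specifics need correcting.

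\begin{itemize}
\item The prefactor is $q_N=2(1-s^\ast)A_1/\sqrt{A_2}=2A_1/[(1+A_1)\sqrt{A_2}]$, with no ``$1+$'' under the square root. Its bounds follow from $A_1\ge(L-1)/(LR)$, $A_2\le1$ and Cauchy--Schwarz, $A_1^2\le A_2$, where $R=\max_x c(x)$.
\item At the crossing your $A'(0)$ equals $A_2/(s^\ast)^2$. This is only polynomially bounded (up to order $R$), not $O(1)$, but that is harmless.
\item More importantly, the gap near $s=0$ is not at least $1-s$. To first order in $s$ it is approximately $(1-s)-s\bra{F}C_F\ket{F}$, and Weyl's inequality guarantees $g(s)\ge1/2$ only for $s\le[4(R+1)]^{-1}$. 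Since $s^\ast$ itself can be of order $1/R$, your ``near $0$'' region has only polynomially small width, not constant width.
\end{itemize}

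Your middle-region argument must therefore extend down to $s\sim1/R$. Set $\eta=s/(1-s)-A_1$ and $\lambda_\pm=E_\pm/s$. Monotonicity of the secular function between poles then gives $\lambda_+\ge\min\{1/2,\eta/(4A_2)\}$ for $\eta>0$, and $|\lambda_-|\ge|\eta|/A_2$ for $\eta<0$. Hence the gap is at least $s\min\{1/2,|\eta|/(4A_2)\}\ge s|s-s^\ast|/4$. This is $\Omega(s|s-s^\ast|)$ rather than $\Omega(|s-s^\ast|)$, but the polynomial slack in your threshold $L^{-1/2}\mathrm{poly}(N)$ absorbs the difference.

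The paper localizes the minimizer differently. The trial gap at $\eta=0$ is $O(L^{-1/2})$, and Weyl forces $s_{\min}>[4(R+1)]^{-1}$, so any global minimizer has $\lambda_+-\lambda_-=O(R/\sqrt{L})$. Because $\lambda_-<0<\lambda_+$, both roots then lie in the expansion window automatically, and a one-variable analysis of the leading gap in $\eta$ finishes the argument. That route avoids needing root bounds away from the crossing, whereas yours gives an explicit lower bound on the whole gap profile.
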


\par\smallskip\noindent
By contrast, Altshuler et al.\ study transverse-field AQO on random EC3 near the point where adding constraints makes satisfying all clauses unlikely.  Their localization analysis predicts minimum-gap scaling $\Delta_{\min}=2^{-\Theta(N\log N)}$ from high-order tunnelling between configurations separated by $\Theta(N)$ bit flips \cite{altshuler2010anderson}.  A block-XY mixer restricts the evolution to the feasible set $F$ but retains local motion on the feasible-configuration graph.  Extending this localization analysis to block-XY mixing predicts a similarly superexponential gap suppression, $g_{\min}^{\mathrm{XY}}\lesssim\exp[-cN\log N+O(N)]$, with $c>0$.  Learned-projector mixing replaces high-order tunnelling through intermediate configurations with direct coupling across the feasible set.  The detailed theorem, proof and local-mixer analysis are given in \suppref{app:gap}.

\paragraph{Trainability improvement.}
Stagewise optimization confines each classical search to the $2p_j$ current-stage angles while keeping earlier parameters fixed.  This keeps each classical search lower-dimensional than a joint optimization of the full nested circuit.

The learned-projector mixer also changes how the loss varies across circuit parameters.  The relevant quantity is the variance of the expected cost over random angles: exponential concentration makes differences between parameter choices increasingly difficult to resolve during optimization.  MaxCut provides a familiar setting for examining this effect.  Two recent results establish a sharp contrast between transverse-field and projector mixing.  In the random-circuit limit reached at sufficiently large depth, standard QAOA on typical random graphs has exponentially small loss variance \cite{mao2025qaoa}.  Grover-mixer QAOA instead retains an inverse-polynomial lower bound for the same problem \cite{tsvelikhovskiy2025provable}.  This contrast identifies projector mixing as a mechanism for avoiding exponential loss concentration.  LP-QAOA combines this mixer structure with a learned reference state and separate optimization of successive stages.  The underlying variance theorems and their Lie-algebraic interpretation are summarized in \suppref{app:trainability_theory}.

\subsection{LP-QAOA on multilevel BCST instances}
\label{subsec:main_bcst}

To test LP-QAOA on a combinatorial optimization problem with a strict hierarchy of constraints, we constructed a block-constrained spin tiling (BCST) problem.  This extends spin-tiling benchmarks, which assemble local spin choices into a global constraint problem \cite{hamze2018near}.  Each site is represented by a block of binary variables, with a local constraint fixing how many labels are selected from that block.  Two further constraints prohibit adjacent sites from selecting the same label and require each label to appear a prescribed total number of times.  A final weighted objective ranks the assignments satisfying all constraints.  Interpreting the labels as radio channels gives a frequency-assignment problem whose final objective combines assignment costs with penalties for third-order intermodulation (IM3) interference.  The complete Hamiltonians, circuit definitions and numerical protocol are given in \suppref{app:bcst_multilevel}. Standard QAOA produced negligible target probability even on a reduced 18-variable instance (\suppref{app:bcst_standard_qaoa}).

Stage 1 of LP-QAOA optimizes the adjacency constraint using a block-XY mixer that preserves the block-cardinality constraint.  Its optimized circuit is frozen and reused in the learned-projector mixer of stage 2, which minimizes the expected global-occupation penalty.  The resulting preparation is then frozen and used to construct the LP mixer for optimizing the final objective.  We also test two-stage LP-QAOA variants that combine global-occupation and final-objective optimization in one learned-projector stage following the adjacency stage, using either separate phase angles or one shared phase angle.  The block-XY QAOA baselines preserve the block-cardinality sector but optimize the remaining constraints and objective directly, using either a combined phase separator or separate phase separators.

\begin{figure}[!ht]
\centering
\includegraphics[width=0.86\linewidth]{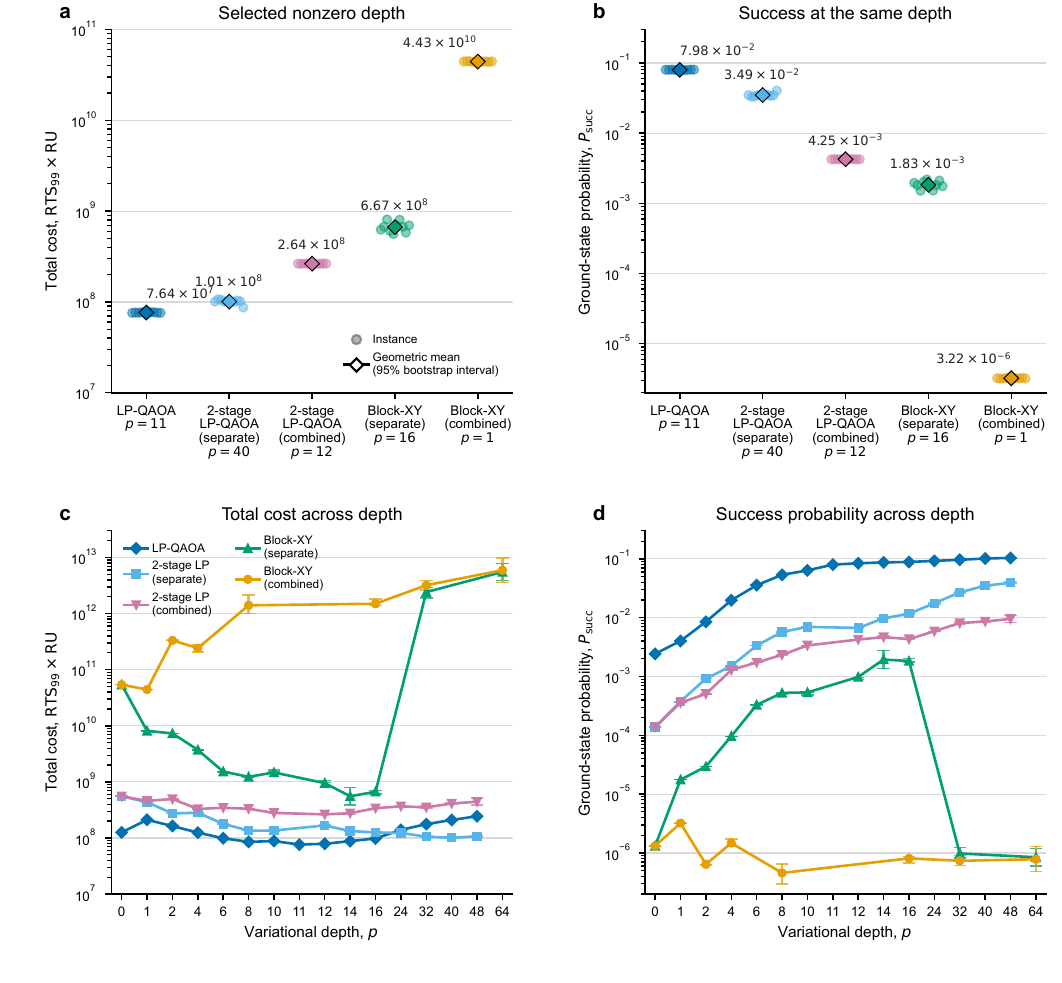}
\caption{Depth-dependent algorithm performance. \textbf{a,b}, Total cost and unique-optimum probability at selected depths. Circles show ten realizations; diamonds show geometric means with 95\% bootstrap intervals. \textbf{c,d}, Depth dependence across the same ten realizations; lines show geometric means and bars show interquartile ranges.}
\label{fig:bcst_main_comparison}
\end{figure}

\FloatBarrier

We tested ten realizations with a common 30-variable constraint structure and independently sampled random coefficients for the final objective.  Each realization had a unique optimum among assignments satisfying all constraints.  The frozen structural preparations were therefore shared across the ten realizations.  Every nonzero-depth circuit for final-objective optimization was optimized independently with a matched evaluation budget and four parameter initializations.  At each method's selected nonzero depth, all three LP-QAOA variants achieved lower total cost and higher unique-optimum probability than either block-XY baseline on all ten realizations.  Compared with separate-phase block-XY, the lower-cost baseline, LP-QAOA and its two-stage variants with separate and combined phases reduced geometric-mean total cost by factors of 8.73, 6.61 and 2.53, respectively.  Three-stage LP-QAOA also outperformed both two-stage variants on both metrics for every realization (Fig.~\ref{fig:bcst_main_comparison}a,b).  LP-QAOA's unique-optimum probability continued to rise at larger $p$, but total cost reached a broad minimum at $p=8$--16.  At greater depths, the increased logical cost per circuit outweighed the improvement in success probability (Fig.~\ref{fig:bcst_main_comparison}c,d).

Ablation tests on the same ten realizations examined the contributions of the intermediate preparation and its use in the learned-projector mixer.  We omitted the second structural stage, replaced learned-projector refinement with a warm-start local mixer, or constructed the projector from the initial fixed-cardinality state.  LP-QAOA achieved lower total cost and higher unique-optimum probability than each of these alternatives on all ten realizations (Fig.~\ref{fig:bcst_component_tests}).  These comparisons link the performance advantage to the optimized intermediate preparation and its later use as the projector reference.

\begin{figure}[!t]
\centering
\includegraphics[width=0.98\linewidth]{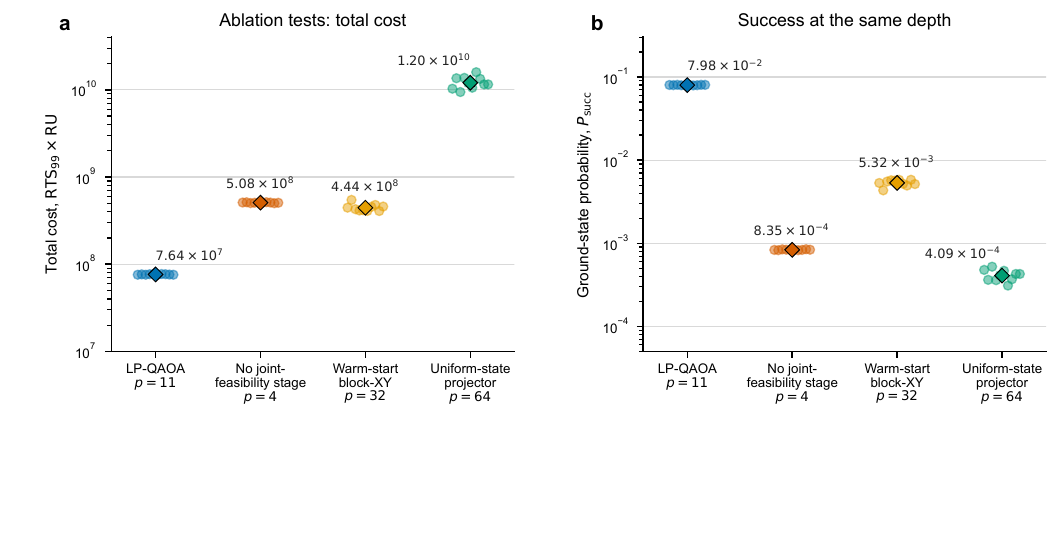}
\caption{Ablation tests of LP-QAOA. \textbf{a}, Total cost at each construction's selected nonzero depth. \textbf{b}, Unique-optimum probability at the same depths. Circles show individual final-objective realizations; diamonds show geometric means, with 95\% bootstrap intervals over ten realizations.}
\label{fig:bcst_component_tests}
\end{figure}

\FloatBarrier

\paragraph{Finite-depth trainability.}
We evaluated gradients at randomly sampled circuit angles on a separate set of six $N=30$ BCST instances to assess finite-depth trainability.  At depths 26, 29 and 32, LP-QAOA had normalized gradient magnitudes approximately 70--90 times larger than three block-XY QAOA baselines.  This held for both the training loss and final objective, after normalization for parameter count and Hamiltonian energy scale.  Its loss-decreasing direction was more closely aligned with increasing unique-optimum probability in seven of the nine comparisons across three baselines and three depths (Fig.~\ref{fig:gradient_trainability}).  These results show stronger sensitivity to parameter changes and better alignment between reducing the loss and increasing success probability.  For definitions and full results, see \suppref{app:gradient_diagnostics}.

\begin{figure}[!htbp]
\centering
\includegraphics[width=0.98\linewidth]{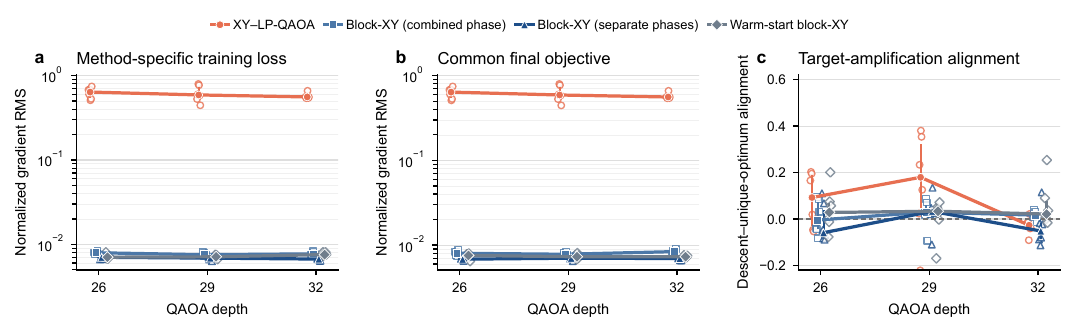}
\caption{BCST gradient diagnostics on six $N=30$ instances.  \textbf{a,b}, Normalized gradient magnitudes for the training loss and final objective.  \textbf{c}, Alignment between loss descent and increasing unique-optimum probability.  Open symbols show instance medians; filled symbols and bars show the median and interquartile range across instances.}
\label{fig:gradient_trainability}
\end{figure}

\subsection{Soft coarse-to-fine optimization on SBM instances}
\label{subsec:main_sbm}

As a complementary test, we asked whether LP-QAOA could use a soft hierarchy that retains coarse statistical structure rather than imposing exact constraints.  We studied nine $N=24$ two-community stochastic block model (SBM) graphs with $(p_{\mathrm{in}},p_{\mathrm{out}})=(0.80,0.36)$ and constructed a coarse Hamiltonian that favours agreement within four-vertex groups and uses averaged couplings between groups \cite{holland1983stochastic}.  On each of nine graphs, LP-QAOA assigned more probability to the exact ground states than the tested LP stage-1 objective QAOA, warm-start QAOA and standard QAOA circuits.  The median probability was $0.777$, compared with $0.293$ for LP stage-1 objective QAOA and $0.434$ for the best tested standard QAOA depth (Fig.~\ref{fig:sbm_adam_comparison}).  The higher ground-state probability outweighed the additional circuit cost, giving LP-QAOA a lower median total cost than all tested alternatives.  The complete construction and comparisons are given in \suppref{app:soft_sbm_csp}.

\begin{figure}[!htbp]
\centering
\includegraphics[width=0.98\linewidth]{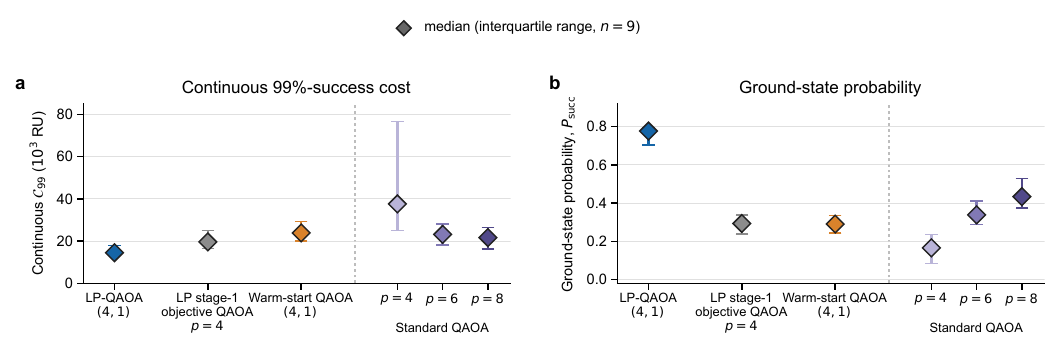}
\caption{\textbf{Learned-projector refinement on SBM instances.} \textbf{a}, Median continuous cost for 99\% success across nine graphs. \textbf{b}, Median ground-state probability across the same graphs. Bars show interquartile ranges.}
\label{fig:sbm_adam_comparison}
\end{figure}

\FloatBarrier

\section{Discussion}
\label{sec:discussion}

LP-QAOA uses intermediate optimization results to change the dynamics of subsequent stages. By learning the reference state of a projector mixer, it makes structural information part of the operators that guide the later search. This connects LP-QAOA to the broader development of constraint-preserving mixers, warm starts and Grover-style projector mixing \cite{hadfield2019quantum,tate2023warm,bartschi2020grover}. Its distinguishing feature is the recursive use of variationally optimized preparations to construct successive mixers. A detailed comparison with related approaches is given in \suppref{app:related}.

The theoretical and numerical results connect this construction to resource reallocation, collective coupling and improved trainability. In BCST, LP-QAOA achieved higher success probability with fewer final-objective applications, and the reduction in repeated circuit executions outweighed the additional mixer cost. These results demonstrate the benefit of investing resources in learned mixing to reduce the effort required for the final search. The collective coupling examined in the EC3 analysis offers a physical interpretation of the stronger target concentration in BCST. The state learned before final-objective optimization has support on many feasible configurations, allowing the LP mixer to connect configurations that local mixers reach only through successive moves. The gradient diagnostics also connect the choice of mixer to classical optimization: larger gradients, more closely aligned with increasing success probability, favour parameter optimization and may help LP-QAOA avoid barren plateaus. This finite-depth evidence complements the asymptotic variance contrast between transverse-field and uniform-projector mixing. The supporting SBM experiment extends the use of learned structural information to problems with soft hierarchical structure.

Realizing these benefits on hardware involves a trade-off between the complexity of each mixing operation and the resources needed to obtain the optimum. Each LP layer uncomputes and replays an earlier preparation and applies a selective phase across the full register. The resulting circuit depth is demanding for noisy intermediate-scale quantum (NISQ) devices. Experimental advances in logical quantum processing and below-threshold error correction are opening a path towards early fault-tolerant computation \cite{bluvstein2026fault}. This development motivates algorithms that use longer reliable circuits effectively while keeping logical-resource requirements manageable. LP-QAOA offers a concrete strategy for this regime by exchanging more elaborate mixing operations for fewer expensive objective evaluations and repeated preparations. A complementary route towards shallower implementations is to construct projector mixers from states learned on overlapping subsets of variables. Varying the size of these subsets would allow the balance between nonlocal coupling, preparation depth and target concentration to be explored for different hardware capabilities.

In summary, LP-QAOA turns a hierarchy of objectives into a sequence of variationally learned mixing operations. The stability and spectral analyses provide a theoretical basis for recursive state reuse and collective search, while the numerical experiments demonstrate higher success probabilities, lower logical costs and improved finite-depth trainability. Together, these results show how information acquired during optimization can be used to improve both the subsequent quantum search and its classical parameter optimization.

\section{Methods}
\paragraph{Stability analysis.}
We use Duhamel's formula to bound changes in LP-mixer evolutions caused by errors in their reference states. Combining these bounds across circuit layers and propagating them through successive stages yields the stability theorem. The detailed statement and proof are given in \suppref{app:stability}.

\paragraph{Spectral-gap analysis.}
For the LP mixer, we reduce the EC3 interpolation within the feasible subspace to a diagonal operator with a rank-one perturbation. We obtain its two lowest eigenvalues from the secular equation and expand near their avoided crossing to determine the minimum gap. For block-XY mixing, we adapt the localization argument of Altshuler et al.\ to the graph of allowed configuration changes \cite{altshuler2010anderson}. Graph distance sets the minimum perturbative order of tunnelling, and the localization path-sum estimate bounds the coupling amplitude. The adiabatic theorem and product-formula error bounds connect the spectrum to the depth of QAOA circuits approximating the evolution \cite{jansen2007bounds,farhi2014quantum}. Full derivations are given in \suppref{app:gap}.

\paragraph{Numerical implementation and evaluation.}
All numerical experiments used state-vector simulation. BCST circuits preserving block cardinality were simulated in the complete corresponding subspace. The standard-QAOA diagnostic and SBM circuits were simulated in the full computational basis. The main BCST and SBM comparisons used Adam for angle optimization; the supplementary BCST robustness analysis used simultaneous perturbation stochastic approximation (SPSA). Performance was evaluated by target probability and the estimated logical cost of obtaining a target with 99\% probability. BCST used \(\mathrm{RTS}_{99}\times\mathrm{RU}\), whereas SBM used its continuous relaxation in Eq.~\eqref{eq:sbm_continuous_cost}. Resource accounting includes the complete circuit and terminal verification, while excluding classical training and hardware-dependent overhead. Detailed methods and experimental settings are provided in the Supplementary Information and Source Data.

\paragraph{Use of artificial intelligence.}
The authors conceived LP-QAOA and established its theoretical results. ChatGPT and Codex assisted with implementing and checking simulation code, preparing figures and editing the manuscript. The authors reviewed the code, results, figures and text and take responsibility for the work.

\section{Data and code availability}

The data and code supporting this study are available in the public \href{https://github.com/kkzky/Nested-QAOA}{project repository}. The repository also provides the software environment required to reproduce the reported analyses.

\section*{Acknowledgements}
J.-P.L. acknowledges support from the Quantum Science and Technology National Science and Technology Major Project under Grant No.~2024ZD0300500, the Excellent Young Scientists Fund Program, start-up funding from Tsinghua University, and the Beijing Institute of Mathematical Sciences and Applications.

We also thank Ningfeng Wang and Zhengwei Liu for insightful and valuable discussions.

\printbibliography[title={References}]
\fi

\ifsupplementcontent
\ifdefined\supplementaryonly
\maketitle
\makeatletter
\renewcommand*\l@section{\@dottedtocline{1}{0em}{3.5em}}
\renewcommand*\l@subsection{\@dottedtocline{2}{1.5em}{4.5em}}
\makeatother
\tableofcontents
\clearpage
\fi
\appendix
\ifdefined\supplementaryonly
\renewcommand{\thesection}{S\arabic{section}}
\counterwithout{equation}{section}
\setcounter{equation}{0}
\renewcommand{\thedefinition}{S\arabic{section}.\arabic{definition}}
\renewcommand{\theassumption}{S\arabic{section}.\arabic{assumption}}
\renewcommand{\thetheorem}{S\arabic{section}.\arabic{theorem}}
\renewcommand{\thelemma}{S\arabic{section}.\arabic{lemma}}
\renewcommand{\theproposition}{\thetheorem}
\renewcommand{\theclaim}{\thetheorem}
\renewcommand{\theremark}{S\arabic{section}.\arabic{remark}}
\renewcommand{\thetable}{\arabic{table}}
\renewcommand{\thefigure}{\arabic{figure}}
\renewcommand{\theequation}{S\arabic{equation}}
\renewcommand{\tablename}{Supplementary Table}
\renewcommand{\figurename}{Supplementary Fig.}
\fi

\section{Supplementary discussion: relation to existing QAOA variants}
\label{app:related}

\paragraph{Standard QAOA and alternating operators.}
The standard quantum approximate optimization algorithm (QAOA) uses one diagonal objective Hamiltonian and a transverse-field mixer, usually optimizing the expected objective value \cite{farhi2014quantum}.  The quantum alternating operator ansatz generalizes the mixer and phase separator choices \cite{hadfield2019quantum}.  The learned-projector quantum alternating operator ansatz (LP-QAOA) adds an inter-stage dependency to this ansatz: later operators may depend on projectors learned from earlier optimized states.

\paragraph{Constraint-specific mixers.}
Constraint-preserving mixers restrict QAOA dynamics to a subspace fixed by known constraints \cite{hadfield2017quantum,wang2020xy,fuchs2022constraint}. XY mixers do so by preserving Hamming weight, making them natural for one-hot and fixed-cardinality encodings. Logical-X QAOA (LX-QAOA) generalizes this construction: stabilizer code spaces and logical \(X\) rotations provide mixers for a broader class of prescribed subspaces \cite{fuchs2024lx}. These mixers are tailored to constraints known before optimization, and their applicability depends on an efficient encoding and connectivity within the prescribed feasible subspace. In block-constrained spin tiling (BCST), block-XY mixing preserves the prescribed number of selected labels in each block (two in the present benchmark), whereas the stage-1 conflict Hamiltonian distinguishes assignments that violate cross-block conflicts. XY-QAOA and LX-QAOA therefore derive their mixers from structure known before optimization; LP-QAOA instead constructs a later mixer from the projector onto a state learned in an earlier variational stage.

\paragraph{Grover-mixer QAOA.}
Grover-mixer QAOA assumes a known state-preparation unitary $U_S$ for a prescribed reference state $\ket{F}$ and uses the projector mixer $U_S e^{-i\beta\ket{0^N}\!\bra{0^N}}U_S^\dagger$ \cite{bartschi2020grover}.  The LP mixer has the same rank-one-projector circuit form, but its reference state is produced by earlier variational stages and every layer therefore replays and uncomputes the frozen optimized preparation.  Grover-mixer QAOA does not replay an earlier optimized circuit.  When an efficient exact preparation of the reference state is known, Grover-mixer QAOA requires no preceding variational stage.

\paragraph{Warm-started QAOA.}
Warm-started QAOA prepares a nonuniform or problem-informed initial state before applying alternating phase and mixer operators \cite{tate2023warm}.  LP-QAOA additionally inserts the projector onto the earlier state into the later mixer.  In both the BCST and stochastic block model (SBM) comparisons, the warm-start baseline reuses the earlier state solely as an input, whereas LP-QAOA also uses it to define the later mixer.

\paragraph{Alternative losses based on conditional value at risk.}
Conditional value at risk (CVaR) optimization changes the classical loss to emphasize low-energy samples \cite{barkoutsos2020improving}.  It is compatible with different variants of QAOA, including LP-QAOA, while leaving the learned inter-stage projector unchanged.

\paragraph{Recursive, iterative and multilevel optimization.}
Recursive QAOA and quantum-informed recursive optimization extract correlations from a quantum state and use them in classical reduction rules that shrink the problem instance \cite{bravyi2020obstacles,finzgar2024qiro}.  The iterative many-body-localization protocol measures after each cycle and updates the reference Hamiltonian for the next cycle \cite{wang2022mbl}.  Multilevel QAOA applies QAOA within a hierarchy of classically coarsened graph problems and transfers information between resolutions \cite{bach2024mlqaoa}.  Unlike these measurement-based or classically mediated updates, LP-QAOA coherently replays the optimized stage-preparation circuit within a later mixer on the same variable register. This gives the refinement circuit coherent access to the learned amplitudes and phases, while adding the costs of preparation replay and uncomputation.

\paragraph{Constraint-oriented QAOA.}
Recent constraint-oriented QAOA studies include quantum tree-generator state preparation, partitioned-constraint QAOA, iterative warm-start XY-mixer QAOA, locally acting Grover-mixer QAOA and initial-state--mixer alignment \cite{christiansen2025quantum,wilkie2026partitioned,bucher2026constrained,choi2026locally,he2023alignment}. They share the same interest in exploiting the constraint structure of various combinatorial problems to improve the efficiency of QAOA\@. LP-QAOA instead links multiple stages through a learned-projector mixer and stage-specific phase separators.

\paragraph{Discrete adiabatic evolution.}
Discrete adiabatic theorems provide a complementary interpretation of alternating finite-step unitaries \cite{costa2022optimal}. Digital error cancellation supports accurate circuit discretization under effective-gap conditions \cite{lu2026digital}, and large-step product walks can follow eigenphase paths distinct from continuous Hamiltonian eigenpaths \cite{an2025large}. These results motivate the spectral diagnostic in \suppref{app:gap}.

\section{Algorithmic variants}
\label{app:variants}

This section presents the general LP-QAOA protocol and the two benchmarked variants.

\begin{center}
\fbox{%
\begin{minipage}{0.94\textwidth}
\textbf{Algorithm A0: Frozen-stage LP-QAOA.}

\textbf{Input:} An easy-to-prepare initial state $\ket{\Phi_0}$; stage depths $p_1,\ldots,p_m$ with $m\ge2$; a prescribed first-stage mixer $M_1$; and, for each stage $j$, a diagonal phase Hamiltonian $C_j$ and a classical loss functional $\mathcal L_j(\ket{\psi})$.  For every $j\ge2$, the mixer is $M_j=R_{\Phi_{j-1}}$.

\begin{enumerate}[leftmargin=2em]
    \item Initialize the register in $\ket{\Phi_0}$.
    \item For $j=1,\ldots,m$:
    \begin{enumerate}[leftmargin=2em]
        \item Set $M_j$: use the prescribed mixer $M_1$ for $j=1$; for $j\ge2$, use the LP mixer $R_{\Phi_{j-1}}$ constructed from the optimized previous state.
        \item Build the alternating stage circuit $U_j(\theta_j)$ in Eq.~\eqref{eq:stage_ansatz}.
        \item Optimize the stage-local loss in Eq.~\eqref{eq:stage_optimization}.  The loss may be an expectation value, a CVaR loss, a feasibility loss or another stage-specific objective.
        \item Freeze the optimized parameters and preparation circuit for stage $j$. Re-prepare $\ket{\Phi_j}$ when constructing or executing the next stage.
    \end{enumerate}
    \item Output $\ket{\Phi_m}$ and sample in the computational basis.
\end{enumerate}
\end{minipage}}
\end{center}

Equation~\eqref{eq:stage_ansatz} uses one phase Hamiltonian per stage.  A more general independently tunable form replaces
\[
    e^{-i\gamma_{j,\ell}C_j}
    \quad\longrightarrow\quad
    \prod_{a=1}^{q_j}e^{-i\gamma_{j,\ell,a}C_{j,a}},
\]
with stage parameters $\theta_j=(\beta_{j,\ell},\gamma_{j,\ell,a})_{\ell,a}$.

\begin{center}
\fbox{%
\begin{minipage}{0.94\textwidth}
\textbf{Algorithm A1: LP-QAOA for multilevel BCST.}
\begin{enumerate}[leftmargin=2em]
    \item Prepare the product of uniform blockwise fixed-cardinality states, with two excitations per block for the present BCST benchmark.
    \item Stage 1: use the block-XY mixer, comprising all pairwise XY couplings within each block, and the conflict Hamiltonian:
    \[
        U_1^{\mathrm{XY}}=
        \prod_{q=p_1}^{1}e^{-i\beta_{1,q}M_{\mathrm{XY}}}e^{-i\gamma_{1,q}C_{\mathrm{conf}}}.
    \]
    Optimize the conflict-feasibility loss and fix the resulting preparation circuit, which defines $\ket{\Phi_1}=U_1^{\mathrm{XY}}\ket{\Phi_0}$.
    \item Stage 2: let $Q$ be the global-occupation penalty defined in \suppref{app:bcst_multilevel}.  Define $R_{\Phi_1}=\Id-\ket{\Phi_1}\!\bra{\Phi_1}$ and minimize the expectation value of $Q$ using
    \[
        U_2=
        \prod_{q=p_2}^{1}e^{-i\beta_{2,q}R_{\Phi_1}}e^{-i\gamma_{2,q}Q}.
    \]
    Fix this circuit and define $\ket{\Phi_2}=U_2\ket{\Phi_1}$.
    \item Stage 3: define $R_{\Phi_2}=\Id-\ket{\Phi_2}\!\bra{\Phi_2}$ and optimize
    \[
        U_3=
        \prod_{q=p_3}^{1}e^{-i\beta_{3,q}R_{\Phi_2}}e^{-i\gamma_{3,q}C_{\mathrm{obj}}}.
    \]
    \item Sample from $\ket{\Phi_3}=U_3\ket{\Phi_2}$ and evaluate feasibility and optimality.
\end{enumerate}
\end{minipage}}
\end{center}

The two-stage variants used in the multilevel BCST study combine global-occupation and objective optimization in one stage: they start from $\ket{\Phi_1}$, retain the mixer $R_{\Phi_1}$, and assign either independent phase angles to $Q$ and $C_{\mathrm{obj}}$ or a shared angle to their sum in each layer.  The additional two-level BCST study in \suppref{app:supplementary_numerics} instead applies objective refinement directly after its conflict stage.

\begin{center}
\fbox{%
\begin{minipage}{0.94\textwidth}
\textbf{Algorithm A2: LP-QAOA for a soft coarse-to-fine problem.}
\begin{enumerate}[leftmargin=2em]
    \item Let $C_{\mathrm{coarse}}$ be a coarse proxy and $C_{\mathrm{fine}}$ be the final objective.
    \item Stage 1: optimize a coarse QAOA or alternating-operator circuit using a coarse loss, producing $\ket{\Phi_1}$.
    \item Stage 2: use $R_{\Phi_1}$ as the mixer and the final Hamiltonian as the phase separator.
    \item Optimize the final-stage parameters using the expectation value of the final Hamiltonian.
    \item Output and sample the final state.
\end{enumerate}
\end{minipage}}
\end{center}

\section{Stability theorem for frozen-stage projector reuse}
\label{app:stability}

This section gives the formal statement and proof of Theorem~\ref{thm:main_stability}.

For normalized states, define the phase-insensitive distance
\begin{equation}
    d(\psi,\phi)=
    \min_{\vartheta\in\R}
    \norm{\ket{\psi}-e^{i\vartheta}\ket{\phi}}.
    \label{eq:phase_distance}
\end{equation}
For a strict refinement sequence, let $\calH_j^\star$ be the target subspace at stage $j$, usually the span of $\{\ket{z}:z\in\calG_j\}$.  Choose ideal states
\begin{equation}
    \ket{\Phi_j^\star}\in\calH_j^\star,
    \qquad j=0,1,\ldots,m,
\end{equation}
with $\ket{\Phi_0^\star}=\ket{\Phi_0}$.  For every LP-mixer stage, define the ideal and actual complementary mixer projectors
\begin{equation}
    R_j^\star=\Id-\ket{\Phi_{j-1}^\star}\!\bra{\Phi_{j-1}^\star},
    \qquad
    R_j=\Id-\ket{\Phi_{j-1}}\!\bra{\Phi_{j-1}}.
\end{equation}
The stage error is $e_j=d(\Phi_j,\Phi_j^\star)$.

\begin{lemma}
\label{lem:projector_lipschitz}
For normalized $\ket{u}$ and $\ket{v}$,
\begin{equation}
    \norm{\ket{u}\!\bra{u}-\ket{v}\!\bra{v}}
    =\sqrt{1-|\braket{u}{v}|^2}
    \le d(u,v).
    \label{eq:projector_exact_distance}
\end{equation}
Consequently, $\norm{R_j-R_j^\star}\le e_{j-1}$.
\end{lemma}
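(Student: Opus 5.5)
We need the operator norm of the difference of two rank-one projectors, and the plan is to reduce everything to a two-dimensional computation. Write $P=\ket{u}\!\bra{u}$ and $Q=\ket{v}\!\bra{v}$. The difference $P-Q$ is self-adjoint and vanishes on the orthogonal complement of $\mathrm{span}\{\ket{u},\ket{v}\}$. If $\ket{u}$ and $\ket{v}$ are parallel, both sides of the equality are zero and there is nothing to show. Otherwise, $P-Q$ acts on a two-dimensional space, and its operator norm is the largest absolute value of its eigenvalues.

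On this two-dimensional space we read off the spectrum from trace and determinant. We have $\operatorname{tr}(P-Q)=1-1=0$, so the eigenvalues are $\pm\lambda$. Next we compute $\operatorname{tr}\bigl((P-Q)^2\bigr)=\operatorname{tr}P+\operatorname{tr}Q-2\operatorname{tr}(PQ)=2-2|\braket{u}{v}|^2$. Hence $2\lambda^2=2(1-|\braket{u}{v}|^2)$, which gives $\norm{P-Q}=\sqrt{1-|\braket{u}{v}|^2}$. This computation avoids choosing an explicit basis. One could alternatively write $\ket{v}=a\ket{u}+b\ket{w}$ with $\ket{w}\perp\ket{u}$ and diagonalize the resulting $2\times2$ matrix.

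For the inequality, we fix any phase $\vartheta$ and expand
\[
\norm{\ket{u}-e^{i\vartheta}\ket{v}}^2=2-2\,\mathrm{Re}\bigl(e^{i\vartheta}\braket{u}{v}\bigr)\ge 2-2|\braket{u}{v}|.
\]
Minimizing over $\vartheta$ attains this bound, so $d(u,v)^2=2(1-c)$ with $c=|\braket{u}{v}|\in[0,1]$. It then remains to check $1-c^2\le 2(1-c)$. This is equivalent to $(1-c)(1+c)\le 2(1-c)$, which holds because $1+c\le2$.

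Finally, the consequence follows from $R_j-R_j^\star=\ket{\Phi_{j-1}^\star}\!\bra{\Phi_{j-1}^\star}-\ket{\Phi_{j-1}}\!\bra{\Phi_{j-1}}$, since the identity terms cancel. The norm of this difference is bounded by $d(\Phi_{j-1},\Phi_{j-1}^\star)=e_{j-1}$, using the symmetry of $d$. This step is immediate. The only point needing care is the spectral identity in the first step, specifically justifying the restriction to the two-dimensional span. The trace argument handles this cleanly, so I expect no real obstacle.
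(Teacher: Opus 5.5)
Your proof is correct and follows the paper's argument. Both restrict to the two-dimensional span of $\ket{u}$ and $\ket{v}$ (you obtain the eigenvalues $\pm\sqrt{1-|\braket{u}{v}|^2}$ from $\operatorname{tr}(P-Q)=0$ and $\operatorname{tr}\bigl((P-Q)^2\bigr)=2-2|\braket{u}{v}|^2$, whereas the paper diagonalizes directly), then phase-align to get $d(u,v)^2=2(1-c)$ and use $1-c^2\le 2(1-c)$, with the consequence for $R_j-R_j^\star$ following from cancellation of the identity terms as you describe.
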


\begin{proof}
Both projectors vanish outside $\operatorname{span}\{\ket{u},\ket{v}\}$.  Direct diagonalization in this two-dimensional span gives the equality in Eq.~\eqref{eq:projector_exact_distance}.  If $c=|\braket{u}{v}|$, then $d(u,v)^2=2(1-c)$ after phase alignment, while $1-c^2=(1-c)(1+c)\le2(1-c)$.  Taking square roots gives the inequality.
\end{proof}

\begin{lemma}[Learned-projector-mixer error]
\label{lem:stage_lipschitz}
Let
\begin{equation}
    U_R(\beta,\gamma)=\prod_{\ell=p}^{1}e^{-i\beta_\ell R}e^{-i\gamma_\ell C},
\end{equation}
where products are ordered with larger $\ell$ to the left and $R,R'$ are Hermitian.  Then
\begin{equation}
    \norm{U_R(\beta,\gamma)-U_{R'}(\beta,\gamma)}
    \le
    \left(\sum_{\ell=1}^{p}|\beta_\ell|\right)\norm{R-R'}.
\end{equation}
\end{lemma}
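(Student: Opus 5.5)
The plan is to prove the bound one layer at a time with a telescoping (hybrid) argument, and then to bound each single-exponential difference using Duhamel's formula, as the Methods section indicates. Write $V_\ell=e^{-i\beta_\ell R}e^{-i\gamma_\ell C}$ and $V'_\ell=e^{-i\beta_\ell R'}e^{-i\gamma_\ell C}$, so that $U_R=V_p\cdots V_1$ and $U_{R'}=V'_p\cdots V'_1$. The standard telescoping identity gives
\begin{equation*}
    U_R-U_{R'}=\sum_{\ell=1}^{p} V_p\cdots V_{\ell+1}\,(V_\ell-V'_\ell)\,V'_{\ell-1}\cdots V'_1 .
\end{equation*}
Both $R$ and $R'$ are Hermitian and $C$ is Hermitian (it is diagonal in the paper's setting; Hermiticity is all that is needed). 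Hence every $V_k$ and $V'_k$ is unitary. By unitary invariance of the operator norm and the triangle inequality, $\norm{U_R-U_{R'}}\le\sum_\ell\norm{V_\ell-V'_\ell}$.

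Next, the phase-separator factor is common to $V_\ell$ and $V'_\ell$, and it is unitary. Therefore $\norm{V_\ell-V'_\ell}=\norm{e^{-i\beta_\ell R}-e^{-i\beta_\ell R'}}$. For this difference I would apply Duhamel's formula:
\begin{equation*}
    e^{-i\beta R}-e^{-i\beta R'}=-i\int_0^{1}\beta\, e^{-is\beta R}(R-R')\,e^{-i(1-s)\beta R'}\,ds .
\end{equation*}
To verify it, differentiate $s\mapsto e^{-is\beta R}e^{-i(1-s)\beta R'}$ and integrate from $0$ to $1$. The outer exponentials are unitary, so the integrand has norm at most $|\beta|\,\norm{R-R'}$. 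This gives $\norm{e^{-i\beta R}-e^{-i\beta R'}}\le|\beta|\,\norm{R-R'}$. The sign of $\beta$ is irrelevant, which is why the absolute value appears.

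Summing over $\ell$ yields $\norm{U_R-U_{R'}}\le\bigl(\sum_\ell|\beta_\ell|\bigr)\norm{R-R'}$, which is the claim. No step is genuinely hard. The only point requiring care is the Duhamel identity together with the layer ordering in the telescoping sum: the product is ordered with larger $\ell$ on the left, so the hybrid terms must mix the $V$'s on the left with the $V'$'s on the right consistently. Beyond that, the proof only needs unitarity of all the factors, which follows from Hermiticity of $R$, $R'$ and $C$.
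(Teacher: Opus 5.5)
Your proposal is correct and follows the paper's own proof. Both use Duhamel's formula to get $\norm{e^{-i\beta R}-e^{-i\beta R'}}\le|\beta|\norm{R-R'}$ and then a telescoping sum over the $p$ layers with unitary invariance of the operator norm. You also make explicit that $C$ must be Hermitian, so that the shared phase factors are unitary; the paper leaves this implicit.
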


\begin{proof}
Duhamel's formula gives
\begin{equation}
    \norm{e^{-i\beta R}-e^{-i\beta R'}}
    \le |\beta|\norm{R-R'}.
\end{equation}
A telescoping expansion over the $p$ mixer factors, together with unitary invariance of the operator norm, gives the stated bound.
\end{proof}

\begin{assumption}[Strict hierarchical objectives]
\label{ass:strict_reachability}
The stage Hamiltonians satisfy Eq.~\eqref{eq:strict_refinement}, and each ideal state $\ket{\Phi_j^\star}$ lies in its corresponding target subspace $\calH_j^\star$.

The strict hierarchy supplies the nested target subspaces used to define the ideal preceding-stage states; the norm bound then depends on the distances to these ideal states and on the mixer-angle budgets.
\end{assumption}

\begin{theorem}[Conditional stability of strict LP-QAOA]
\label{thm:error_propagation}
Under Assumption~\ref{ass:strict_reachability}, let $\theta_j$ denote the parameters returned by the optimizer at stage $j$.  For each stage $j$, suppose these parameters prepare a state within \(\varepsilon_j\) of the ideal target:
\begin{equation}
    d\!\left(
        U_j(R_j^\star;\theta_j)\ket{\Phi_{j-1}^\star},
        \ket{\Phi_j^\star}
    \right)\le\varepsilon_j.
    \label{eq:returned_angle_accuracy}
\end{equation}
For a fixed-mixer stage, the generator argument in Eq.~\eqref{eq:returned_angle_accuracy} is omitted.  For every LP-mixer stage, define
\begin{equation}
    B_j=\sum_{\ell=1}^{p_j}|\beta_{j,\ell}|
\end{equation}
and set $B_j=0$ for every fixed-mixer stage.  The stage-by-stage LP-QAOA output satisfies
\begin{equation}
    e_j\le (1+B_j)e_{j-1}+\varepsilon_j,
    \qquad e_0=0.
    \label{eq:error_recurrence_product}
\end{equation}
Consequently,
\begin{equation}
    e_m\le \sum_{j=1}^m \varepsilon_j
    \prod_{k=j+1}^m(1+B_k).
    \label{eq:final_error_product}
\end{equation}
\end{theorem}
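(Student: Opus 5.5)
The plan is to prove the one-step recurrence \eqref{eq:error_recurrence_product} by a triangle inequality, and then unroll it by induction to get \eqref{eq:final_error_product}.

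Fix a stage $j$ and write $U_j(R;\theta_j)$ for the stage circuit with mixer generator $R$ and the returned angles $\theta_j$. The actual output is $\ket{\Phi_j}=U_j(R_j;\theta_j)\ket{\Phi_{j-1}}$. I will insert two intermediate states: $U_j(R_j^\star;\theta_j)\ket{\Phi_{j-1}}$, which swaps the mixer, and $U_j(R_j^\star;\theta_j)\ket{\Phi_{j-1}^\star}$, which swaps the input. The phase-insensitive distance $d$ in \eqref{eq:phase_distance} satisfies the triangle inequality and $d(\psi,\phi)\le\norm{\ket{\psi}-\ket{\phi}}$. Hence $e_j$ is at most the sum of three terms.

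The first term is the mixer term, $\norm{(U_j(R_j;\theta_j)-U_j(R_j^\star;\theta_j))\ket{\Phi_{j-1}}}$. By Lemma~\ref{lem:stage_lipschitz} it is at most $B_j\norm{R_j-R_j^\star}$. By Lemma~\ref{lem:projector_lipschitz} this is at most $B_je_{j-1}$. Two points need care here. First, $R_j$ and $R_j^\star$ depend only on the rank-one projectors, so they are invariant under global phases of $\ket{\Phi_{j-1}}$. Second, for a fixed-mixer stage the term vanishes identically, which is consistent with $B_j=0$.

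The second term is the input term, $d(U\ket{\Phi_{j-1}},U\ket{\Phi_{j-1}^\star})$ with $U=U_j(R_j^\star;\theta_j)$ unitary. It equals $d(\Phi_{j-1},\Phi_{j-1}^\star)=e_{j-1}$, because $U$ commutes with the global phase $e^{i\vartheta}$ and preserves norms.

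The third term is the optimizer term. It is at most $\varepsilon_j$ by hypothesis \eqref{eq:returned_angle_accuracy}.

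Summing the three terms gives $e_j\le(1+B_j)e_{j-1}+\varepsilon_j$ with $e_0=0$, since $\ket{\Phi_0^\star}=\ket{\Phi_0}$. An induction on $m$ then gives \eqref{eq:final_error_product}. Assuming the bound holds at $m-1$, multiply it by $(1+B_m)$ and add $\varepsilon_m$; the empty product for $j=m$ equals $1$.

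The main obstacle is the phase bookkeeping in the triangle step. The mixer term must be bounded in plain norm, but applied to a representative of $\ket{\Phi_{j-1}}$ that may carry an arbitrary phase. I will handle this by first choosing the representative that attains the minimum in $d(\Phi_{j-1},\Phi_{j-1}^\star)$. Once that representative is fixed, replacing the outer $d$ by a norm only weakens the estimates, so all three bounds combine cleanly. Assumption~\ref{ass:strict_reachability} only serves to make the ideal states well-defined and plays no quantitative role.
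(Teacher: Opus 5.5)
Your proof is correct and follows the paper's argument: a three-term triangle inequality whose terms are bounded by unitarity ($e_{j-1}$), by Lemmas~\ref{lem:projector_lipschitz} and~\ref{lem:stage_lipschitz} ($B_je_{j-1}$) and by the accuracy hypothesis ($\varepsilon_j$), followed by induction on the recurrence. The differences are cosmetic: you swap the mixer before the input, whereas the paper swaps the input first under the actual mixer, and you apply the triangle inequality directly in the phase-insensitive metric $d$ rather than to phase-aligned representatives.
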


\begin{proof}
Write $U_j(R)=U_j(R;\theta_j)$ for the circuit with the parameters returned at stage $j$.  Choose phase-aligned representatives of $\ket{\Phi_{j-1}}$ and $\ket{\Phi_{j-1}^\star}$ whose distance is $e_{j-1}$.  For an LP-mixer stage, insert the ideal-input and ideal-mixer evolutions and apply the triangle inequality:
\begin{align}
    e_j
    &\le \norm{U_j(R_j)\ket{\Phi_{j-1}}-U_j(R_j)\ket{\Phi_{j-1}^\star}} \\
    &\quad +\norm{U_j(R_j)\ket{\Phi_{j-1}^\star}-U_j(R_j^\star)\ket{\Phi_{j-1}^\star}} \\
    &\quad +\norm{U_j(R_j^\star)\ket{\Phi_{j-1}^\star}-\ket{\Phi_j^\star}}.
\end{align}
The first term equals $e_{j-1}$ by unitarity.  The condition in Eq.~\eqref{eq:returned_angle_accuracy} bounds the third term by $\varepsilon_j$.  Lemmas~\ref{lem:projector_lipschitz} and~\ref{lem:stage_lipschitz} bound the second term by $B_j e_{j-1}$.  A fixed-mixer stage has the same bound with $B_j=0$.  This proves Eq.~\eqref{eq:error_recurrence_product}; induction over $j$ gives Eq.~\eqref{eq:final_error_product}.
\end{proof}

\section{Learned-projector mixer gap theorem for Exact Cover 3 (EC3)}
\label{app:gap}

This section connects the runtime of adiabatic quantum optimization (AQO) to the depth of the corresponding QAOA circuit.  Section~\ref{app:gap_aqo} uses the adiabatic theorem and a product-formula discretization to translate the gap-dependent AQO runtime bound into the depth scale of this circuit construction.  Section~\ref{app:gap_collective} derives the minimum-gap scaling of an ideal learned-projector mixer on the EC3 feasible subspace.  Section~\ref{app:gap_local} adapts the random-EC3 Anderson-localization argument to the block-XY feasible graph and obtains the corresponding superexponential gap estimate within the same random-instance localization picture.

\subsection{From adiabatic runtime to QAOA circuits}
\label{app:gap_aqo}

The minimum gap enters the analysis through adiabatic quantum optimization.  Consider
\begin{equation}
    H(s)=(1-s)M+sC,\qquad s\in[0,1].
\end{equation}
Let $H(s)\ket{E_k(s)}=E_k(s)\ket{E_k(s)}$, with a nondegenerate ground-state branch and gap $g(s)=E_1(s)-E_0(s)>0$.  Set $s=t/T$ and $\Pi_0(s)=\ket{E_0(s)}\!\bra{E_0(s)}$.  The propagator $U_T(s)$ satisfies $i\partial_sU_T(s)=TH(s)U_T(s)$ with $U_T(0)=\Id$.  For this linear interpolation, $H'(s)=C-M$ and $H''(s)=0$.  Writing $h=\norm{C-M}$, the quantitative adiabatic theorem bounds the final ground-state leakage amplitude by \cite{jansen2007bounds}
\begin{equation}
    \norm{(\Id-\Pi_0(1))U_T(1)\Pi_0(0)}
    \le\frac{\mathcal A_H}{T},
    \qquad
    \mathcal A_H=
    \frac{h}{g(0)^2}+\frac{h}{g(1)^2}
    +7h^2\int_0^1\frac{ds}{g(s)^3}.
    \label{eq:adiabatic_runtime_condition}
\end{equation}
Choosing $T\ge\mathcal A_H/\epsilon$ therefore makes this amplitude at most $\epsilon$ and the corresponding transition probability at most $\epsilon^2$.  The gap profile enters through both endpoint terms and the integral.  Its minimum $g_{\min}=\min_s g(s)$ gives the bound $\mathcal A_H\le2h/g_{\min}^2+7h^2/g_{\min}^3$, while retaining $g(s)$ resolves the width of the spectral bottleneck \cite{jansen2007bounds,albash2018adiabatic}.

To connect AQO difficulty with QAOA circuit resources, we use the mixer--cost dynamics shared by the two algorithms.  AQO evolves under the weighted Hamiltonian $H(s)$, whereas QAOA alternates evolutions generated by the same operators $M$ and $C$ \cite{farhi2014quantum,brady2021optimal}.  Divide the runtime into $p$ steps of width $\Delta t=T/p$ and sample the schedule at $s_\ell=(\ell-1)/p$.  The first-order product-formula approximation is the QAOA circuit
\begin{align}
    U_p
    &=
    \prod_{\ell=p}^{1}
    e^{-i\Delta t(1-s_\ell)M}
    e^{-i\Delta t s_\ell C}, \notag\\
    \beta_\ell&=\Delta t(1-s_\ell),
    \qquad
    \gamma_\ell=\Delta t s_\ell.
    \label{eq:adiabatic_qaoa_angles}
\end{align}
Let $\kappa=\norm{[M,C]}$.  Sampling the linear schedule contributes at most $hT/(2p)$ to the operator-norm error.  Splitting each sampled Hamiltonian contributes at most $\Delta t^2s_\ell(1-s_\ell)\kappa/2\le\kappa\Delta t^2/8$ per step.  A telescoping bound over the unitary steps therefore gives
\begin{equation}
    \norm{U_T(1)-U_p}
    \le\frac{hT}{2p}+\frac{\kappa T^2}{8p}.
    \label{eq:adiabatic_discretization_error}
\end{equation}
For a prescribed discretization error $\epsilon_{\mathrm{pf}}$, it is sufficient to choose $p\ge(hT/2+\kappa T^2/8)/\epsilon_{\mathrm{pf}}$.  The final ground-state failure probability is then at most $(\mathcal A_H/T+\epsilon_{\mathrm{pf}})^2$.  Equations~\eqref{eq:adiabatic_runtime_condition} and~\eqref{eq:adiabatic_discretization_error} thus relate the gap profile and Hamiltonian norms to a QAOA depth at a specified accuracy.

The digital circuit has a corresponding spectral description.  Define the step unitary
\begin{equation}
    W_\ell=e^{-i\beta_\ell M}e^{-i\gamma_\ell C}.
\end{equation}
The eigenvalues of $W_\ell$ lie on the unit circle.  The shortest angular separation between the tracked eigenphase branch and any other branch defines the eigenphase gap $\delta_\ell$.  Discrete adiabatic theorems bound the tracking error through these gaps and the variation between successive step unitaries \cite{costa2022optimal}.  Choose $\Delta t$ so that $\Delta t\max_s[E_{\max}(s)-E_0(s)]<\pi$, where $E_{\max}(s)$ is the highest eigenvalue of $H(s)$.  The eigenphases of $e^{-i\Delta t H(s_\ell)}$ then occupy an arc shorter than $\pi$, and the small-step relations are
\begin{equation}
    W_\ell=e^{-i\Delta t H(s_\ell)}+O(\kappa\Delta t^2),
    \qquad
    \delta_\ell=\Delta t\,g(s_\ell)+O(\kappa\Delta t^2).
    \label{eq:step_eigenphase_gap}
\end{equation}
Taking $\kappa\Delta t\ll g_{\min}$ resolves the tracked branch relative to the product-formula correction.  As the schedule grid is refined, $\min_\ell g(s_\ell)$ approaches $g_{\min}$.  The energy-gap bottleneck of AQO therefore appears in the eigenphase spectrum of this small-step circuit.

Sections~\ref{app:gap_collective} and~\ref{app:gap_local} compare these spectral bottlenecks for EC3.  In the localized-crossing description, local mixing connects distant low-energy configurations through high-order paths, whereas learned-projector mixing supplies collective coupling across the feasible subspace.  The construction above connects this spectral comparison to QAOA schedules approximating the corresponding adiabatic evolutions.

\subsection{Feasible-subspace reduction for EC3}
\label{app:gap_collective}

EC3 is an NP-complete constraint satisfaction problem in which each clause acts on three bits and is satisfied only when exactly one bit has value one \cite{schaefer1978complexity}.  Its worst-case classical hardness and established use in localization analyses of AQO make EC3 a useful test case for comparing local and learned-projector mixers \cite{altshuler2010anderson}.  For an instance with $N$ bits and $M$ clauses, where $M$ grows at most polynomially with $N$, the cost is
\begin{equation}
    C_{\mathrm{EC3}}(x)=\sum_{a=1}^{M}(x_{i_a}+x_{j_a}+x_{k_a}-1)^2.
\end{equation}
Choose $m$ pairwise disjoint clauses $I$ and define
\begin{equation}
    C_1(x)=\sum_{a\in I}(x_{i_a}+x_{j_a}+x_{k_a}-1)^2.
\end{equation}
The selected clauses define the feasible set
\begin{equation}
    F=\{x\in\bits^N:C_1(x)=0\},
\end{equation}
which has size
\begin{equation}
    L=|F|=3^m2^{N-3m}=2^N\left(\frac38\right)^m.
\end{equation}
Define the uniform reference state
\begin{equation}
    \ket{F}=\frac{1}{\sqrt{L}}\sum_{x\in F}\ket{x}.
\end{equation}
The learned-projector mixer generator $R_F=\Id-\ket{F}\!\bra{F}$ and the diagonal EC3 cost Hamiltonian both preserve the feasible subspace
\begin{equation}
    \calH_F=\operatorname{span}\{\ket{x}:x\in F\}.
\end{equation}
Let
\begin{equation}
    C_F=C_{\mathrm{EC3}}|_{\calH_F}
    =
    \sum_{x\in F}E_x\ket{x}\!\bra{x}.
\end{equation}
Each unselected clause contributes an integer between zero and four, so $E_x\in\mathbb Z_{\ge0}$ and $R=\max_{x\in F}E_x\le4(M-m)$.  Since $m\le N/3$, the feasible-set size obeys $L\ge3^{N/3}$.  Consequently,
\begin{equation}
    \frac{R^3}{\sqrt L}
    \le\frac{[4(M-m)]^3}{3^{N/6}}
    \longrightarrow0.
\end{equation}
These energy bounds and the asymptotic relation therefore follow from the EC3 construction.  We study the interpolation
\begin{equation}
    H_F(s)=(1-s)R_F+sC_F,\qquad 0\le s\le1.
\end{equation}

\begin{theorem}[Learned-projector gap on an EC3 feasible subspace]
\label{thm:ec3_gap}
Assume $C_F$ has a unique zero-cost state $x_\star\in F$, so $E_{x_\star}=0$ and $E_x\ge1$ for $x\in F\setminus\{x_\star\}$.  Define
\begin{equation}
    A_k=\frac{1}{L}\sum_{x\in F\setminus\{x_\star\}}E_x^{-k},
    \qquad k=1,2.
\end{equation}
The minimum gap between the lowest two eigenvalues of $H_F(s)$ is
\begin{equation}
    g_{\min}=
    \frac{2A_1}{1+A_1}\frac{1}{\sqrt{A_2L}}(1+o(1)).
    \label{eq:ec3_gap_formula}
\end{equation}
Equivalently, with
\begin{equation}
    q_N=\frac{2A_1}{(1+A_1)\sqrt{A_2}},
    \qquad
    \frac{L-1}{LR}\le q_N\le2,
    \label{eq:ec3_prefactor_bounds}
\end{equation}
the gap is
\begin{equation}
    g_{\min}
    =q_N\,2^{-N/2}\left(\frac83\right)^{m/2}(1+o(1)).
    \label{eq:ec3_gap_scaling}
\end{equation}
Since $R\le4(M-m)$ grows at most polynomially with $N$, $q_N$ lies between an inverse polynomial and a constant.
\end{theorem}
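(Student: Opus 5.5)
The plan is to reduce the spectral problem to a secular equation for a diagonal matrix with a rank-one perturbation. On $\calH_F$ we have $H_F(s)=(1-s)\Id+sC_F-(1-s)\ket{F}\!\bra{F}$. Up to the shift $(1-s)$, this is a diagonal operator $sC_F$ with a negative rank-one term of weight $(1-s)$. For an eigenvalue $\lambda$ of $sC_F-(1-s)\ket{F}\!\bra{F}$ that avoids the diagonal entries $sE_x$, the standard determinant argument gives
\[
1=\frac{1-s}{L}\sum_{x\in F}\frac{1}{\lambda-sE_x}.
\]
Eigenvalues at degenerate levels $sE$ with multiplicity greater than one are interlaced and lie at least as high as $s$, since $E\ge1$. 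The two lowest eigenvalues are therefore the root $\lambda_0$ below $0=sE_{x_\star}$ and the root $\lambda_1$ in $(0,s)$, or possibly the value $s$ itself. Writing $\lambda=s\mu$ and $t=(1-s)/s$, the secular equation becomes
\[
\frac{s}{1-s}=\frac{1}{L}\left[\frac{1}{\mu}+\sum_{x\ne x_\star}\frac{1}{\mu-E_x}\right].
\]

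Next I locate the avoided crossing. Away from it, the lowest state is either close to $\ket{F}$, with energy near $-(1-s)+s\langle C\rangle$ up to second-order corrections, or close to $\ket{x_\star}$, with energy near $0$. For small $|\mu|$, expand $\sum_{x\ne x_\star}(\mu-E_x)^{-1}=-LA_1-\mu LA_2+O(\mu^2LA_3)$. The equation then becomes
\[
\frac{s}{1-s}+A_1+\mu A_2=\frac{1}{L\mu}+O(\mu^2),
\]
which is a quadratic in $\mu$. Its two roots are
\[
\mu_\pm=\frac{-D\pm\sqrt{D^2+4A_2/L}}{2A_2},\qquad D=\frac{s}{1-s}+A_1.
\]
The crossing sits where $D=0$ would hold on the branch $s/(1-s)=-A_1$. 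That cannot happen, so I need to recheck the sign.

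The recheck: in fact $(\mu-E_x)^{-1}\approx-1/E_x$, so the correct equation is
\[
\frac{s}{1-s}=\frac{1}{L\mu}-A_1-\mu A_2 .
\]
The crossing is therefore at $s^\ast/(1-s^\ast)=A_1$, that is $s^\ast=A_1/(1+A_1)$. With $D=s/(1-s)-A_1$, the roots are
\[
\mu_\pm=\frac{-D\pm\sqrt{D^2+4A_2/L}}{2A_2}.
\]
The gap in $\lambda$ is $s(\mu_+-\mu_-)=s\sqrt{D^2+4A_2/L}\,/A_2$. It is minimized at $D=0$, where it equals $2s^\ast/\sqrt{A_2L}=\frac{2A_1}{1+A_1}\frac{1}{\sqrt{A_2L}}$, which is formula \eqref{eq:ec3_gap_formula}. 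Moving $D$ to order $L^{-1/2}$ perturbs $s$ only by $O(L^{-1/2})$, so setting the minimizing $s$ equal to $s^\ast$ costs only a relative $o(1)$.

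The main obstacle is justifying the truncation and the claim that no other eigenvalue intervenes. Near $s^\ast$ the relevant $|\mu|$ is $O((A_2L)^{-1/2})$, so the $\mu^2$ term in the expansion is bounded by $\mu^2 A_3\le\mu^2A_2$. Relative to the $\mu A_2$ term it is therefore of order $|\mu|\le\sqrt{R^2/L}$, using $A_2\ge(L-1)/(LR^2)$. This is $o(1)$ because $R^3/\sqrt L\to0$, and the same estimate handles the shift of the minimizing $s$. I will argue as follows:
\begin{enumerate}
\item In a window $|D|\le L^{-1/2+\delta}$ the quadratic approximation is uniform. Both roots have $|\mu|\ll1\le E_x$, so they are the two lowest, and the next level is at least $s\cdot1$, which is far above.
\item Outside the window, the exact secular function is monotone between poles. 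The gap is at least of order $|D|s/A_2\gg L^{-1/2}$, or it is bounded below by the separation from level $s$.
\end{enumerate}
The endpoints are easy: at $s=0$ the gap is $1$, and at $s=1$ it is $\ge1$.

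The prefactor bounds come from Jensen's and elementary inequalities on $E_x\in[1,R]$. Cauchy--Schwarz gives $A_1^2\le\frac{L-1}{L}A_2\le A_2$, so $q_N\le 2A_1/\sqrt{A_2}\le2$. For the lower bound, $\sqrt{A_2}\le\sqrt{A_1}$ because $E_x^{-2}\le E_x^{-1}$, and $1+A_1\le2$. Hence $q_N\ge A_1/\sqrt{A_1}=\sqrt{A_1}\ge A_1\ge (L-1)/(LR)$, using $A_1\le1$. Substituting $L=2^N(3/8)^m$ gives \eqref{eq:ec3_gap_scaling}, and polynomial $R$ yields the final inverse-polynomial claim.
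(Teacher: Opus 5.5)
Your route matches the paper's. You reduce $H_F(s)$ on $\calH_F$ to a diagonal operator plus a rank-one term, expand the secular equation near zero to a quadratic, read the gap off at the crossing $s^\ast=A_1/(1+A_1)$, and bound $q_N$ using Cauchy--Schwarz and $1\le E_x\le R$; your chain $q_N\ge\sqrt{A_1}\ge A_1$ is fine. The crossing itself, however, is not derived correctly. For $sC_F-(1-s)\ket{F}\!\bra{F}$ the eigenvalue condition is $1=\frac{1-s}{L}\sum_{x\in F}(sE_x-\lambda)^{-1}$, not $(\lambda-sE_x)^{-1}$. With your sign every term is negative when $\lambda<0$, so there is no root below $0$ at all, which contradicts the root structure you use next. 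Your ``recheck'' does not repair this. Both displayed expansions are the same equation, $\frac{s}{1-s}+A_1=\frac{1}{L\mu}-A_2\mu$, and its linear coefficient $\frac{s}{1-s}+A_1$ never vanishes. The crossing at $\frac{s}{1-s}=A_1$ and the roots $\mu_\pm$ you write down are therefore asserted rather than obtained. The correct expansion is $\frac{s}{1-s}=-\frac{1}{L\mu}+A_1+A_2\mu+O(A_2\mu^2)$, which is the paper's expansion of $1/a$ with $1/a=s/(1-s)$. It gives $A_2\mu^2-D\mu-1/L\approx0$ with $D=\frac{s}{1-s}-A_1$, the paper's $\eta$. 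Its roots are the negatives of yours, so your gap formula survives once the sign is fixed.

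The localization outside your window is also incomplete. Your lower bound of order $s|D|/A_2$ tends to $0$ as $s\to0$, where $D\to-A_1$; for $D<0$ it comes from $|\mu_-|\ge|D|/A_2$, which also needs $|\mu_-|\le1/2$ for the expansion. So it does not give $g\gg L^{-1/2}$ on all of $\{|D|>L^{-1/2+\delta}\}$, and knowing only $g(0)=1$ does not close this. You need a bound on a neighbourhood of $s=0$. The paper uses Weyl's inequality: $\norm{H_F(s)-R_F}\le s(R+1)$, so $g(s)\ge1/2$ for $s\le[4(R+1)]^{-1}$. Alternatively, the corrected secular equation gives $\frac{s}{1-s}\ge(R+|\mu_-|)^{-1}$, hence $g\ge s|\mu_-|\ge1-s(R+1)$. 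For $s\ge[4(R+1)]^{-1}$, use $\mu_+\ge\min\{1/2,D/(2A_2)\}$ when $D>0$ and $|\mu_-|\ge\min\{1/2,|D|/A_2\}$ when $D<0$. Together these give $g\ge L^{-1/2+\delta}/[8(R+1)]\gg L^{-1/2}$. With these repairs your fixed-window step works, and it is a slightly simpler substitute for the paper's $h(t)$ endpoint analysis and differentiation of $G(\eta)$. Over $|D|\le L^{-1/2+\delta}$ the factor $s$ varies only by a relative $O(RL^{-1/2+\delta})$, while $\sqrt{D^2+4A_2/L}$ is minimized at $D=0$.
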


\begin{proof}
For $s>0$, set $a=(1-s)/s$ and write
\begin{equation}
    H_F(s)=(1-s)\Id+sK(a),
    \qquad
    K(a)=C_F-a\ket{F}\!\bra{F}.
\end{equation}
The scalar shift and positive factor $s$ preserve eigenvalue ordering.  If $\lambda_0(a)\le\lambda_1(a)\le\cdots$ are eigenvalues of $K(a)$, then the gap of $H_F(s)$ is $s(\lambda_1(a)-\lambda_0(a))$.

Let the distinct cost values of $C_F$ be $0=e_0<e_1<\cdots<e_q$, with spectral projectors $P_\alpha$, and define $w_\alpha=\bra{F}P_\alpha\ket{F}$.  Since the zero-cost state is unique, $w_0=1/L$.  The rank-one perturbation $-a\ket{F}\!\bra{F}$ leaves vectors in $P_\alpha\calH_F$ orthogonal to $P_\alpha\ket{F}$ at eigenvalue $e_\alpha$.  The collective eigenvalues satisfy
\begin{equation}
    \frac1a=\bra{F}(C_F-\lambda\Id)^{-1}\ket{F}
    =
    \sum_{\alpha=0}^{q}\frac{w_\alpha}{e_\alpha-\lambda}.
    \label{eq:secular_general}
\end{equation}
There is exactly one root $\lambda_-(a)<0$ and exactly one root $\lambda_+(a)\in(0,e_1)$.  These are the two lowest eigenvalues of $K(a)$.

For $|\lambda|\le1/2$, the identity
\begin{equation}
    \frac{1}{E_x-\lambda}
    =\frac{1}{E_x}+\frac{\lambda}{E_x^2}
    +\frac{\lambda^2}{E_x^2(E_x-\lambda)}
\end{equation}
gives the uniform expansion
\begin{equation}
    \frac1a=-\frac{1}{L\lambda}+A_1+A_2\lambda+\rho(\lambda),
    \qquad
    |\rho(\lambda)|\le2A_2\lambda^2.
    \label{eq:secular_expansion}
\end{equation}
Set $\eta=1/a-A_1$.  Multiplying by $\lambda$ gives
\begin{equation}
    A_2\lambda^2-\eta\lambda-\frac1L
    +O(A_2|\lambda|^3)=0.
    \label{eq:secular_quadratic}
\end{equation}
The quadratic part has roots
\begin{equation}
    \widetilde\lambda_\pm(\eta)=
    \frac{\eta\pm\sqrt{\eta^2+4A_2/L}}{2A_2}.
\end{equation}
In the window $|\eta|=O(\sqrt{A_2/L})$, these roots have size $O(1/\sqrt{A_2L})$.  Moreover,
\begin{equation}
    A_2\ge\frac{L-1}{LR^2}
    \quad\Longrightarrow\quad
    \frac{1}{\sqrt{A_2L}}
    \le\frac{R}{\sqrt{L-1}}=o(1).
\end{equation}
The remainder in Eq.~\eqref{eq:secular_quadratic} is therefore $o(1/L)$ at the quadratic roots.  Root perturbation gives, uniformly in this window,
\begin{equation}
    \lambda_+(\eta)-\lambda_-(\eta)=
    \frac{\sqrt{\eta^2+4A_2/L}}{A_2}
    (1+o(1)).
    \label{eq:collective_gap_window}
\end{equation}
The relation $a=1/(A_1+\eta)$ gives
\begin{equation}
    s=\frac{A_1+\eta}{1+A_1+\eta}.
\end{equation}
At $\eta=0$, Eqs.~\eqref{eq:collective_gap_window} and the expression for $s$ give a gap at most $2/\sqrt{L}\,(1+o(1))$.  Hence the global minimum tends to zero.

It remains to localize that global minimum.  Since $H_F(0)=R_F$ has gap one and
\begin{equation}
    \norm{H_F(s)-R_F}
    =s\norm{C_F-R_F}
    \le s(R+1),
\end{equation}
Weyl's inequality gives $g(s)\ge1/2$ whenever $s\le[4(R+1)]^{-1}$.  A global minimizer $s_{\min}$ therefore satisfies $s_{\min}>[4(R+1)]^{-1}$ for sufficiently large $L$.  At that point,
\begin{equation}
    \lambda_+(a)-\lambda_-(a)
    =\frac{g(s_{\min})}{s_{\min}}
    =O\!\left(\frac{R}{\sqrt{L}}\right)=o(1).
\end{equation}
Because $\lambda_-<0<\lambda_+$, both roots approach zero.  At these roots, the cubic remainder in Eq.~\eqref{eq:secular_quadratic}, relative to $1/L$, is $O(R^3/\sqrt{L})=o(1)$.  The quadratic relation therefore gives, at a global minimizer,
\begin{equation}
    g(s)=
    \frac{A_1+\eta}{1+A_1+\eta}
    \frac{\sqrt{\eta^2+4A_2/L}}{A_2}
    (1+o(1)).
    \label{eq:gap_global_quadratic}
\end{equation}
This expression localizes the minimizer.  For $\eta\ge0$, both factors preceding $1/A_2$ are minimized at $\eta=0$.  For $\eta<0$, put $t=A_1+\eta$.  Outside $|\eta|\le c\sqrt{A_2/L}$, the lower bound obtained by replacing the square root with $|\eta|$ is proportional to
\begin{equation}
    h(t)=\frac{t(A_1-t)}{1+t}.
\end{equation}
The derivative $h'(t)=(A_1-2t-t^2)/(1+t)^2$ changes sign once from positive to negative. Hence the minimum of $h$ on $[t_0,A_1-c\sqrt{A_2/L}]$ occurs at an endpoint, where $t_0=(4R+3)^{-1}$ is the boundary implied by Weyl's inequality.  At the upper endpoint, choosing any fixed $c>2$ makes Eq.~\eqref{eq:gap_global_quadratic} larger than the trial gap at $\eta=0$.  At the lower endpoint, $A_1\ge(L-1)/(LR)$ and $A_2\le1$ give a lower bound of order $R^{-2}$, whereas the trial gap is $O(L^{-1/2})$; the former is larger under $R^3/\sqrt{L}\to0$.  Hence every global minimizer satisfies $|\eta|=O(\sqrt{A_2/L})$, and Eq.~\eqref{eq:collective_gap_window} applies uniformly there.

The leading gap in this window is
\begin{equation}
    G(\eta)=
    \frac{A_1+\eta}{1+A_1+\eta}
    \frac{\sqrt{\eta^2+4A_2/L}}{A_2}.
\end{equation}
Differentiating $G$ shows that its minimizer satisfies
\begin{equation}
    \eta_{\min}=O\!\left(\frac{A_2}{A_1L}\right)
    =o\!\left(\sqrt{\frac{A_2}{L}}\right),
\end{equation}
where the final relation follows from $A_1\ge(L-1)/(LR)$ and $R/\sqrt{L}\to0$.  Substitution into $G$ proves Eq.~\eqref{eq:ec3_gap_formula}.

Finally, $A_1\ge(L-1)/(LR)$ and $A_2\le1$ give the lower bound in Eq.~\eqref{eq:ec3_prefactor_bounds}.  Cauchy--Schwarz gives $A_1^2\le[(L-1)/L]A_2$, which yields the upper bound.  Substituting $L=2^N(3/8)^m$ proves Eq.~\eqref{eq:ec3_gap_scaling}.
\end{proof}

\begin{remark}[Reference-state preparation]
The disjoint-clause construction gives $\ket{F}=\ket{W_3}^{\otimes m}\otimes\ket{+}^{\otimes(N-3m)}$, where $\ket{W_3}=(\ket{100}+\ket{010}+\ket{001})/\sqrt3$.  Preparing these factors in parallel realizes $\ket{F}$ exactly with a local constant-depth circuit.  This state can serve as the reference for learned-projector refinement; \suppref{app:stability} bounds the effect of an approximate preparation.
\end{remark}

\begin{remark}[Several satisfying assignments]
If $r>1$ satisfying assignments lie in $F$, the permutation-symmetric target sector replaces the weight $1/L$ by $r/L$.  Under analogous bounded-energy assumptions, its collective search gap scales as
\begin{equation}
    g_{\mathrm{search}}
    =q_{N,r}\sqrt{\frac{r}{L}}\,(1+o(1)),
\end{equation}
with a prefactor $q_{N,r}$ bounded below by an inverse polynomial.  This quantity is a search-sector gap because the zero-energy target space is degenerate.
\end{remark}

\subsection{Localization argument for block-XY refinement}
\label{app:gap_local}

We extend the localized-crossing analysis of Altshuler et al.\ to block-XY refinement on the feasible subspace \cite{altshuler2010anderson}.  For $s>0$, write the interpolation as $H_{\mathrm{XY}}(s)=s(C_F-\lambda B_F)$, where $\lambda=(1-s)/s$.  Here $B_F$ is the adjacency operator of the feasible-configuration graph $G_F$, with unit hopping amplitude along each edge.

Within each selected triple, block-XY mixing moves the single excitation among three positions; each unconstrained bit admits a single-bit flip.  The configuration graph is therefore the Cartesian product of $m$ copies of $K_3$ and an $(N-3m)$-dimensional hypercube.  If configurations $x,y\in F$ differ in $q_T$ selected triples and $q_U$ unconstrained bits, then
\begin{equation}
    \ell=d_{G_F}(x,y)=q_T+q_U,
    \qquad
    d_H(x,y)=2q_T+q_U.
\end{equation}
Thus $d_H(x,y)/2\le\ell\le d_H(x,y)$, and configurations separated by $\Theta(N)$ bit changes remain separated by $\Theta(N)$ allowed mixer moves.  Every term containing fewer than $\ell$ hopping operators has zero matrix element between these configurations.  Their effective tunnelling coupling therefore has no contribution below order $\ell$.

For the block-XY extension, we adopt the localized-crossing description on this graph.  Competing low-energy states remain localized around distant configurations and undergo an avoided crossing at $\lambda_c\le N^{-a}$ for some constant $a>0$.  We also use the path-sum estimate of the localization analysis: intermediate energy denominators compensate the factorial growth in the number of tunnelling paths.  This gives $|V_{xy}(\lambda_c)|\lesssim(A\lambda_c)^\ell$ with $A=O(1)$ \cite{altshuler2010anderson}.  Here the path length is the number of block exchanges and unconstrained-bit flips.  At the crossing, $s_c=(1+\lambda_c)^{-1}$, and the two-level gap is controlled by this coupling.  Consequently,
\begin{equation}
    g_{\min}^{\mathrm{XY}}
    \lesssim 2s_c|V_{xy}(\lambda_c)|
    \lesssim 2(AN^{-a})^\ell
    \le\exp[-cN\log N+O(N)],
    \qquad c>0.
\end{equation}
Block-XY mixing therefore retains the high-order tunnelling mechanism in this localized-crossing description.  In contrast, the learned-projector generator has $\bra{y}R_F\ket{x}=-1/L$ for every distinct pair $x,y\in F$.  Its collective coupling produces the minimum gap established in Theorem~\ref{thm:ec3_gap}, with $L^{-1/2}$ scaling up to a polynomial prefactor.

\section{Lie-algebraic analysis of QAOA trainability}
\label{app:trainability_theory}

This section summarizes published results connecting circuit generators to loss concentration and applies their notation to the QAOA mixer comparison.  Lie-algebraic approaches to trainability were developed by Larocca et al.\ \cite{larocca2022diagnosing}; the exact loss-variance formula used here is due to Ragone et al.\ \cite{ragone2024lie}.

\subsection{Dynamical Lie algebras and loss variance}
\label{app:dla_variance}

For Hermitian circuit generators $H_1,\ldots,H_K$, define the dynamical Lie algebra (DLA)
\begin{equation}
    \mathfrak g
    =\operatorname{Lie}_{\mathbb R}\{iH_1,\ldots,iH_K\}
    =\bigoplus_{\alpha=1}^{r}\mathfrak g_\alpha\oplus\mathfrak z,
\end{equation}
where the $\mathfrak g_\alpha$ are simple ideals and $\mathfrak z$ is the center.  Let $\mathcal G$ be the associated compact connected dynamical Lie group.  For an initial density operator $\rho$ and a Hermitian observable $O$, write
\begin{equation}
    \mathcal L(U)=\operatorname{Tr}(U\rho U^\dagger O).
\end{equation}
Let $A_\alpha$ and $A_{\mathfrak z}$ denote the Hilbert--Schmidt projections of a Hermitian operator $A$ onto $i\mathfrak g_\alpha$ and $i\mathfrak z$, respectively.

\begin{theorem}[Loss variance; Ragone et al., Theorem 1]
\label{thm:app_dla_variance}
Suppose $O\in i\mathfrak g$ or $\rho\in i\mathfrak g$.  For Haar-distributed $U\in\mathcal G$,
\begin{align}
    \mathbb E_{\mathcal G}[\mathcal L]
    &=\operatorname{Tr}(\rho_{\mathfrak z}O_{\mathfrak z}),\\
    \operatorname{Var}_{\mathcal G}[\mathcal L]
    &=\sum_{\alpha=1}^{r}
    \frac{\operatorname{Tr}(\rho_\alpha^2)
          \operatorname{Tr}(O_\alpha^2)}
         {\dim\mathfrak g_\alpha}.
\end{align}
These expressions also hold for circuit ensembles forming a unitary $2$-design over $\mathcal G$ \cite{ragone2024lie}.
\end{theorem}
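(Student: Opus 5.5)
The plan is to reduce both moments to integrals of the adjoint representation of $\mathcal G$ on $i\mathfrak g$. I treat the case $O\in i\mathfrak g$; the case $\rho\in i\mathfrak g$ is symmetric, by moving $U$ to the other side of the trace. Write
\[
\mathcal L(U)=\operatorname{Tr}(\rho\,U^\dagger O U)=\langle\rho,\operatorname{Ad}_{U^\dagger}O\rangle_{\mathrm{HS}}.
\]
Since $\mathcal G$ is generated by $\mathfrak g$, the map $\operatorname{Ad}_{U^\dagger}$ preserves $i\mathfrak g$. Moreover, the Hilbert--Schmidt form restricted to $\mathfrak g$ is $\operatorname{Ad}$-invariant. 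For a compact Lie algebra, distinct simple ideals and the center are mutually orthogonal under any invariant form, because $[\mathfrak g_\alpha,\mathfrak g_\beta]=0$ and $\mathfrak g_\alpha=[\mathfrak g_\alpha,\mathfrak g_\alpha]$. Hence only the component of $\rho$ in $i\mathfrak g$ matters, and
\[
\mathcal L=\operatorname{Tr}(\rho_{\mathfrak z}O_{\mathfrak z})+\sum_\alpha \langle\rho_\alpha,\operatorname{Ad}_{U^\dagger}O_\alpha\rangle .
\]
Here I use that $\operatorname{Ad}$ acts trivially on the center and maps each $i\mathfrak g_\alpha$ to itself.

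Next I take expectations. Up to a finite quotient, $\mathcal G$ is a product of a torus and simple factors $\mathcal G_\alpha$. The Haar measure factorizes, and $\operatorname{Ad}_U$ on $\mathfrak g_\alpha$ depends only on the $\mathcal G_\alpha$-component. The operator $\int\operatorname{Ad}_U\,dU$ is the projector onto the $\operatorname{Ad}$-fixed vectors in $\mathfrak g_\alpha$. Because $\mathfrak g_\alpha$ is simple and noncommutative, this fixed subspace is zero. So each $\alpha$-term has mean zero, which gives $\mathbb E[\mathcal L]=\operatorname{Tr}(\rho_{\mathfrak z}O_{\mathfrak z})$.

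For the variance, I expand the square of the fluctuating part. For $\alpha\ne\beta$, the cross terms factor into products of independent mean-zero integrals and therefore vanish. For the diagonal terms I need
\[
\int \langle\rho_\alpha,\operatorname{Ad}_UO_\alpha\rangle^2\,dU
=\Big\langle \rho_\alpha\otimes\rho_\alpha,\ \Big(\int\operatorname{Ad}_U^{\otimes2}dU\Big)\,O_\alpha\otimes O_\alpha\Big\rangle .
\]
The inner integral is the projector onto the invariants of $\mathfrak g_\alpha\otimes\mathfrak g_\alpha$, which correspond to invariant bilinear forms on $\mathfrak g_\alpha$.

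The main obstacle is this Schur step over the reals. I will argue that the complexification of a compact simple Lie algebra is simple, so the adjoint representation is absolutely irreducible. Its commutant is then $\mathbb R\cdot\mathrm{id}$, and the invariant bilinear forms form a one-dimensional space, necessarily spanned by the Hilbert--Schmidt form itself. The projector onto the invariant vector $\Omega=\sum_k e_k\otimes e_k$ is $\Omega\Omega^{\mathsf T}/\dim\mathfrak g_\alpha$, where $\{e_k\}$ is an HS-orthonormal basis of $i\mathfrak g_\alpha$. Evaluating this on $O_\alpha\otimes O_\alpha$ and pairing with $\rho_\alpha\otimes\rho_\alpha$ yields $\operatorname{Tr}(\rho_\alpha^2)\operatorname{Tr}(O_\alpha^2)/\dim\mathfrak g_\alpha$. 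Summing over $\alpha$ gives the stated variance.

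Finally, both moments are polynomials of degree at most $(2,2)$ in the entries of $U$ and $\bar U$. Any ensemble that is a unitary $2$-design over $\mathcal G$ therefore reproduces the Haar values, which extends the formulas to such ensembles as claimed.
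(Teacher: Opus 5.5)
Your proof is correct. The paper gives no argument of its own and simply quotes Ragone et al., whose proof follows essentially the same representation-theoretic route: you split $i\mathfrak g$ Hilbert--Schmidt-orthogonally into the center and the simple ideals, note that each simple ideal has no nonzero $\mathcal G$-fixed vectors (so the mean reduces to $\operatorname{Tr}(\rho_{\mathfrak z}O_{\mathfrak z})$), and use Schur's lemma for the real-type adjoint representation to reduce the second moment to the one-dimensional invariant $\sum_k e_k\otimes e_k$ in $\mathfrak g_\alpha\otimes\mathfrak g_\alpha$.
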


A unitary $2$-design matches the first two unitary moments of Haar measure.  This supplies the statistical meaning of the sufficiently deep random-circuit limit.  The denominators quantify the dimensions of the accessible operator spaces; the numerators specify how the initial state and measured objective occupy those spaces.

Loss concentration concerns variation between parameter choices, rather than measurement variance within a fixed quantum state.  Chebyshev's inequality gives
\begin{equation}
    \Pr\!\left(
      |\mathcal L-\mathbb E\mathcal L|\ge\delta
    \right)
    \le\frac{\operatorname{Var}[\mathcal L]}{\delta^2},
    \qquad \delta>0.
\end{equation}
An exponentially small loss variance therefore suppresses resolvable differences across random parameter choices.  The connection between cost concentration and exponentially vanishing parameter gradients is developed by Arrasmith et al.\ \cite{arrasmith2022equivalence}.

\subsection{Variance results for transverse-field and Grover mixers}
\label{app:qaoa_mixer_variance}

The following statements use Haar averages over the respective dynamical Lie groups.

\paragraph{Transverse-field QAOA.}
For a graph $G=(V,E)$ on $N$ vertices, define
\begin{equation}
    B=\sum_{v\in V}X_v,
    \qquad
    C_G=\sum_{(u,v)\in E}w_{uv}Z_uZ_v,
    \qquad
    \mathfrak g_X=\operatorname{Lie}_{\mathbb R}\{iB,iC_G\}.
\end{equation}
Minimizing $C_G$ is equivalent to maximizing the weighted cut.  Following Mao et al., take
\begin{equation}
    \rho_+=(\ket{+}\!\bra{+})^{\otimes N},
    \qquad
    O_G=\frac{C_G}{\sqrt{\sum_{(u,v)\in E}w_{uv}^2}}.
\end{equation}
Write $\mathcal G_X$ for the corresponding dynamical Lie group.

\begin{theorem}[QAOA--MaxCut; Mao et al., Theorems 2 and 3]
\label{thm:app_x_mixer_variance}
For an unweighted Erd\H{o}s--R\'enyi graph $G\sim G(N,1/2)$, the following holds with probability at least $1-\exp[-\Omega(N)]$:
\begin{equation}
    \mathfrak g_X=\bigoplus_{\alpha=1}^{r}\mathfrak g_\alpha,
    \qquad
    r\in\{1,2\},
    \qquad
    \dim\mathfrak g_\alpha=\Theta(4^N),
\end{equation}
and
\begin{equation}
    \operatorname{Var}_{\mathcal G_X}
    \!\left[
      \operatorname{Tr}(U\rho_+U^\dagger O_G)
    \right]
    =O(2^{-N}).
\end{equation}
The same conclusions hold almost surely for connected graphs other than paths and cycles with independently drawn continuous edge weights \cite{mao2025qaoa}.
\end{theorem}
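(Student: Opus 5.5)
The plan is to derive the statement from the Ragone et al.\ variance formula (Theorem~\ref{thm:app_dla_variance}) once the dynamical Lie algebra $\mathfrak g_X$ has been identified. Since $iC_G\in\mathfrak g_X$, the observable satisfies $O_G\in i\mathfrak g_X$, so the theorem applies with $\rho=\rho_+$. The argument has three steps: determine the symmetry (commutant) of the generators, show that $\mathfrak g_X$ is as large as these symmetries allow, and then evaluate the numerators and denominators of the variance formula.

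\textbf{Step 1 (symmetries).} Both $B$ and $C_G$ commute with the global parity $P=X^{\otimes N}$. They also commute with any qubit permutation induced by a graph automorphism that preserves the weights. For $G\sim G(N,1/2)$, the automorphism group is trivial with probability $1-\exp[-\Omega(N)]$. For continuous i.i.d.\ weights on a connected graph, it is trivial almost surely. We therefore expect the commutant of $\mathfrak g_X$ to be spanned by $\Id$ and $P$. Accordingly, we aim to show
\[
\mathfrak g_X\supseteq \mathfrak{su}(2^{N-1})\oplus\mathfrak{su}(2^{N-1}),
\]
acting on the $P=\pm1$ sectors, with $\mathfrak g_X$ contained in this algebra plus a center. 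The two-sector structure explains $r\in\{1,2\}$: $r=1$ occurs if the two blocks are tied together by an outer isomorphism, and in either case each simple ideal has $\dim=\Theta(4^N)$.

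\textbf{Step 2 (generation, the main obstacle).} We must show that the nested commutators of $iB$ and $iC_G$ reach all of the parity-commuting operators. I would first try to isolate individual terms. Repeated adjoint actions $\mathrm{ad}_{C_G}^{k}(B)$ give $\sum_v$ of polynomials in the local fields $h_v=\sum_{u\sim v}w_{uv}Z_uZ_v$ multiplying $X_v$ or $Y_v$. When the vertices' local-field data are pairwise distinct, a Vandermonde-type inversion extracts the single-vertex pieces $X_v$ (or their dressed versions), and the symmetric argument applied to $\mathrm{ad}_B^k(C_G)$ extracts the edge terms $Z_uZ_v$. Distinctness holds almost surely for continuous weights. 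For unweighted $G(N,1/2)$, I would instead use degree and neighbourhood distinctness, which holds with probability $1-\exp[-\Omega(N)]$. Once the algebra contains all $X_v$ and all $Z_uZ_v$ for edges of a connected non-path, non-cycle graph, a standard Pauli-string generation argument gives every parity-even (commuting with $P$) Pauli product. Together with the center, these span the claimed semisimple algebra. This extraction step is the delicate part, and it is where the exceptional cases of paths and cycles (which yield free-fermion, polynomial-dimensional algebras) must be excluded.

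\textbf{Step 3 (variance bound).} The state $\ket{+}^{\otimes N}$ lies in the $P=+1$ sector, so only the ideal(s) supported there contribute a nonzero $\rho_\alpha$, and $\operatorname{Tr}(\rho_\alpha^2)\le\operatorname{Tr}(\rho_+^2)=1$. Projections do not increase the Hilbert--Schmidt norm, so $\operatorname{Tr}(O_\alpha^2)\le\operatorname{Tr}(O_G^2)$. The Pauli strings $Z_uZ_v$ are orthogonal with $\operatorname{Tr}[(Z_uZ_v)^2]=2^N$, so $\operatorname{Tr}(C_G^2)=2^N\sum w_{uv}^2$ and hence $\operatorname{Tr}(O_G^2)=2^N$. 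Substituting into Theorem~\ref{thm:app_dla_variance} with $r\le2$ and $\dim\mathfrak g_\alpha=\Theta(4^N)$ gives
\[
\operatorname{Var}_{\mathcal G_X}\le 2\cdot 2^N/\Theta(4^N)=O(2^{-N}).
\]
The probability statement is inherited from the high-probability events used in Steps 1 and 2, namely trivial automorphisms, connectivity and distinct local data.
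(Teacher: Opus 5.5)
The paper gives no proof of this statement; it quotes Mao et al., so I am judging your outline on its own. Your Step~3 is sound. Once $r\le2$ and $\dim\mathfrak g_\alpha=\Theta(4^N)$ are known, Theorem~\ref{thm:app_dla_variance} with $\operatorname{Tr}(\rho_\alpha^2)\le1$ and $\operatorname{Tr}(O_\alpha^2)\le\operatorname{Tr}(O_G^2)=2^N$ gives $O(2^{-N})$. The gap is in the Lie-algebra identification, and its stated endpoint is false for part of the graph class in the statement.

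A commutant equal to $\operatorname{span}\{\Id,P\}$ only gives irreducibility on each parity sector; it cannot detect antiunitary symmetries. Suppose $G$ is bipartite, and let $S$ be one side of the bipartition. Put $Q=\prod_{v\in S}Y_v\prod_{v\notin S}Z_v$ and $J=QK$, with $K$ complex conjugation in the computational basis. Then $JX_vJ^{-1}=-X_v$ for every $v$, and $JZ_uZ_wJ^{-1}=-Z_uZ_w$ for every edge. Hence $J$ fixes $iB$ and $iC_G$, and therefore all of $\mathfrak g_X$; it even fixes the algebra generated by the individual terms. Since $J^2=(-1)^{|S|}$ and $QP=(-1)^NPQ$, there are two cases. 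For even $N$, each sector is confined to an $\mathfrak{so}$- or $\mathfrak{sp}$-type algebra. For odd $N$, $J$ exchanges the sectors and ties them together, so $\mathfrak g_X$ embeds in a single $\mathfrak{su}(2^{N-1})$; this is where $r=1$ comes from. For example, $iZ_uZ_w$ with $u,w\in S$ is parity-even but not in $\mathfrak g_X$.

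The weighted half of the statement covers bipartite graphs such as stars and other non-path trees. There, your inclusion $\mathfrak g_X\supseteq\mathfrak{su}(2^{N-1})\oplus\mathfrak{su}(2^{N-1})$ and the ``every parity-even Pauli product'' conclusion both fail. Your own explanation of $r=1$ also contradicts that inclusion, which would force $r=2$. The target must be weakened to simple ideals of $\mathfrak{su}$, $\mathfrak{so}$ or $\mathfrak{sp}$ type of dimension $\Theta(4^N)$.

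The extraction mechanism in Step~2 also fails as described, even where $\mathfrak{su}\oplus\mathfrak{su}$ is the right answer. With $A_v=\sum_{u\sim v}w_{uv}Z_u$, one has $\mathrm{ad}_{C_G}^{2k}(B)=4^k\sum_vX_vA_v^{2k}$. Linear combinations therefore give only $\sum_vX_v\,p(A_v^2)$, and isolating a vertex needs $\operatorname{spec}(A_v^2)$ not to be covered by the other vertices' spectra. In the unweighted case $\operatorname{spec}(A_v)=\{-d_v,-d_v+2,\dots,d_v\}$ depends only on the degree. In $G(N,1/2)$ the degrees are \emph{not} distinct with high probability, since all $N$ of them lie in a window of width $O(\sqrt{N\log N})$. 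Neighbourhood distinctness plays no role in this functional calculus. Even when separation succeeds, you obtain only dressed pieces $X_vf(A_v^2)$, and you give no route to bare $X_v$.

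The ``symmetric'' extraction of edges from $\mathrm{ad}_B^k(C_G)$ cannot work at all, because $B$ is uniform. Indeed $\mathrm{ad}_B^k(C_G)=\sum_{(u,v)}w_{uv}\,\mathrm{ad}_{X_u+X_v}^k(Z_uZ_v)$ has identical form on every edge. Edges must instead be isolated by mixed commutators that exploit the disorder in $C_G$, for example $[X_w,[X_v,C_G]]\propto w_{vw}Y_vY_w$ once a bare $X_v$ is available. The generation step, which is the actual content of Mao et al.'s theorems, is therefore still missing.
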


\paragraph{Grover-mixer QAOA.}
Let $C=\sum_a\lambda_a P_a$ be a diagonal cost Hamiltonian, with distinct eigenvalues $\lambda_a$ and spectral projectors $P_a$.  For a pure reference state $\ket{\xi}$, define
\begin{equation}
    \Pi_\xi=\ket{\xi}\!\bra{\xi},
    \qquad
    \mathcal K=\operatorname{span}\{P_a\ket{\xi}:P_a\ket{\xi}\ne0\},
    \qquad
    d=\dim\mathcal K\ge2.
\end{equation}
Thus $d$ counts distinct cost values represented in the reference state.  Relabel the supported eigenvalues as $\lambda_1,\ldots,\lambda_d$.

\begin{theorem}[Grover-mixer algebra and variance; Tsvelikhovskiy et al.]
\label{thm:app_grover_variance}
For $\mathfrak g_\xi=\operatorname{Lie}_{\mathbb R}\{i\Pi_\xi,iC\}$, the restriction to $\mathcal K$ is $\mathfrak u(d)$, and the simple component is $\mathfrak{su}(d)$, of dimension $d^2-1$.  The full algebra has the decomposition
\begin{equation}
    \mathfrak g_\xi
    \cong
    \mathfrak u(d)\oplus
    \operatorname{span}_{\mathbb R}\{iC|_{\mathcal K^\perp}\}.
\end{equation}
Its dimension is $d^2+1$ when $C|_{\mathcal K^\perp}\ne0$, and $d^2$ otherwise.  Let $\mathcal G_\xi$ be the corresponding dynamical Lie group.  For $\rho_\xi=\Pi_\xi$,
\begin{equation}
    \operatorname{Var}_{\mathcal G_\xi}
    \!\left[
      \operatorname{Tr}(U\rho_\xi U^\dagger C)
    \right]
    =
    \frac{1}{d(d+1)}
    \sum_{a=1}^{d}(\lambda_a-\bar\lambda)^2,
    \qquad
    \bar\lambda=\frac1d\sum_{a=1}^{d}\lambda_a.
\end{equation}
Here each supported eigenvalue is counted once, irrespective of its degeneracy \cite{tsvelikhovskiy2025provable}.
\end{theorem}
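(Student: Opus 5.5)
We prove the statement in two parts. First we identify $\mathfrak g_\xi$ by splitting the Hilbert space as $\mathcal K\oplus\mathcal K^\perp$. Then we read off the variance from Theorem~\ref{thm:app_dla_variance}.

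\textbf{Invariant splitting.} Since $\ket{\xi}=\sum_a P_a\ket{\xi}$, we have $\ket{\xi}\in\mathcal K$, so $\Pi_\xi$ is supported on $\mathcal K$ and vanishes on $\mathcal K^\perp$. Each $P_a\ket{\xi}$ is an eigenvector of $C$, so $C$ preserves both $\mathcal K$ and $\mathcal K^\perp$. Both generators are therefore block diagonal.

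Choose the orthonormal basis $u_a=P_a\ket{\xi}/\norm{P_a\ket{\xi}}$ of $\mathcal K$, for $a=1,\dots,d$. In this basis $C|_{\mathcal K}=\operatorname{diag}(\lambda_1,\dots,\lambda_d)$ with distinct entries. We also have $\ket{\xi}=\sum_a\sqrt{w_a}\,u_a$ with every $w_a>0$. Hence every off-diagonal matrix element of $\Pi_\xi$ in this basis, $\sqrt{w_aw_b}$, is nonzero.

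\textbf{Generating $\mathfrak{su}(d)$ on $\mathcal K$.} This is the step I expect to be the main obstacle. The plan is to take iterated commutators $\operatorname{ad}_{iC}^k(i\Pi_\xi)$. In the basis above, the $(a,b)$ entry of $\operatorname{ad}_{C}^k\Pi_\xi$ is $(\lambda_a-\lambda_b)^k\sqrt{w_aw_b}$.

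Because the $\lambda_a$ are distinct, the pairwise differences $\lambda_a-\lambda_b$ take a finite set of values. A Vandermonde inversion over $k$ isolates the components belonging to each value of $(\lambda_a-\lambda_b)^2$, separately for even and odd $k$. Some differences may coincide for different pairs $(a,b)$. If so, I would follow the argument of Tsvelikhovskiy et al.: take further commutators of the isolated pieces with $i\Pi_\xi$ and with each other to separate the pairs. The target is to obtain, for every pair $a\ne b$, the elements $E_{ab}-E_{ba}$ and $i(E_{ab}+E_{ba})$. Together with their commutators (which give the traceless diagonal) these span $\mathfrak{su}(d)$.

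\textbf{Upgrading to the full algebra.} Next, $i\Pi_\xi$ has trace $1$ on $\mathcal K$, so adding it to $\mathfrak{su}(d)$ gives the full $\mathfrak u(d)$ acting on $\mathcal K$. The component $iC|_{\mathcal K}$ lies in this $\mathfrak u(d)$. Subtracting it from $iC$ shows that $iC|_{\mathcal K^\perp}\in\mathfrak g_\xi$.

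Every commutator vanishes on $\mathcal K^\perp$, because $\Pi_\xi$ is zero there and the only other generator is $C$. So nothing beyond the span of $iC|_{\mathcal K^\perp}$ acts on $\mathcal K^\perp$. This gives the decomposition
\[
\mathfrak g_\xi\cong\mathfrak u(d)\oplus\operatorname{span}_{\mathbb R}\{iC|_{\mathcal K^\perp}\},
\]
of dimension $d^2+1$ when $C|_{\mathcal K^\perp}\ne0$ and $d^2$ otherwise. The only simple ideal is $\mathfrak{su}(d)$, of dimension $d^2-1$.

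\textbf{Variance.} Since $\rho_\xi=\Pi_\xi\in i\mathfrak g_\xi$, Theorem~\ref{thm:app_dla_variance} applies with the single simple summand $\mathfrak{su}(d)$. The projections onto $i\,\mathfrak{su}(d)$ are the traceless parts on $\mathcal K$:
\[
\rho_\alpha=\Pi_\xi-\Id_{\mathcal K}/d,
\qquad
O_\alpha=C|_{\mathcal K}-\bar\lambda\,\Id_{\mathcal K}.
\]
These give
\[
\operatorname{Tr}(\rho_\alpha^2)=1-\frac1d,
\qquad
\operatorname{Tr}(O_\alpha^2)=\sum_{a=1}^{d}(\lambda_a-\bar\lambda)^2.
\]
Each supported eigenvalue appears exactly once here because $\dim\mathcal K=d$. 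Dividing by $d^2-1$ gives
\[
\frac{(d-1)/d}{d^2-1}\sum_{a=1}^{d}(\lambda_a-\bar\lambda)^2
=\frac{1}{d(d+1)}\sum_{a=1}^{d}(\lambda_a-\bar\lambda)^2,
\]
which is the stated variance.
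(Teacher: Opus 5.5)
The paper gives no proof of this statement; it imports both the algebra and the variance from Tsvelikhovskiy et al. Most of your argument is correct: the splitting $\mathcal K\oplus\mathcal K^\perp$, the vanishing of all commutators on $\mathcal K^\perp$, the passage from $\mathfrak{su}(d)$ to $\mathfrak u(d)\oplus\operatorname{span}_{\mathbb R}\{iC|_{\mathcal K^\perp}\}$, and the variance evaluation through Theorem~\ref{thm:app_dla_variance}, with $\operatorname{Tr}\rho_\alpha^2=1-1/d$ and $\operatorname{Tr}O_\alpha^2=\sum_a(\lambda_a-\bar\lambda)^2$. The gap is the step you flag yourself, and it is the real content of the theorem. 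For each value $c>0$ of $|\lambda_a-\lambda_b|$, Vandermonde inversion in $\operatorname{ad}_{iC}$ isolates only the sum of the couplings $\sqrt{w_aw_b}$ over all pairs at that distance. With $\lambda=(0,1,2)$, for instance, the pairs $(0,1)$ and $(1,2)$ appear only jointly. Saying that further commutators ``separate the pairs'' is not an argument, and deferring that step to Tsvelikhovskiy et al.\ is circular. This is not an edge case: for integer-valued objectives such as MaxCut, the supported eigenvalues are typically consecutive integers. As written, your proof covers only the generic case in which all the $|\lambda_a-\lambda_b|$ are distinct.

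One way to close the gap, with no condition on the differences, uses irreducibility and the rank-one structure of $\Pi_\xi$. Let $\mathfrak h=\mathfrak g_\xi|_{\mathcal K}\oplus i\,\mathfrak g_\xi|_{\mathcal K}$ be the complex Lie algebra generated by $A=\Pi_\xi|_{\mathcal K}$ and $D=C|_{\mathcal K}$. First, $\mathcal K$ is irreducible under $\mathfrak h$. Because $D$ has simple spectrum, a nonzero invariant subspace is spanned by some of the $u_a$. It therefore contains $A u_a=\sqrt{w_a}\,\ket{\xi}$, and since every coefficient of $\ket{\xi}$ is nonzero, it contains every $u_a$. Second, $A^2=A$ gives $\tfrac12\operatorname{ad}_A(\operatorname{ad}_A-1)X=(\Id-A)XA$. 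Hence $\ket{\eta}\!\bra{\xi}\in\mathfrak h$ for every $\eta$ in $\mathcal V=\{(\Id-A)X\ket{\xi}:X\in\mathfrak h\}$. Applying the same projection to $[Y,\ket{\eta}\!\bra{\xi}]$ shows $(\Id-A)Y\ket{\eta}\in\mathcal V$. So $\mathcal V\oplus\mathbb C\ket{\xi}$ is $\mathfrak h$-invariant, and irreducibility forces $\mathcal V=\xi^\perp\cap\mathcal K$. Since $\mathfrak h$ is closed under adjoints, also $\ket{\xi}\!\bra{\zeta}\in\mathfrak h$ for $\zeta\in\mathcal V$. The identity $[\ket{\eta}\!\bra{\xi},\ket{\xi}\!\bra{\zeta}]=\ket{\eta}\!\bra{\zeta}-\braket{\zeta}{\eta}A$ supplies the remaining operators, so $\mathfrak h=\mathfrak{gl}(\mathcal K)$. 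Comparing dimensions gives $\mathfrak g_\xi|_{\mathcal K}=\mathfrak u(d)$. Finally, $[\mathfrak g_\xi,\mathfrak g_\xi]$ vanishes on $\mathcal K^\perp$ and restricts to $\mathfrak{su}(d)$ on $\mathcal K$, so $\mathfrak{su}(d)$ extended by zero lies in $\mathfrak g_\xi$, and the rest of your argument goes through unchanged.
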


When $C$ encodes an integer-valued objective of the form
\begin{equation}
    f(x)=\sum_{a=1}^{T}f_a(x_{S_a}),
    \qquad
    |S_a|=s,
    \qquad
    T\le\binom Ns,
    \qquad
    |f_a|\le M_{\mathrm{loc}},
\end{equation}
with $C\ket{x}=f(x)\ket{x}$, Theorem IV.2 of the same work gives
\begin{equation}
    \operatorname{Var}_{\mathcal G_\xi}
    \!\left[
      \operatorname{Tr}(U\rho_\xi U^\dagger\widehat C)
    \right]
    \ge\frac{(s!)^2}{12M_{\mathrm{loc}}^2N^{2s}},
    \qquad
    \widehat C=\frac{C}{\norm C},
\end{equation}
where $\norm C$ is the operator norm.  For fixed $s$ and polynomially bounded $M_{\mathrm{loc}}$, this is an inverse-polynomial lower bound.  The unweighted MaxCut objective counting cut edges has $s=2$ and $M_{\mathrm{loc}}=1$, giving $1/(3N^4)$.

The complementary projector used by LP-QAOA satisfies
\begin{equation}
    e^{-i\beta(\Id-\Pi_\xi)}
    =e^{-i\beta}e^{i\beta\Pi_\xi}.
\end{equation}
It therefore produces the same expectation-value landscape after reversing the mixer angle; the global phase has no effect.

\section{Resource accounting and numerical methods}
\label{app:resource}

\subsection{Full state-vector simulation}

All reported probabilities were computed from full state vectors after classical angle optimization.  BCST simulations used a complete product basis containing only states that satisfy each block-cardinality constraint; this basis contains every state reached by the corresponding block-XY mixer.  SBM simulations used complete binary state vectors over the selected spin variables.

Diagonal phase separators were applied pointwise, and fixed transverse-field and XY mixers were applied through their exact actions in the chosen basis.  The LP mixer was applied as
\begin{equation}
    e^{-i\beta R_\Phi}\ket{\psi}
    =
    e^{-i\beta}\ket{\psi}
    +(1-e^{-i\beta})\ket{\Phi}\braket{\Phi}{\psi},
\end{equation}
which reproduces the ideal circuit action in Eq.~\eqref{eq:history_circuit}.

\subsection{Resource accounting and metrics}

This section specifies resource units (RU), repetitions to solution ($\mathrm{RTS}_{99}$) for a 99\% target-observation probability, and the total-cost calculation.  Because $\mathrm{RTS}_{99}$ is nonlinear in success probability, the aggregation convention is stated for each comparison below.

Let $P_0\ket{0^N}=\ket{\Phi_0}$ and $r_0=\RU(P_0)$.  We use $\mathrm{MCP}_N(\beta)$ for the $N$-qubit parameterized multi-controlled phase rotation that implements the selective phase in Eq.~\eqref{eq:history_circuit}, and write $\RU(\mathrm{MCP}_N)$ for its angle-independent resource proxy.  Define the complete preparation through stage $j$ as
\[
    A_j=U_j\cdots U_1P_0,
    \qquad a_j=\RU(A_j),
    \qquad a_0=r_0.
\]
If $A_{j-1}$ prepares $\ket{\Phi_{j-1}}$, then
\begin{equation}
    e^{-i\beta R_{\Phi_{j-1}}}
    =
    A_{j-1}e^{-i\beta(\Id-\ket{0^N}\!\bra{0^N})}A_{j-1}^{\dagger}.
\end{equation}
The corresponding LP-mixer layer has logical resource cost
\begin{equation}
    \RU(R_{\Phi_{j-1}})=
    2a_{j-1}+\RU(\mathrm{MCP}_N).
    \label{eq:history_cost}
\end{equation}
The complete preparation cost obeys
\begin{equation}
    a_j=
    \begin{cases}
    a_{j-1}+p_j\{\RU(C_j)+\RU(M_j)\}, & \text{fixed mixer},\\
    (1+2p_j)a_{j-1}+p_j\{\RU(C_j)+\RU(\mathrm{MCP}_N)\}, & \text{LP mixer}.
    \end{cases}.
    \label{eq:resource_recurrence}
\end{equation}
Let $\mathcal T$ be the prescribed target set and let $V_{\mathcal T}$ denote the terminal evaluation used to determine whether a measured configuration belongs to $\mathcal T$.  The reported per-shot cost is
\begin{equation}
    \RU_{\mathcal T}(A_j)=a_j+\tau_{\mathcal T},
    \qquad
    \tau_{\mathcal T}=\RU(V_{\mathcal T}).
    \label{eq:reported_ru}
\end{equation}
This terminal evaluation is applied once after state preparation and is not part of any replayed circuit.
If $\mathcal H$ denotes the set of stages that use an LP mixer, including $P_0$ changes the per-shot logical cost by
\begin{equation}
    \Delta\RU_{\mathrm{LP}}
    =r_0\prod_{j\in\mathcal H}(1+2p_j).
    \label{eq:initial_state_sensitivity}
\end{equation}
For a two-stage circuit with one LP-mixer stage, Eq.~\eqref{eq:initial_state_sensitivity} reduces to $(2p_2+1)r_0$, whereas a direct circuit receives the one-time correction $r_0$.  The multilevel BCST circuit contains two successive LP-mixer stages and is evaluated with the full recurrence in Eq.~\eqref{eq:resource_recurrence}; its explicit resource expressions are given in \suppref{app:bcst_multilevel_resources}.  The SBM calculations assign zero cost to the initial $\ket{+}^{\otimes N}$ preparation.  The BCST calculations assign one unit to each of the five initial block-state preparations and include this cost in every replay.

For the additional two-level BCST experiments, let $P_{\mathrm{blk}}$, $C$, $X$ and $O$ denote the resource costs of the initial block-state preparation, conflict phase separator, block-XY mixer and final-objective phase separator, respectively, and let $S=\RU(\mathrm{MCP}_N)$.  The RU model assigns $\RU(\mathrm{MCP}_N)=N^2$.  The fixed resource counts are $(P_{\mathrm{blk}},C,X,O,S)=(5,25,50,3{,}458,625)$ at $N=25$ and $(5,30,75,12{,}110,900)$ at $N=30$.  If $m_4$ and $m_6$ are the numbers of degree-four and degree-six objective terms, then $O=14m_4+22m_6$, because these degree-$k$ diagonal terms are assigned $4k-2$ RU\@.  A block-XY mixer on block sizes $(n_1,\ldots,n_m)$ is assigned
\begin{equation}
    \RU(M_{\mathrm{XY}})=\sum_{b=1}^{m}\binom{n_b}{2},
\end{equation}
one unit for each pairwise XY generator.  With $P_1=P_{\mathrm{blk}}+p_1(C+X)$ and $\tau_{\mathcal T}=O$, the per-shot costs used in the BCST tables are
\begin{align}
    \RU_{\mathrm{Stage\text{-}1}}&=P_1+O,\\
    \RU_{\mathrm{LP}}(p)&=P_1+p(O+2P_1+S)+O,\\
    \RU_{\mathrm{warm\text{-}XY}}(p)&=P_1+p(C+O+X)+O,\\
    \RU_{\mathrm{block\text{-}XY}}(p)&=P_{\mathrm{blk}}+p(C+O+X)+O.
    \label{eq:bcst_current_resource_counts}
\end{align}
The block-XY (combined phase) and block-XY (separate phases) baselines use the same resource expression and differ only in their phase-angle parameterization.  Resources for the controlled mixer, reference-state and objective ablations follow the substitutions listed in \suppref{app:bcst_causal}.

For $N=25$, $O=225(14)+14(22)=3{,}458$ and $P_1=5+12(25+50)=905$, so Eq.~\eqref{eq:bcst_current_resource_counts} becomes
\begin{align*}
    \RU_{\mathrm{Stage\text{-}1}}&=4{,}363,&
    \RU_{\mathrm{LP}}(p)&=4{,}363+5{,}893p,\\
    \RU_{\mathrm{warm\text{-}XY}}(p)&=4{,}363+3{,}533p,&
    \RU_{\mathrm{block\text{-}XY}}(p)&=3{,}463+3{,}533p.
\end{align*}
For $N=30$, $O=788(14)+49(22)=12{,}110$ and $P_1=5+12(30+75)=1{,}265$, giving
\begin{align*}
    \RU_{\mathrm{Stage\text{-}1}}&=13{,}375,&
    \RU_{\mathrm{LP}}(p)&=13{,}375+15{,}540p,\\
    \RU_{\mathrm{warm\text{-}XY}}(p)&=13{,}375+12{,}215p,&
    \RU_{\mathrm{block\text{-}XY}}(p)&=12{,}115+12{,}215p.
\end{align*}

For SBM, the logical RU model assigns one unit to each two-body diagonal term and one unit to each transverse-field mixer term in a standard QAOA layer.  The terminal evaluation has cost $\tau_{\mathrm{SBM}}=\RU(C_{\mathrm{fine}})=\binom{N}{2}$.  Thus, a standard circuit using either the fine Hamiltonian or the LP stage-1 objective at depth $p$ has
\begin{equation}
    \RU_{\mathrm{direct}}(p)=p\{\RU(C)+N\}+\tau_{\mathrm{SBM}},
\end{equation}
where $C$ is the coarse or fine pairwise Hamiltonian.  In the resource accounting used here, $r_0=0$. A soft LP-QAOA circuit with depths $(p_1,p_2)$ includes the first-stage circuit once and each final learned-projector layer from Eq.~\eqref{eq:history_cost} together with the final diagonal phase:
\begin{equation}
    \RU_{\mathrm{LP}}
    =
    \RU(U_1)+p_2\{\RU(C_{\mathrm{fine}})+2\RU(U_1)+N^2\}+\tau_{\mathrm{SBM}}.
    \label{eq:soft_ru}
\end{equation}
Here $\RU(U_1)=p_1\{\RU(C_{\mathrm{coarse}})+N\}$.  A warm-start circuit has $\RU_{\mathrm{warm}}=\RU(U_1)+p_2\{\RU(C_{\mathrm{fine}})+N\}+\tau_{\mathrm{SBM}}$.  At $N=24$, both pairwise Hamiltonians contain $\binom{24}{2}=276$ terms.  The resulting per-shot costs are $1{,}476$ RU for LP stage-1 objective QAOA and standard QAOA at $p=4$, $2{,}076$ and $2{,}676$ RU for standard QAOA at $p=6$ and $p=8$, $1{,}776$ RU for warm-start QAOA at $(4,1)$, and $4{,}728$ RU for LP-QAOA at $(4,1)$.

For a final state $\ket{\psi}$ and a prescribed target set $\mathcal T$,
\begin{equation}
    P_{\mathrm{succ}}(\mathcal T)=\sum_{z\in\mathcal T}|\braket{z}{\psi}|^2.
\end{equation}
Setting the required target-observation probability to $P_{\mathrm{target}}=0.99$ gives
\begin{equation}
    \mathrm{RTS}_{99}=
    \begin{cases}
    \infty, & P_{\mathrm{succ}}=0,\\
    \left\lceil
    \dfrac{\log(1-P_{\mathrm{target}})}
          {\log(1-P_{\mathrm{succ}})}
    \right\rceil, & 0<P_{\mathrm{succ}}<1,\\
    1, & P_{\mathrm{succ}}=1,
    \end{cases}
    \qquad
    \mathrm{cost}=\mathrm{RTS}_{99}\times\mathrm{RU}.
\end{equation}

For the SBM comparison, we summarize repetition cost with the continuous relaxation
\begin{equation}
    \mathcal C_{99}^{\mathrm{cont}}
    =
    \RU\max\!\left\{1,\frac{\log(0.01)}{\log(1-P_{\mathrm{succ}})}\right\},
    \label{eq:sbm_continuous_cost}
\end{equation}
which evaluates the 99\%-success repetition formula before integer rounding. Its endpoint values are defined by continuity: $+\infty$ at $P_{\mathrm{succ}}=0$ and $\RU$ at $P_{\mathrm{succ}}=1$.

\subsection{Aggregation and comparison rules}

All comparisons within a final-objective realization or problem instance used the same enumerated target set.  The primary BCST outcome is unique-optimum total cost, $\mathrm{RTS}_{99}\times\mathrm{RU}$.  In the multilevel BCST study, the four optimizer starts were first aggregated by geometric mean within each final-objective realization; geometric means and bootstrap intervals were then calculated over the ten realization-level values.  In the additional two-level BCST experiments, for each method and instance we selected the lowest-energy circuit across the stated restarts. We then calculated target probabilities, $\mathrm{RTS}_{99}$ and total costs and summarized each outcome over the stated instance set by a geometric mean.  Outcomes for the sets containing the two and eight lowest-cost states are sensitivity analyses for this additional benchmark.  For each SBM parameter setting, nine graph instances were summarized by medians and paired win counts; the instance with disjoint coarse and fine ground-state sets was analysed separately as a coarse--fine mismatch control.  Source Data provide final-objective realizations, graph definitions, optimization settings and individual outcomes.

\section{Multilevel BCST benchmark and numerical experiments}
\label{app:bcst_multilevel}

\subsection{Benchmark construction}

For a general BCST instance, let $V$ denote the sites, let $\mathcal L$ denote the labels and let $k_v$ be the number of labels selected at site $v$.  A binary variable $x_{vc}$ records whether label $c$ is selected at site $v$, and the block-cardinality search space is
\[
    \mathcal X_{\mathbf k}
    =\left\{x\in\bits^{|V|\times|\mathcal L|}:\sum_{c\in\mathcal L}x_{vc}=k_v\ \text{for every }v\in V\right\}.
\]
The multilevel instances studied here use five sites $v\in\{0,\ldots,4\}$, six channel labels $c\in\{0,\ldots,5\}$ and $k_v=2$ for every site.  Writing this search space as $\mathcal X_2$ gives
\begin{equation}
    |\mathcal X_2|=\binom{6}{2}^{5}=759{,}375.
    \label{eq:bcst_multilevel_space}
\end{equation}
The five sites form the cycle $C_5$.  Neighbouring sites may not select the same channel, and the network-wide assignment counts must equal $q=(2,2,2,2,1,1)$.  These two constraints are represented by
\begin{align}
    C(x)&=\sum_{\{u,v\}\in E(C_5)}\sum_{c=0}^{5}x_{uc}x_{vc},
    \label{eq:bcst_multilevel_conflict}\\
    Q(x)&=\sum_{c=0}^{5}\left(\sum_{v=0}^{4}x_{vc}-q_c\right)^2.
    \label{eq:bcst_multilevel_quota}
\end{align}
Within $\mathcal X_2$, the zero-energy spaces of $Q$ and $C$ contain 4,530 and 6,570 configurations, respectively, and their intersection
\begin{equation}
    \mathcal F=\{x\in\mathcal X_2:Q(x)=C(x)=0\}
    \label{eq:bcst_multilevel_feasible_set}
\end{equation}
contains 390 configurations.

The final objective combines a site--channel assignment cost with a penalty for third-order intermodulation (IM3) products whose frequencies coincide with assigned channels.  For each channel $c$, define the pairs of source channels that produce an IM3 product at $c$,
\begin{align*}
\mathcal G_0&=\{(1,2),(2,4)\}, &
\mathcal G_1&=\{(2,3),(3,5)\}, &
\mathcal G_2&=\{(0,1),(3,4)\},\\
\mathcal G_3&=\{(1,2),(4,5)\}, &
\mathcal G_4&=\{(0,2),(2,3)\}, &
\mathcal G_5&=\{(1,3),(3,4)\},
\end{align*}
and let
\begin{align}
    L_{vc}(x)
    &=\sum_{u\in V\setminus\{v\}}\sum_{(a,b)\in\mathcal G_c}x_{ua}x_{ub},
    \label{eq:bcst_multilevel_im3_count}\\
    O_s(x)
    &=256\sum_{v=0}^{4}\sum_{c=0}^{5}x_{vc}L_{vc}(x)^2
      +\sum_{v=0}^{4}\sum_{c=0}^{5}A^{(s)}_{vc}x_{vc}.
    \label{eq:bcst_multilevel_objective}
\end{align}
Here $A^{(s)}$ is the site--channel assignment-cost matrix for final-objective realization $s$.

The binary occupations have a direct frequency-assignment interpretation (Fig.~\ref{fig:bcst_frequency_assignment}).  A site $v$ represents a radio site, and $x_{vc}=1$ assigns channel $c$ to it; the block-cardinality condition fixes the number of channels assigned to each site.  The conflict cost $C$ excludes reuse of a channel at adjacent sites, whereas $Q$ enforces the prescribed network-wide number of assignments of each channel.  Among assignments satisfying both rules, $A^{(s)}_{vc}x_{vc}$ gives the cost of assigning channel $c$ to site $v$.  If channel $c$ is assigned to site $v$, $L_{vc}$ counts prescribed channel pairs at other sites that generate IM3 products whose frequencies coincide with channel $c$.  The term $x_{vc}L_{vc}^{2}$ assigns a quadratic penalty to this count.  The construction therefore separates the per-site channel counts, adjacent-site exclusion and network-wide assignment quotas from the final objective, which combines assignment costs with penalties for intermodulation interference.

\begin{figure}[!htbp]
\centering
\includegraphics[width=0.98\linewidth]{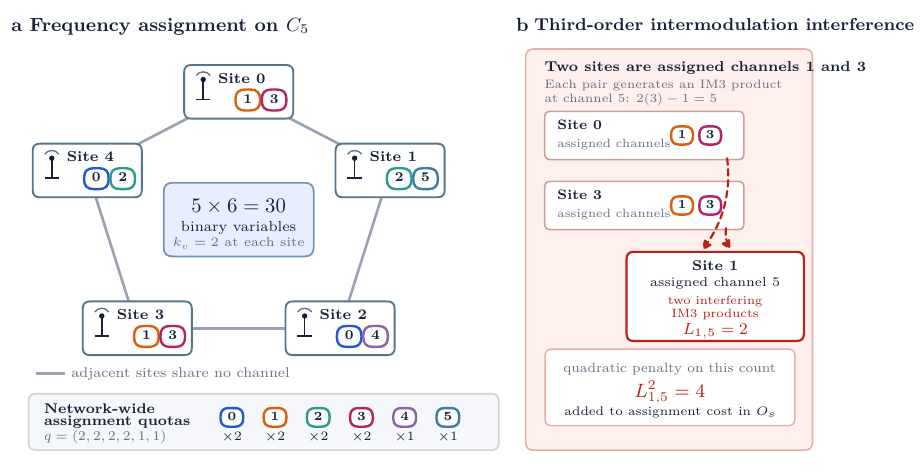}
\caption{\textbf{Frequency-assignment interpretation of BCST.} \textbf{a}, Five sites are assigned two of six channels subject to adjacent-site exclusion and network-wide assignment quotas. \textbf{b}, The channel pair $\{1,3\}$ assigned at each of Sites 0 and 3 generates a third-order intermodulation (IM3) product at channel 5, which is assigned to Site 1; hence $L_{1,5}=2$ and $L_{1,5}^{2}=4$.}
\label{fig:bcst_frequency_assignment}
\end{figure}
\FloatBarrier

After using $x_{vc}^2=x_{vc}$ and collecting repeated monomials, $O_s$ contains 30, 240 and 720 distinct monomial supports of degrees 1, 3 and 5, respectively. The corresponding diagonal Hamiltonian therefore has locality at most five.  The ten reported $5\times6$ matrices were drawn independently from the integers 1 through 20 using PCG64 seeds 2026091100, 2026091101 and 2026091103--2026091110.  Each produced a unique minimum of $O_s$ over $\mathcal F$.

\subsection{Circuit families and optimization protocol}

Let $\ket{\Phi_0}$ be the uniform superposition over $\mathcal X_2$, and let
\begin{equation}
    M_{\mathrm{XY}}
    =\frac{1}{2}\sum_{v=0}^{4}\sum_{0\le a<b\le5}
    \left(X_{va}X_{vb}+Y_{va}Y_{vb}\right)
    \label{eq:bcst_multilevel_xy_mixer}
\end{equation}
be the complete within-site XY mixer.  For the numerical optimization, we used the scaled diagonals
\begin{equation}
    \widetilde Q=Q/48,
    \qquad
    \widetilde C=C/10,
    \qquad
    \widetilde O_s=O_s/2048,
    \qquad
    D_s=\frac{O_s+1024(C+Q)}{2048}.
    \label{eq:bcst_multilevel_scaled_costs}
\end{equation}
Stage 1 used $M_{\mathrm{XY}}$ and $\widetilde C$ at depth 12, and its optimized circuit was frozen to prepare $\ket{\Phi_1}$.  Stage 2 used the learned-projector mixer $R_{\Phi_1}$ and the unscaled global-occupation penalty $Q$ as both its phase separator and expected-energy loss:
\begin{equation}
    \ket{\psi_2(\theta_2)}=
    \prod_{\ell=14}^{1}
    e^{-i\beta_{2,\ell}R_{\Phi_1}}e^{-i\gamma_{2,\ell}Q}\ket{\Phi_1},
    \qquad
    \mathcal L_2=\bra{\psi_2}Q\ket{\psi_2}.
    \label{eq:bcst_multilevel_feasibility_separator}
\end{equation}
The resulting depth-14 circuit was frozen to prepare $\ket{\Phi_2}$.  At objective-refinement depth $p$, LP-QAOA then prepared
\begin{equation}
    \ket{\psi_s(\theta;p)}=
    \prod_{\ell=p}^{1}
    e^{-i\beta_{\ell}R_{\Phi_2}}
    e^{-i\gamma_{\ell}\widetilde O_s}
    \ket{\Phi_2},
    \label{eq:bcst_multilevel_final_circuit}
\end{equation}
and selected the checkpoint with the lowest expectation value of $D_s$.  Thus, the final phase separator contains only the weighted objective, whereas the classical loss also penalizes violations of the adjacency and channel-assignment constraints.  Because $C$, $Q$ and $\mathcal F$ do not depend on $A^{(s)}$, the first two frozen preparations were shared across all ten final-objective realizations.

Table~\ref{tab:bcst_multilevel_families} gives the circuit families used in the main comparison and ablation tests.  Separate phases assign an independent angle to each listed diagonal within a layer.  Two-stage LP-QAOA uses the unscaled $Q$, either separately from $\widetilde O_s$ or in the combined phase $Q+\widetilde O_s$; combined-phase block-XY circuits use $D_s$.  All final-objective optimization uses the expected-energy loss $\langle D_s\rangle$.

\begin{table}[!htbp]
\centering
\caption{Circuit families in the multilevel BCST study.}
\label{tab:bcst_multilevel_families}
\small
\begin{tabularx}{\textwidth}{@{}>{\raggedright\arraybackslash}p{0.22\textwidth}>{\raggedright\arraybackslash}p{0.23\textwidth}>{\raggedright\arraybackslash}X>{\raggedright\arraybackslash}p{0.18\textwidth}@{}}
\toprule
Construction & Input state & Active phase separators & Active mixer \\
\midrule
LP-QAOA & $\ket{\Phi_2}$ & $\widetilde O_s$ & $R_{\Phi_2}$ \\
LP-QAOA (two stages, separate phases) & $\ket{\Phi_1}$ & $Q$ and $\widetilde O_s$, separate & $R_{\Phi_1}$ \\
LP-QAOA (two stages, combined phase) & $\ket{\Phi_1}$ & $Q+\widetilde O_s$, combined & $R_{\Phi_1}$ \\
Block-XY QAOA (separate phases) & $\ket{\Phi_0}$ & $\widetilde Q$, $\widetilde C$ and $\widetilde O_s$, separate & $M_{\mathrm{XY}}$ \\
Block-XY QAOA (combined phase) & $\ket{\Phi_0}$ & $D_s$, combined & $M_{\mathrm{XY}}$ \\
\addlinespace[3pt]
No joint-feasibility stage & $\ket{\Phi_1}$ & $\widetilde O_s$ & $R_{\Phi_1}$ \\
Warm-start block-XY & $\ket{\Phi_1}$ & $D_s$, combined & $M_{\mathrm{XY}}$ \\
Uniform-state projector & $\ket{\Phi_0}$ & $\widetilde Q$, $\widetilde C$ and $\widetilde O_s$, separate & $R_{\Phi_0}$ \\
\bottomrule
\end{tabularx}
\end{table}

The frozen depth-12 conflict preparation was obtained with Adam at a learning rate of 0.035 and 4,800 objective evaluations per optimizer start, including the initial evaluation.  The four preparations with seeds 2026081100--2026081103 were paired with the subsequent optimizer starts.  Stage 2 used 800 objective evaluations per start, including the initial evaluation, with uniform $[-\pi,\pi]$ initial angles and checkpoint selection by minimum expected $Q$.  Every nonzero-depth refinement circuit used Adam with a learning rate of 0.035, 2,400 objective evaluations (2,399 updates) and single-precision complex state vectors.  All three LP-QAOA variants used refinement depths $p\in\{1,2,4,6,8,10,12,14,16,24,32,40,48\}$ on each of the ten realizations; three-stage LP-QAOA also included $p=11$.  The block-XY baselines and ablation controls used $p\in\{1,2,4,8,16,32,64\}$, with additional $p=6,10,12,14$ for separate-phase block-XY.  Every nonzero-depth point was optimized independently, without angle transfer between depths.  For each non-warm-start family, one uniform $[-\pi,\pi]$ angle array was generated through its maximum depth and truncated for smaller depths.  Warm-start block-XY used deterministic $\pm10^{-3}$ initial angles.  The $p=0$ point is the corresponding unoptimized input state, with its preparation and terminal objective-evaluation costs included.

The four optimizer starts are repeated searches within a final-objective realization.  We first took their geometric mean for each realization and then calculated geometric means over realizations.  The intervals in Fig.~\ref{fig:bcst_main_comparison}a,b and Fig.~\ref{fig:bcst_component_tests} were obtained from 20,000 nonparametric bootstrap resamples of the ten realization-level values.  Each method's nonzero depth was selected by minimum geometric-mean total cost over four starts in the depth sweep and then used for all ten realizations.  Both the depth curves and the selected-depth summaries use all ten realizations.

\subsection{Logical resource expressions}
\label{app:bcst_multilevel_resources}

The multilevel BCST resource model assigns $P=5$ RU to the block-cardinality preparation, $X=75$ RU to the complete block-XY mixer and $S=30^2=900$ RU to a selective phase.  Each quadratic occupation-product phase is assigned one RU, including all terms in $C$ and $Q$, giving $C=30$ RU\@.  Unary and higher-order occupation-product phases retain the assignment $4k-2$ RU for degree $k\ne2$.  Expanding $Q$ gives 30 unary and 60 quadratic supports, hence
\begin{equation}
    \RU(Q)=30(2)+60(1)=120.
    \label{eq:bcst_multilevel_quota_ru}
\end{equation}
The conflict supports are contained in the quadratic support set of $Q$, so the resource cost assigned to the union of the supports of $Q$ and $C$ is also 120 RU\@.  The final objective costs
\begin{equation}
    O=30(2)+240(10)+720(18)=15{,}420\ \mathrm{RU}.
    \label{eq:bcst_multilevel_objective_ru}
\end{equation}
Combining the constraint and objective supports gives
\begin{equation}
    J_{QCO}=30(2)+60(1)+240(10)+720(18)=15{,}480\ \mathrm{RU},
    \label{eq:bcst_multilevel_union_ru}
\end{equation}
whereas separate $Q$, $C$ and $O_s$ phases cost $120+30+15{,}420=15{,}570$ RU per layer and separate $Q$ and $O_s$ phases cost $120+15{,}420=15{,}540$ RU\@.  A combined $Q+\widetilde O_s$ phase also costs $J_{QCO}$: the 30 unary supports shared by $Q$ and $O_s$ are counted once.  The frozen preparation costs after the conflict and global-occupation stages are
\begin{align}
    P_C&=P+12(C+X)=1{,}265,\label{eq:bcst_multilevel_pc}\\
    P_Q&=(1+2\times14)P_C+14\{\RU(Q)+S\}=50{,}965.\label{eq:bcst_multilevel_pf}
\end{align}
Thus, the per-shot LP-QAOA resource cost is
\begin{equation}
    \RU_{\mathrm{LP}}(p)
    =(1+2p)P_Q+p(O+S)+O
    =66{,}385+118{,}250p.
    \label{eq:bcst_multilevel_lp_ru}
\end{equation}
In particular, $\RU_{\mathrm{LP}}(11)=1{,}367{,}135$.  A synthesis-cost sensitivity calculation assigning every quadratic occupation-product phase six RU sets $C=180$ and $\RU(Q)=420$, giving $\RU_{\mathrm{LP}}(p)=122{,}785+231{,}050p$, including $2{,}664{,}335$ RU at $p=11$.

The remaining multilevel expressions follow from the same resource accounting:
\begin{align*}
\RU_{\mathrm{two\text{-}stage,sep}}(p)
    &=P_C+p\{2P_C+\RU(Q)+O+S\}+O
      =16{,}685+18{,}970p,\\
\RU_{\mathrm{two\text{-}stage,comb}}(p)
    &=P_C+p(2P_C+J_{QCO}+S)+O
      =16{,}685+18{,}910p,\\
\RU_{\mathrm{direct,sep}}(p)
    &=P+p\{\RU(Q)+C+O+X\}+O
      =15{,}425+15{,}645p,\\
\RU_{\mathrm{direct,comb}}(p)
    &=P+p(J_{QCO}+X)+O
      =15{,}425+15{,}555p,\\
\RU_{\mathrm{no\text{-}feasibility}}(p)
    &=P_C+p(2P_C+O+S)+O
      =16{,}685+18{,}850p,\\
\RU_{\mathrm{warm\text{-}XY}}(p)
    &=P_C+p(J_{QCO}+X)+O
      =16{,}685+15{,}555p,\\
\RU_{\mathrm{uniform\text{-}projector}}(p)
    &=P+p\{\RU(Q)+C+O+2P+S\}+O
      =15{,}425+16{,}480p.
\end{align*}
In the reduced $N=18$ standard-QAOA diagnostic, the objective contains 18 unary, 72 cubic, 24 quartic and 84 quintic supports, so its terminal cost is
\begin{equation*}
    O_{18}=18(2)+72(10)+24(14)+84(18)=2{,}604.
\end{equation*}
The penalty-encoded phase adds 63 quadratic supports and therefore costs $2{,}604+63(1)=2{,}667$ RU\@.  Charging 18 RU each for the initial preparation and transverse-field mixer yields
\begin{equation*}
    \RU_{\mathrm{standard},18}(p)=18+p(2{,}667+18)+2{,}604
    =2{,}622+2{,}685p.
\end{equation*}

\begin{table}[!htbp]
\centering
\caption{Per-shot logical resource expressions for the multilevel BCST circuits.}
\label{tab:bcst_multilevel_ru}
\small
\begin{tabular}{@{}lp{0.34\textwidth}@{}}
\toprule
Construction & Resource expression \\
\midrule
LP-QAOA & $66{,}385+118{,}250p$ \\
LP-QAOA (two stages, separate phases) & $16{,}685+18{,}970p$ \\
LP-QAOA (two stages, combined phase) & $16{,}685+18{,}910p$ \\
Block-XY QAOA (separate phases) & $15{,}425+15{,}645p$ \\
Block-XY QAOA (combined phase) & $15{,}425+15{,}555p$ \\
No joint-feasibility stage & $16{,}685+18{,}850p$ \\
Warm-start block-XY & $16{,}685+15{,}555p$ \\
Uniform-state projector & $15{,}425+16{,}480p$ \\
Reduced $N=18$ standard QAOA diagnostic & $2{,}622+2{,}685p$ \\
\bottomrule
\end{tabular}
\end{table}

All expressions include one terminal evaluation of $O_s$.  They quantify circuit execution after angle optimization; the reported $\mathrm{RTS}_{99}\times\RU$ values combine these per-shot costs with the exact final-state success probabilities.

\subsection{Numerical summaries and additional objective realizations}

Table~\ref{tab:bcst_multilevel_selected_depths} gives the numerical values underlying the selected-depth comparisons in the main text.  Each construction retains its previously selected nonzero depth for all ten realizations.

\begin{table}[!htbp]
\centering
\caption{Multilevel BCST results at each construction's selected nonzero depth.}
\label{tab:bcst_multilevel_selected_depths}
\scriptsize
\resizebox{\textwidth}{!}{%
\begin{tabular}{@{}lrrrrc@{}}
\toprule
Construction & $p$ & Per-shot RU & $P_{\mathrm{succ}}$ & $\mathrm{RTS}_{99}\times\RU$ & LP-QAOA paired wins (cost, probability) \\
\midrule
LP-QAOA & 11 & 1,367,135 & $7.983\times10^{-2}$ & $7.638\times10^{7}$ & -- \\
LP-QAOA (two stages, separate phases) & 40 & 775,485 & $3.492\times10^{-2}$ & $1.008\times10^{8}$ & 10/10, 10/10 \\
LP-QAOA (two stages, combined phase) & 12 & 243,605 & $4.248\times10^{-3}$ & $2.635\times10^{8}$ & 10/10, 10/10 \\
Block-XY QAOA (separate phases) & 16 & 265,745 & $1.834\times10^{-3}$ & $6.668\times10^{8}$ & 10/10, 10/10 \\
Block-XY QAOA (combined phase) & 1 & 30,980 & $3.217\times10^{-6}$ & $4.434\times10^{10}$ & 10/10, 10/10 \\
\addlinespace[3pt]
No joint-feasibility stage & 4 & 92,085 & $8.348\times10^{-4}$ & $5.078\times10^{8}$ & 10/10, 10/10 \\
Warm-start block-XY & 32 & 514,445 & $5.324\times10^{-3}$ & $4.439\times10^{8}$ & 10/10, 10/10 \\
Uniform-state projector & 64 & 1,070,145 & $4.091\times10^{-4}$ & $1.204\times10^{10}$ & 10/10, 10/10 \\
\bottomrule
\end{tabular}%
}
\par\vspace{2pt}
\begin{minipage}{0.98\textwidth}\footnotesize
\textit{Notes.} Probabilities and total costs are geometric means over ten final-objective realizations after geometric-mean aggregation over four optimizer starts within each realization.  The last column counts realizations on which LP-QAOA has lower total cost and higher success probability than the listed construction.
\end{minipage}
\end{table}

The depth sweep showed the resource-adjusted depth profile for LP-QAOA (Table~\ref{tab:bcst_multilevel_fresh_depth}).  Success probability increased at every successive displayed depth from $p=8$ through $p=48$, while the additional replay cost placed the broad total-cost minimum at $p=8$--16.

\begin{table}[!htbp]
\centering
\caption{LP-QAOA depth dependence across ten realizations.}
\label{tab:bcst_multilevel_fresh_depth}
\small
\begin{tabular}{@{}rrrr@{}}
\toprule
$p$ & $P_{\mathrm{succ}}$ & Per-shot RU & $\mathrm{RTS}_{99}\times\RU$ \\
\midrule
0 & 0.002429 & 66,385 & $1.257\times10^{8}$ \\
1 & 0.004022 & 184,635 & $2.110\times10^{8}$ \\
2 & 0.008515 & 302,885 & $1.632\times10^{8}$ \\
4 & 0.019740 & 539,385 & $1.248\times10^{8}$ \\
6 & 0.035669 & 775,885 & $9.878\times10^{7}$ \\
8 & 0.053500 & 1,012,385 & $8.532\times10^{7}$ \\
10 & 0.063585 & 1,248,885 & $8.811\times10^{7}$ \\
11 & 0.079832 & 1,367,135 & $7.638\times10^{7}$ \\
12 & 0.084087 & 1,485,385 & $7.873\times10^{7}$ \\
14 & 0.087031 & 1,721,885 & $8.790\times10^{7}$ \\
16 & 0.088631 & 1,958,385 & $9.810\times10^{7}$ \\
24 & 0.092391 & 2,904,385 & $1.393\times10^{8}$ \\
32 & 0.096317 & 3,850,385 & $1.769\times10^{8}$ \\
40 & 0.101417 & 4,796,385 & $2.091\times10^{8}$ \\
48 & 0.103662 & 5,742,385 & $2.444\times10^{8}$ \\
\bottomrule
\end{tabular}
\end{table}

\FloatBarrier

The single-stage separate-phase block-XY circuit's $p=64$ success probability remained below its $p=16$ value and its total cost remained above that of LP-QAOA at $p=11$.  Complete depth curves and table-level outcomes for all circuit families are provided in Source Data.

\subsection{Standard QAOA diagnostic}
\label{app:bcst_standard_qaoa}

For the standard QAOA diagnostic, the full-space simulation was reduced to three sites and 18 binary variables.  The sites form a path, each site is constrained to have two occupied variables, and the quota is $(2,1,1,1,1,0)$, giving 12 jointly feasible configurations.  With $P$ the sum of the site-cardinality, quota and conflict penalties, the phase Hamiltonian and loss used
\begin{equation}
    H=(O_s+377P)/377.
    \label{eq:bcst_vanilla_penalty}
\end{equation}
The coefficient 377 places every infeasible configuration above a feasible witness for all ten final-objective realizations.  At $p=4$, the seven realizations identified in Source Data had geometric-mean success probability $1.503\times10^{-5}$, compared with $3.815\times10^{-6}$ at $p=0$, while total cost increased from $3.165\times10^9$ to $4.093\times10^9$.  The reduced calculation therefore shows shallow probability amplification without a resource advantage under the fixed budget.  Quantitative comparisons among the 30-variable circuits use the fixed-cardinality simulations reported above.

\FloatBarrier

\section{Additional two-level BCST numerical experiments}
\label{app:supplementary_numerics}

\subsection{Hybrid degree-four/degree-six BCST construction}
\label{app:bcst_details}

The BCST instances studied here have five blocks, with exactly two selected labels in each block.  Writing $x_{b,r}\in\{0,1\}$ for label $r$ in block $b$, the block-cardinality and conflict Hamiltonians are
\begin{align}
    C_{\mathrm{card}}(x)
    &=\sum_{b=1}^{5}\left(\sum_{r=1}^{K}x_{b,r}-2\right)^2,\\
    C_{\mathrm{conf}}(x)
    &=\sum_{\{u,v\}\in E_H}x_u x_v.
\end{align}
The block-XY mixer preserves the joint two-excitation subspace.  At $N=25$ ($K=5$), this product subspace has dimension $\binom{5}{2}^{5}=100{,}000$ and contains 120 conflict-feasible states.  At $N=30$ ($K=6$), it has dimension $\binom{6}{2}^{5}=759{,}375$ and contains 6,570 conflict-feasible states.

The final Hamiltonian combines signed four- and six-variable terms,
\begin{equation}
    C_{\mathrm{obj}}^{(s)}(x)
    =\sum_{T\in\mathcal T_{4}^{(s)}}w_T\prod_{u\in T}x_u
     +\sum_{T\in\mathcal T_{6}^{(s)}}w_T\prod_{u\in T}x_u,
    \label{eq:hybrid_bcst_objective}
\end{equation}
with fixed supports and weights for each instance $s$.  Each selected support $T$ receives the deterministic pseudorandom weight $w_T=(2z_T+1-2^{53})/2^{54}\in(-1/2,1/2)$, where $z_T$ is the 53-bit integer extracted from a SHA-256 hash of $(N,s,|T|,T)$.  The construction selects half of the eligible degree-four pool and sets the number of degree-six terms to one sixteenth of the selected degree-four count.  The $N=25$ instances therefore contain 225 degree-four and 14 degree-six terms; the $N=30$ instances contain 788 and 49, respectively.  Every reported instance has a unique conflict-feasible optimum, which is the primary target.  The best two and best eight conflict-feasible states are secondary sensitivity targets.

Here, XY--LP-QAOA denotes two-stage LP-QAOA with a block-XY mixer in stage~1.  The conflict-stage depth is $p_1=12$ at both sizes.  Stage~1 fixes the optimized preparation circuit for $\ket{\phi}$; stage~2 starts from $\ket{\phi}$, alternates the final-objective phase separator with its learned-projector mixer, and optimizes the stage-2 angles.  Stage-1-only measures $\ket{\phi}$ without stage-2 optimization.  In the matched comparison, warm-start block-XY refinement starts from the same learned stage-1 state but uses block-XY mixing with a conflict-plus-objective phase separator and loss.  The controlled ablation in \suppref{app:bcst_causal} separately changes only the mixer at fixed phase and loss.  Block-XY (combined phase) starts from the block-cardinality state and applies one angle $\gamma_\ell$ to the conflict-plus-objective Hamiltonian in each layer.  Block-XY (separate phases) uses the same initial state and mixer but assigns distinct angles $\gamma_{\mathrm{conf},\ell}$ and $\gamma_{\mathrm{obj},\ell}$ to the conflict and final-objective terms.

\subsection{Comparison sets and scope}
\label{app:bcst_cohorts}

Table~\ref{tab:bcst-cohort-registry} lists the instance sets and optimization settings used in each BCST analysis.

\begin{table}[!htbp]
\centering
\caption{BCST comparison sets.  Simultaneous perturbation stochastic approximation (SPSA) was used in the robustness analysis.  Counts refer to problem instances at each listed size.}
\label{tab:bcst-cohort-registry}
\small
\begin{tabularx}{\textwidth}{@{}>{\raggedright\arraybackslash}p{0.18\textwidth}>{\raggedright\arraybackslash}p{0.13\textwidth}>{\raggedright\arraybackslash}p{0.16\textwidth}>{\raggedright\arraybackslash}p{0.22\textwidth}>{\raggedright\arraybackslash}X@{}}
\toprule
Analysis & Size and count & Optimizer & Circuit depths & Role \\
\midrule
Robustness analysis & $N=30$, six & SPSA, five runs & selected and common depths & optimizer and depth sensitivity \\
Fixed-budget comparison & $N=25$, ten & Adam, 400 updates, three restarts & LP $p=12$; warm-start $p=1$ & common-budget comparison \\
BCST validation & 20 per size & Adam, 400 or 800 updates & method-specific depths & method comparison \\
Depth and resource analysis & $N=30$, two & Adam, 800 updates, three restarts & LP $p=1$--$128$ & depth--cost trade-off \\
Initial-state projector control & five per size & Adam, 400 updates & $p=24$ & initial state and projector reference \\
Mixer, reference and objective ablations & three per size & Adam, 800 updates, three restarts & $p=12/24$ & component attribution \\
Gradient diagnostics & $N=30$, six & random circuit angles & $p=26,29,32$ & finite-depth gradients \\
\bottomrule
\end{tabularx}
\end{table}

The robustness analysis used SPSA on two depth-tuning instances and six evaluation instances.  SPSA estimates a full gradient from two objective evaluations along a random perturbation direction.  The optimization protocol combined one run continued from the preceding depth with four independent starts and used 1,291 objective evaluations.  At the selected depths ($p=4$ for XY--LP-QAOA and $p=2$ for both block-XY parameterizations), the geometric-mean unique-optimum cost was lower by factors of 3.14 relative to the lower-cost block-XY parameterization and 1.98 relative to warm-start block-XY\@.  XY--LP-QAOA had lower cost on all six evaluation instances in both comparisons.  Best-two, best-eight and common-depth results are reported as sensitivity analyses.

\subsection{Fixed-budget comparison}
\label{app:bcst_sealed}

To test whether the cost ordering persists under a common fixed Adam budget, we used 400 updates, a learning rate of 0.035, and three independent restarts without continuation.  At $N=25$, XY--LP-QAOA had lower unique-optimum total cost, including stage-1 replay, than warm-start block-XY refinement on all ten instances.  The geometric mean of the per-instance warm-start block-XY/XY--LP-QAOA cost ratios was 6.11.

\begin{table}[t]
\centering
\caption{Fixed-budget comparison across ten $N=25$ instances using 400 Adam updates.  The ratio is the comparator's exact integer unique-optimum total cost divided by that of XY--LP-QAOA.}\label{tab:bcst-adam400-formal-comparisons}
\small
\begin{tabular}{@{}rllcc@{}}
\toprule
$N$ & Comparator & $p_{\rm LP}/p_{\rm comp.}$ & Geometric-mean cost ratio & Lower XY--LP-QAOA cost \\
\midrule
25 & Warm-start block-XY & $12/1$ & 6.11 & 10/10 \\ 
\bottomrule
\end{tabular}
\par\vspace{2pt}
\begin{minipage}{0.98\textwidth}\footnotesize
\textit{Notes.} For each instance, Adam was run for 400 updates with a learning rate of 0.035, three independent restarts, and no continuation.  Warm-start block-XY is initialized from the same stage-1 state as XY--LP-QAOA.
\end{minipage}
\end{table}

\FloatBarrier

\subsection{Matched method comparison}
\label{app:bcst_validation}

LP-QAOA also reduced total cost on the BCST problems with two constraint levels studied in this section.  The validation uses 20 random instances at each size.  Five instances per size used the selected LP-QAOA depths $p=16$ at $N=25$ and $p=96$ at $N=30$; the remaining fifteen used $p=12$ and $p=80$, respectively, while the comparator depths were unchanged.  Figure~\ref{fig:bcst_matched_scaling} reports all 20 instances.  Tables~\ref{tab:five-seed-matched-validation} and~\ref{tab:bcst-five-seed-target-geomeans} summarize five of them, including the best-two and best-eight sensitivity metrics.

\begin{figure}[!htbp]
\centering
\includegraphics[width=0.98\linewidth]{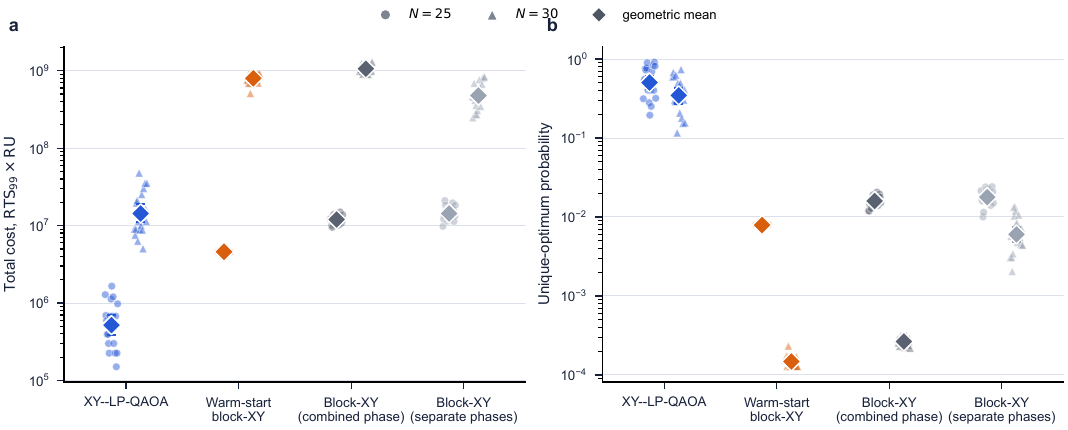}
\caption{\textbf{Two-level BCST comparisons.} \textbf{a,b}, Unique-optimum total cost and success probability for XY--LP-QAOA and matched block-XY comparators ($n=20$ instances per size). Small symbols show instances; diamonds and bars show geometric means and 95\% bootstrap intervals.}
\label{fig:bcst_matched_scaling}
\end{figure}

One block-XY baseline applied a single phase angle to the combined conflict-plus-objective Hamiltonian in each layer; the other assigned separate angles to the conflict and objective terms.  Including stage-1 replay and terminal verification, XY--LP-QAOA had lower unique-optimum total cost and higher unique-optimum probability than warm-start block-XY and both block-XY parameterizations on every instance at both sizes (Fig.~\ref{fig:bcst_matched_scaling}).

Warm-start block-XY retained the optimized stage-1 state but used local block-XY refinement with the conflict-penalized phase separator and loss.  Controlled mixer and reference-state substitutions under common depths and initial states are reported separately in \suppref{app:bcst_causal}.

\FloatBarrier

\begin{table*}[t]
\centering
\caption{Aggregate unique-optimum results over five BCST validation instances at each size.}
\label{tab:five-seed-matched-validation}
\small
\begin{tabularx}{\textwidth}{@{}>{\raggedright\arraybackslash}Xrrrrrr@{}}
\toprule
Method & $p$ & \shortstack{Geometric mean\\$P_{\mathrm{unique}}$} & \shortstack{Geometric mean\\feasible mass} & RU/shot & \shortstack{Geometric mean\\total cost} & \shortstack{Cost ratio\\comp./LP} \\
\midrule
\multicolumn{7}{@{}l}{\textit{$N=25$: matched method comparison}} \\
XY--LP-QAOA & 16 & 0.550 & 0.970 & 98,651 & $6.06\times10^{5}$ & 1.00 \\ 
Stage-1-only & -- & $7.82\times10^{-3}$ & 0.938 & 4,363 & $2.56\times10^{6}$ & 4.23 \\ 
Warm-start block-XY & 1 & $7.91\times10^{-3}$ & 0.949 & 7,896 & $4.58\times10^{6}$ & 7.57 \\ 
Block-XY (separate phases) & 12 & 0.0147 & 0.776 & 45,859 & $1.43\times10^{7}$ & 23.6 \\ 
Block-XY (combined phase) & 8 & 0.0132 & 0.868 & 31,727 & $1.10\times10^{7}$ & 18.2 \\ 
\addlinespace[2pt]
\multicolumn{7}{@{}l}{\textit{$N=25$: initial-state projector control}} \\
Initial-state projector\textsuperscript{\dag} & 24 & $2.47\times10^{-3}$ & 0.216 & 102,295 & $1.90\times10^{8}$ & 314 \\ 
\addlinespace[4pt]
\multicolumn{7}{@{}l}{\textit{$N=30$: matched method comparison}} \\
XY--LP-QAOA & 96\textsuperscript{*} & 0.461 & 0.983 & 1,505,215 & $1.20\times10^{7}$ & 1.00 \\ 
Stage-1-only & -- & $1.54\times10^{-4}$ & 0.972 & 13,375 & $4.00\times10^{8}$ & 33.2 \\ 
Warm-start block-XY & 1 & $1.56\times10^{-4}$ & 0.974 & 25,590 & $7.56\times10^{8}$ & 62.8 \\ 
Block-XY (separate phases) & 24 & $3.27\times10^{-3}$ & 0.970 & 305,275 & $4.30\times10^{8}$ & 35.7 \\ 
Block-XY (combined phase) & 4 & $2.65\times10^{-4}$ & 0.856 & 60,975 & $1.06\times10^{9}$ & 88.0 \\ 
\addlinespace[2pt]
\multicolumn{7}{@{}l}{\textit{$N=30$: initial-state projector control}} \\
Initial-state projector\textsuperscript{\dag} & 24 & $1.98\times10^{-4}$ & 0.799 & 325,315 & $7.57\times10^{9}$ & 629 \\ 
\bottomrule
\end{tabularx}
\par\vspace{2pt}
\begin{minipage}{0.98\textwidth}\footnotesize
\textit{Notes.} Probabilities, feasible masses and total costs are geometric means over five instances; per-shot RU is exact and invariant across instances.  Each instance-level total cost is the integer $\mathrm{RTS}_{99}\times\mathrm{RU}$ and includes frozen-stage replay.  Ratios exceed one when XY--LP-QAOA uses fewer logical resources.  Stage-1-only has no refinement stage.  Unrounded aggregates are supplied in Source Data.  \textsuperscript{*}\ The $N=30$ calculation uses $p=96$.  \textsuperscript{\dag}\ The initial-state projector uses the block-cardinality state as its initial state and projector reference.
\end{minipage}
\end{table*}
\begin{table*}[t]
\centering
\caption{Sensitivity of the BCST comparison to target-set size.  Unique-optimum results are reported in Supplementary Table~\ref{tab:five-seed-matched-validation}.}\label{tab:bcst-five-seed-target-geomeans}
\small
\begin{tabularx}{\textwidth}{@{}>{\raggedright\arraybackslash}Xlrrrr@{}}
\toprule
& & \multicolumn{2}{c}{Best two} & \multicolumn{2}{c}{Best eight} \\
\cmidrule(lr){3-4}\cmidrule(l){5-6}
Method & $p$/updates & $P_{\mathrm{succ}}$ & Total cost & $P_{\mathrm{succ}}$ & Total cost \\
\midrule
\multicolumn{6}{@{}l}{\textit{$N=25$}} \\
XY--LP-QAOA & 16/800 & 0.738 & $3.71\times10^{5}$ & 0.955 & $2.14\times10^{5}$ \\ 
Stage-1-only & -- & 0.0156 & $1.28\times10^{6}$ & 0.0625 & $3.14\times10^{5}$ \\ 
Warm-start block-XY & 1/400 & 0.0158 & $2.28\times10^{6}$ & 0.0633 & $5.61\times10^{5}$ \\ 
Block-XY (separate phases) & 12/400 & 0.0293 & $7.14\times10^{6}$ & 0.104 & $1.94\times10^{6}$ \\ 
Block-XY (combined phase) & 8/800 & 0.0267 & $5.41\times10^{6}$ & 0.0994 & $1.41\times10^{6}$ \\ 
\addlinespace[2pt]
Initial-state projector\textsuperscript{\dag} & 24/400 & $4.90\times10^{-3}$ & $9.60\times10^{7}$ & 0.0189 & $2.47\times10^{7}$ \\ 
\addlinespace[4pt]
\multicolumn{6}{@{}l}{\textit{$N=30$}} \\
XY--LP-QAOA & 96/800 & 0.720 & $5.75\times10^{6}$ & 0.959 & $2.62\times10^{6}$ \\ 
Stage-1-only & -- & $2.95\times10^{-4}$ & $2.09\times10^{8}$ & $1.23\times10^{-3}$ & $5.00\times10^{7}$ \\ 
Warm-start block-XY & 1/400 & $2.98\times10^{-4}$ & $3.95\times10^{8}$ & $1.25\times10^{-3}$ & $9.45\times10^{7}$ \\ 
Block-XY (separate phases) & 24/400 & $5.62\times10^{-3}$ & $2.49\times10^{8}$ & 0.0176 & $7.93\times10^{7}$ \\ 
Block-XY (combined phase) & 4/400 & $5.03\times10^{-4}$ & $5.58\times10^{8}$ & $1.89\times10^{-3}$ & $1.49\times10^{8}$ \\ 
\addlinespace[2pt]
Initial-state projector\textsuperscript{\dag} & 24/400 & $3.92\times10^{-4}$ & $3.82\times10^{9}$ & $1.51\times10^{-3}$ & $9.93\times10^{8}$ \\ 
\bottomrule
\end{tabularx}
\par\vspace{2pt}
\begin{minipage}{0.98\textwidth}\footnotesize
\textit{Notes.} Probabilities and total costs are geometric means over five instances; each instance-level cost uses integer $\mathrm{RTS}_{99}\times\mathrm{RU}$ and includes frozen-stage replay.  Unrounded aggregates are supplied in Source Data.  Stage-1-only has no refinement stage.  \textsuperscript{\dag}\ The initial-state projector uses the block-cardinality state as its initial state and projector reference.
\end{minipage}
\end{table*}

\FloatBarrier

Across the five instances in Tables~\ref{tab:five-seed-matched-validation} and~\ref{tab:bcst-five-seed-target-geomeans}, XY--LP-QAOA had lower unique-optimum total cost than every comparator at both sizes.  For each instance, the block-XY envelope used the lower observed cost of the combined- and separate-phase circuits.  Its geometric-mean cost ratio relative to XY--LP-QAOA was 17.3 at $N=25$ and 35.7 at $N=30$.  The initial-state projector control gave ratios of 314 and 629.

\subsection{Depth and resource analysis at \texorpdfstring{$N=30$}{N=30}}
\label{app:bcst_depth}

To determine whether additional refinement depth lowers total cost or only trades higher success probability for greater per-shot cost, we evaluated two $N=30$ tuning instances with 800 Adam updates and three independent restarts over the depth grid $p\in\{1,2,4,8,12,16,24,36,48,64,80,96,112,128\}$.  Unique-optimum probability reached its largest observed geometric mean at $p=128$, whereas unique-optimum total cost reached its lowest observed geometric mean at $p=80$.

\begin{figure}[!htbp]
\centering
\includegraphics[width=0.98\linewidth]{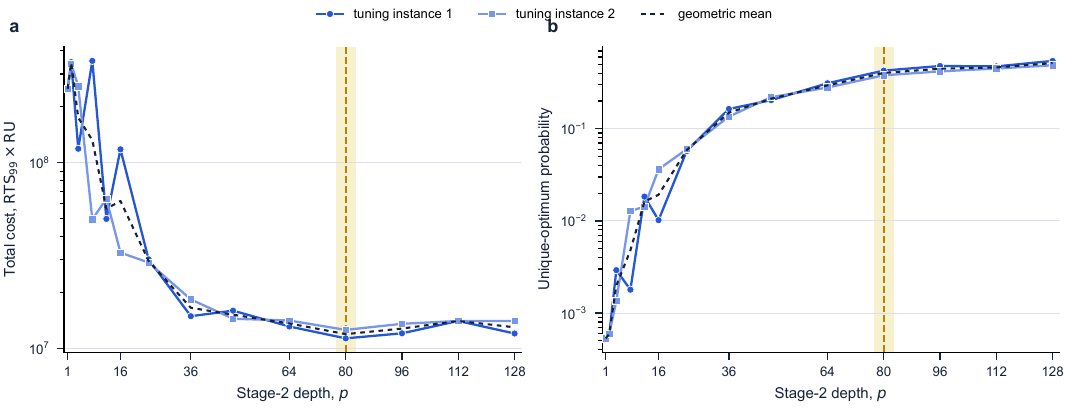}
\caption{\textbf{Depth dependence of LP-QAOA at $N=30$.} \textbf{a,b}, Unique-optimum total cost and success probability for two tuning instances; dotted curves show their geometric means. The highlighted depth minimizes total cost on the tested grid.}
\label{fig:bcst-n30-depth-resource}
\end{figure}

\begin{table*}[t]
\centering
\caption{Aggregate LP-QAOA depth and resource analysis at $N=30$ over two tuning instances.}
\label{tab:n30-lp-depth-resource}
\small
\begin{tabular}{@{}rrrr@{}}
\toprule
$p$ & \shortstack{Geometric mean\\$P_{\mathrm{unique}}$} & RU/shot & \shortstack{Geometric mean\\total cost} \\
\midrule
1 & $5.32\times10^{-4}$ & 28,915 & $2.50\times10^{8}$ \\ 
2 & $5.94\times10^{-4}$ & 44,455 & $3.45\times10^{8}$ \\ 
4 & $1.99\times10^{-3}$ & 75,535 & $1.75\times10^{8}$ \\ 
8 & $4.79\times10^{-3}$ & 137,695 & $1.32\times10^{8}$ \\ 
12 & 0.0162 & 199,855 & $5.64\times10^{7}$ \\ 
16 & 0.0192 & 262,015 & $6.21\times10^{7}$ \\ 
24 & 0.0587 & 386,335 & $2.95\times10^{7}$ \\ 
36 & 0.149 & 572,815 & $1.65\times10^{7}$ \\ 
48 & 0.212 & 759,295 & $1.52\times10^{7}$ \\ 
64 & 0.296 & 1,007,935 & $1.36\times10^{7}$ \\ 
\textbf{80}\textsuperscript{*} & 0.404 & 1,256,575 & $1.19\times10^{7}$ \\ 
96 & 0.449 & 1,505,215 & $1.28\times10^{7}$ \\ 
112 & 0.463 & 1,753,855 & $1.40\times10^{7}$ \\ 
128\textsuperscript{\dag} & 0.516 & 2,002,495 & $1.30\times10^{7}$ \\ 
\bottomrule
\end{tabular}
\par\vspace{2pt}
\begin{minipage}{0.98\textwidth}\footnotesize
\textit{Notes.} Each depth used 800 Adam updates.  Probability and unique-optimum total cost are geometric means over the two instances; per-shot RU follows Eq.~\eqref{eq:bcst_current_resource_counts}.  Instance-level probabilities and $\mathrm{RTS}_{99}$ values are supplied in Source Data.  \textsuperscript{*}~$p=80$ has the lowest displayed total cost on the tested grid.  \textsuperscript{\dag}~$p=128$ is the maximum tested depth and has the largest observed probability.
\end{minipage}
\end{table*}

\FloatBarrier

\subsection{Mixer, reference-state and objective ablations}
\label{app:bcst_causal}

The controlled ablation used three prespecified instances at each size and one learned stage-1 reference state $\ket{\phi}$ per size, shared across methods and instances.  Each optimized method used 800 Adam updates and three independent restarts at $p=12$ for $N=25$ and $p=24$ for $N=30$.  The learned-projector and warm-start block-XY methods under $H_{\mathrm{obj}}$ differ only in the stage-2 mixer; both start from $\ket{\phi}$ and use the normalized final-objective Hamiltonian $H_{\mathrm{obj}}$ for the phase separator and loss.  The corresponding methods under $H_{\mathrm{pen}}$ make the same mixer comparison.  The fixed-reference projector under $H_{\mathrm{pen}}$ instead replaces $\ket{\phi}$ in the projector mixer with the known block-cardinality state $\ket{s}$, whereas Stage-1-only applies no refinement.  Full trajectories and restart diagnostics are provided in Source Data.

Writing $C_{\mathrm{obj}}$ for the raw objective diagonal, let $c_{\min}$ and $c_{\max}$ denote its extrema over the complete block-cardinality basis.  The normalized final-objective Hamiltonian is $H_{\mathrm{obj}}=(C_{\mathrm{obj}}-c_{\min}\Id)/(c_{\max}-c_{\min})$.  For the conflict-count diagonal $C_{\mathrm{conf}}$, we set $H_{\mathrm{conf}}=C_{\mathrm{conf}}/10$ and $G=H_{\mathrm{obj}}+11H_{\mathrm{conf}}$, then define
\begin{equation*}
H_{\mathrm{pen}}=\frac{G-G_{\min}\Id}{G_{\max}-G_{\min}},
\end{equation*}
where $G_{\min}$ and $G_{\max}$ are likewise taken over this basis.  Before $G$ is normalized to form $H_{\mathrm{pen}}$, $H_{\mathrm{obj}}$ ranges from 0 to 1, whereas a single conflict adds 1.1 through $11H_{\mathrm{conf}}$.  Hence every conflict-free configuration lies below every conflicting configuration.  The $H_{\mathrm{obj}}$-to-$H_{\mathrm{pen}}$ comparison changes the phase separator and optimized loss together.

With the notation of \suppref{app:resource}, the per-shot costs for these methods are
\begin{align}
\RU_{\mathrm{LP},H_{\mathrm{obj}}}(p)&=P_1+p(O+2P_1+S)+O,\\
\RU_{\mathrm{LP},H_{\mathrm{pen}}}(p)&=P_1+p(C+O+2P_1+S)+O,\\
\RU_{\mathrm{warm\text{-}XY},H_{\mathrm{obj}}}(p)&=P_1+p(O+X)+O,\\
\RU_{\mathrm{warm\text{-}XY},H_{\mathrm{pen}}}(p)&=P_1+p(C+O+X)+O,\\
\RU_{\mathrm{fixed\text{-}reference},H_{\mathrm{pen}}}(p)&=P_1+p(C+O+2P_{\mathrm{blk}}+S)+O,\\
\RU_{\mathrm{initial\text{-}state}}(p)&=P_{\mathrm{blk}}+p(C+O+2P_{\mathrm{blk}}+S)+O.
\end{align}

At $N=25$ and $p=12$, these expressions give
\begin{equation*}
\begin{aligned}
\RU_{\mathrm{LP},H_{\mathrm{obj}}}&=75{,}079,
&\RU_{\mathrm{warm\text{-}XY},H_{\mathrm{obj}}}&=46{,}459,\\
\RU_{\mathrm{LP},H_{\mathrm{pen}}}&=75{,}379,
&\RU_{\mathrm{warm\text{-}XY},H_{\mathrm{pen}}}&=46{,}759,\\
\RU_{\mathrm{fixed\text{-}reference},H_{\mathrm{pen}}}&=53{,}779.
\end{aligned}
\end{equation*}
At $N=30$ and $p=24$, the corresponding values are
\begin{equation*}
\begin{aligned}
\RU_{\mathrm{LP},H_{\mathrm{obj}}}&=386{,}335,
&\RU_{\mathrm{warm\text{-}XY},H_{\mathrm{obj}}}&=305{,}815,\\
\RU_{\mathrm{LP},H_{\mathrm{pen}}}&=387{,}055,
&\RU_{\mathrm{warm\text{-}XY},H_{\mathrm{pen}}}&=306{,}535,\\
\RU_{\mathrm{fixed\text{-}reference},H_{\mathrm{pen}}}&=326{,}575.
\end{aligned}
\end{equation*}
The initial-state projector controls use
\begin{equation*}
\RU_{\mathrm{initial\text{-}state}}(p)=
\begin{cases}
3{,}463+4{,}118p,&N=25,\\
12{,}115+13{,}050p,&N=30,
\end{cases}
\end{equation*}
which gives 102,295 and 325,315 RU at $p=24$.

Unique-optimum total cost, $\mathrm{RTS}_{99}\times\mathrm{RU}$, gives the primary comparison.  Under both $H_{\mathrm{obj}}$ and $H_{\mathrm{pen}}$, learned-projector refinement had lower total cost and higher unique-optimum probability than warm-start block-XY on every tested instance at both sizes.  Under $H_{\mathrm{pen}}$, the learned reference likewise outperformed the fixed block-cardinality reference on both outcomes.  Absolute outcomes are shown in Fig.~\ref{fig:bcst-causal-ablation}; paired probability contrasts are reported in Table~\ref{tab:causal-unique-contrasts}, and instance-level values are supplied in Source Data.

\begin{figure}[!htbp]
\centering
\includegraphics[width=0.98\linewidth]{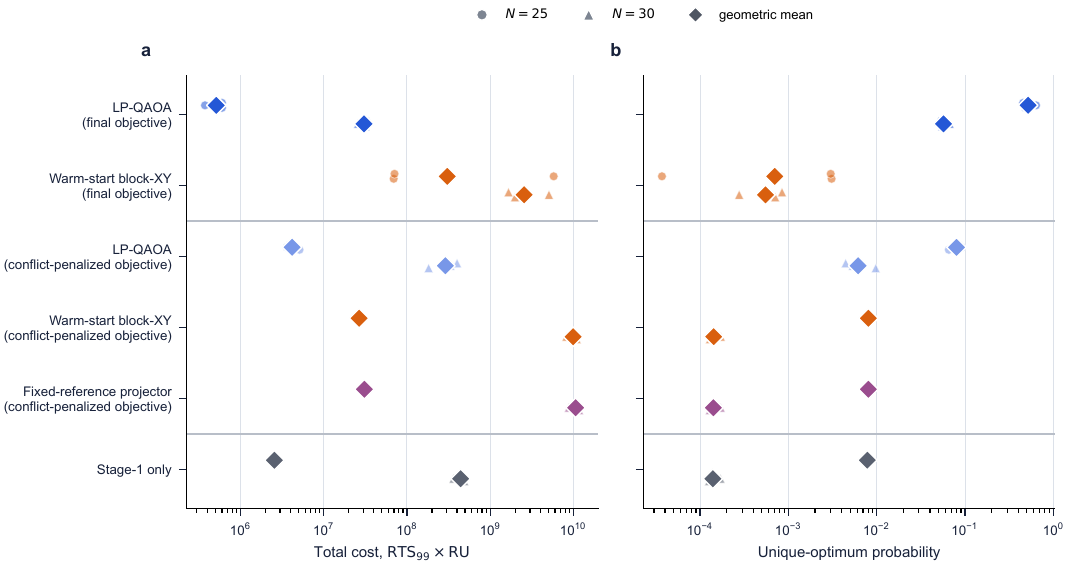}
\caption{\textbf{BCST mixer and reference-state ablations.} \textbf{a,b}, Unique-optimum total cost and success probability at $N=25$ and $N=30$ ($n=3$ instances per size). Small symbols show instances; diamonds show geometric means.}
\label{fig:bcst-causal-ablation}
\end{figure}

\begin{table*}[t]
\caption{Circuit definitions and per-shot resources for the controlled BCST ablations.}
\label{tab:causal-arm-registry}
\centering
\small
\begin{tabularx}{\textwidth}{@{}>{\raggedright\arraybackslash}Xlllrr@{}}
\toprule
Method & Mixer & Phase/loss & Projector ref. & $p$ & RU \\
\midrule
\multicolumn{6}{@{}l}{\textit{$N=25$}} \\
Learned projector ($H_{\mathrm{obj}}$) & $\Id-\lvert\phi\rangle\!\langle\phi\rvert$ & $H_{\mathrm{obj}}$ & $\phi$ & 12 & 75{,}079 \\
Warm-start block-XY ($H_{\mathrm{obj}}$) & $M_{\mathrm{XY}}$ & $H_{\mathrm{obj}}$ & -- & 12 & 46{,}459 \\
Learned projector ($H_{\mathrm{pen}}$) & $\Id-\lvert\phi\rangle\!\langle\phi\rvert$ & $H_{\mathrm{pen}}$ & $\phi$ & 12 & 75{,}379 \\
Warm-start block-XY ($H_{\mathrm{pen}}$) & $M_{\mathrm{XY}}$ & $H_{\mathrm{pen}}$ & -- & 12 & 46{,}759 \\
Fixed-reference projector ($H_{\mathrm{pen}}$) & $\Id-\lvert s\rangle\!\langle s\rvert$ & $H_{\mathrm{pen}}$ & $s$ & 12 & 53{,}779 \\
Stage-1-only & -- & -- & -- & -- & 4{,}363 \\
\midrule
\multicolumn{6}{@{}l}{\textit{$N=30$}} \\
Learned projector ($H_{\mathrm{obj}}$) & $\Id-\lvert\phi\rangle\!\langle\phi\rvert$ & $H_{\mathrm{obj}}$ & $\phi$ & 24 & 386{,}335 \\
Warm-start block-XY ($H_{\mathrm{obj}}$) & $M_{\mathrm{XY}}$ & $H_{\mathrm{obj}}$ & -- & 24 & 305{,}815 \\
Learned projector ($H_{\mathrm{pen}}$) & $\Id-\lvert\phi\rangle\!\langle\phi\rvert$ & $H_{\mathrm{pen}}$ & $\phi$ & 24 & 387{,}055 \\
Warm-start block-XY ($H_{\mathrm{pen}}$) & $M_{\mathrm{XY}}$ & $H_{\mathrm{pen}}$ & -- & 24 & 306{,}535 \\
Fixed-reference projector ($H_{\mathrm{pen}}$) & $\Id-\lvert s\rangle\!\langle s\rvert$ & $H_{\mathrm{pen}}$ & $s$ & 24 & 326{,}575 \\
Stage-1-only & -- & -- & -- & -- & 13{,}375 \\
\bottomrule
\end{tabularx}
\par\vspace{2pt}\footnotesize\textit{Notes.} All methods start from the same learned stage-1 state \(\lvert\phi\rangle\) within each size.  Each optimized method used 800 Adam updates and three independent restarts at the listed common depth; Stage-1-only has no refinement stage.  \(H_{\mathrm{obj}}\) and \(H_{\mathrm{pen}}\) specify both the phase separator and loss.  Outcomes are reported in Fig.~\ref{fig:bcst-causal-ablation}.
\end{table*}
\begin{table*}[t]
\caption{Aggregate unique-optimum probability contrasts for the controlled BCST ablations.}
\label{tab:causal-unique-contrasts}
\centering
\small
\begin{tabularx}{\textwidth}{@{}r>{\raggedright\arraybackslash}Xrrr@{}}
\toprule
$N$ & Contrast (left minus right) & Mean & Range & Positive instances \\
\midrule
25 & Learned projector ($H_{\mathrm{obj}}$) $-$ warm-start block-XY ($H_{\mathrm{obj}}$) & $0.522$ & $0.452$--$0.637$ & 3/3 \\
25 & Learned projector ($H_{\mathrm{pen}}$) $-$ warm-start block-XY ($H_{\mathrm{pen}}$) & $0.0728$ & $0.0575$--$0.0840$ & 3/3 \\
25 & Learned projector ($H_{\mathrm{pen}}$) $-$ fixed-reference projector ($H_{\mathrm{pen}}$) & $0.0729$ & $0.0575$--$0.0841$ & 3/3 \\
\midrule
30 & Learned projector ($H_{\mathrm{obj}}$) $-$ warm-start block-XY ($H_{\mathrm{obj}}$) & $0.0569$ & $0.0494$--$0.0667$ & 3/3 \\
30 & Learned projector ($H_{\mathrm{pen}}$) $-$ warm-start block-XY ($H_{\mathrm{pen}}$) & $0.00637$ & $0.00432$--$0.00962$ & 3/3 \\
30 & Learned projector ($H_{\mathrm{pen}}$) $-$ fixed-reference projector ($H_{\mathrm{pen}}$) & $0.00637$ & $0.00432$--$0.00962$ & 3/3 \\
\bottomrule
\end{tabularx}
\par\vspace{2pt}\footnotesize\textit{Notes.} Each entry is \(P_{\mathrm{unique}}(\mathrm{left})-P_{\mathrm{unique}}(\mathrm{right})\); positive values favour the named method on the left.  Means and ranges summarize three instances at each size; sizes are not pooled.  Instance-level contrasts are supplied in Source Data.
\end{table*}

\FloatBarrier

\subsection{Finite-depth trainability under learned-projector mixing}
\label{app:gradient_diagnostics}

To evaluate finite-depth trainability under learned-projector and local block-XY mixing, we computed gradients at randomly sampled angles on six $N=30$ BCST instances.  Across depths 26, 29 and 32, XY--LP-QAOA had normalized training-loss gradient magnitudes that were 73.7--89.5-fold larger than the three block-XY baselines.  The corresponding final-objective gradient magnitudes were 68.9--89.0-fold larger (Table~\ref{tab:app_gradient_diagnostics}).  The baselines were block-XY (combined phase), block-XY (separate phases) and warm-start block-XY refinement.  Training-loss descent was more closely aligned with increasing unique-optimum probability in seven of nine comparisons across the three baselines and three depths.  The warm-start block-XY comparisons control for the shared stage-1 initialization.  These measurements quantify local sensitivity to circuit parameters.

For each of six instances, four methods and three depths, we sampled 12 angle vectors, giving 864 evaluated angle settings.  Angles were sampled uniformly from $[-\pi,\pi]$, with phase and mixer angles shared across methods and additional constraint-phase angles sampled independently for separate-phase block-XY.  At every setting, we calculated gradients of the training loss, final objective and unique-optimum log probability, giving 2,592 gradient vectors in total.  For XY--LP-QAOA, gradients were taken with respect to the stage-2 objective and learned-projector-mixer angles while the learned structural state was held fixed.

For $\mathcal L_H(\theta)=\langle H\rangle_\theta$, we measured normalized gradient magnitude as
\begin{equation}
    g_{\mathrm{RMS}}^{(H)}(\theta)
    =\frac{1}{\sigma_H}\sqrt{\frac{1}{K}\sum_{a=1}^{K}
    \left(\frac{\partial\mathcal L_H}{\partial\theta_a}\right)^2},
    \label{eq:gradient_rms_normalized}
\end{equation}
where $K$ is the number of variable circuit angles and $\sigma_H$ is the standard deviation of the diagonal energies over the complete block-cardinality subspace.  RMS denotes the root mean square across parameter derivatives.  For the training loss, $H$ is each method's training Hamiltonian; for final-objective gradients, every method uses the same final Hamiltonian for that instance.  Each instance was summarized by its median over the 12 angle vectors.  Gradient ratios were then aggregated as geometric means over the six paired instances.

The alignment measure is the cosine between training-loss descent and the gradient of unique-optimum log probability.  For $P_{\mathrm{succ}}>0$, the identity
\begin{equation}
    \nabla_\theta\log P_{\mathrm{succ}}
    =\frac{\nabla_\theta P_{\mathrm{succ}}}{P_{\mathrm{succ}}}
\end{equation}
shows that using probability or log probability gives the same direction and hence the same cosine.  Figure~\ref{fig:gradient_trainability} therefore describes this quantity as alignment with increasing unique-optimum probability.  Open symbols show the six instance medians; filled symbols and bars show their median and interquartile range.

\begin{table}[!htbp]
\centering
\caption{Learned-projector mixing yields larger, more target-aligned gradients at $N=30$.  Gradient-RMS ratios divide XY--LP-QAOA by the listed comparator; values above one favour XY--LP-QAOA\@.  Gradient--target alignment is the difference between the XY--LP-QAOA and comparator cosine values for training-loss descent and increasing unique-optimum log probability; positive values indicate closer target alignment.  Entries aggregate six paired instance medians.}
\label{tab:app_gradient_diagnostics}
\small
\begin{tabular}{rlrrr}
\toprule
Depth & Comparator & \shortstack{Training-loss\\RMS ratio} & \shortstack{Final-objective\\RMS ratio} & \shortstack{Gradient--target\\alignment difference} \\
\midrule
26 & Block-XY (combined phase) & $77.60$ & $77.10$ & $+0.073$ \\
26 & Block-XY (separate phases) & $87.95$ & $89.03$ & $+0.099$ \\
26 & Warm-start block-XY & $86.76$ & $83.89$ & $+0.040$ \\
29 & Block-XY (combined phase) & $79.72$ & $77.41$ & $+0.126$ \\
29 & Block-XY (separate phases) & $89.48$ & $86.70$ & $+0.123$ \\
29 & Warm-start block-XY & $85.39$ & $82.87$ & $+0.147$ \\
32 & Block-XY (combined phase) & $73.67$ & $68.86$ & $-0.046$ \\
32 & Block-XY (separate phases) & $86.03$ & $82.99$ & $+0.005$ \\
32 & Warm-start block-XY & $74.34$ & $77.71$ & $-0.091$ \\
\bottomrule
\end{tabular}
\end{table}

Across all three depths and comparators, the learned-projector circuit had larger normalized gradients; its gradient--target alignment was higher in seven of the nine comparisons.

\FloatBarrier

\section{Soft coarse-to-fine optimization on SBM instances}
\label{app:soft_multiscale}

\subsection{Graph ensemble and hierarchy}
\label{app:soft_sbm_csp}

We generated ten $N=24$ two-community stochastic block model (SBM) graphs with $(p_{\mathrm{in}},p_{\mathrm{out}})=(0.80,0.36)$ \cite{holland1983stochastic}.  The planted assignment is constant on six contiguous groups of four vertices and alternates between the two communities,
\[
    111100001111000011110000.
\]
For each unordered vertex pair, an edge was drawn independently with probability $p_{\mathrm{in}}$ when the planted labels agree and $p_{\mathrm{out}}$ otherwise.  Ten independently sampled graph realizations were retained.

For adjacency matrix $A$, the final Hamiltonian is
\begin{equation}
    C_{\mathrm{fine}}(z)
    =-\sum_{1\le i<j\le 24}(A_{ij}-1/2)z_i z_j,
    \qquad z_i\in\{-1,+1\}.
    \label{eq:app_sbm_fine}
\end{equation}
The exact target set
\[
    \calO_{\mathrm{SBM}}=\argmin_z C_{\mathrm{fine}}(z)
\]
was enumerated over the full computational basis.  Each graph has two ground states related by global spin flip.

The coarse Hamiltonian uses the planted four-vertex grouping as a hierarchy prior.  Let $G_a$ denote group $a$ and
\[
    \bar J_{ab}=\frac{1}{|G_a||G_b|}
    \sum_{i\in G_a,j\in G_b}(A_{ij}-1/2)
\]
be the mean coupling between groups.  The implemented coarse objective is
\begin{equation}
    C_{\mathrm{coarse}}(z)
    =-\sum_a\sum_{i<j\in G_a}z_i z_j
    -0.25\sum_{a<b}\bar J_{ab}
    \sum_{i\in G_a,j\in G_b}z_i z_j.
    \label{eq:app_sbm_coarse}
\end{equation}
This objective favours agreement within each four-vertex group and represents inter-group relations through averaged couplings.  We call a graph \emph{aligned} when every fine ground state is also a coarse ground state.  Nine graphs were aligned.  In the remaining graph, 628 computational-basis states have coarse energy strictly below the fine-ground-state energy.  This graph serves as the coarse--fine mismatch control.

\subsection{Circuits and fixed-budget optimization}
\label{app:soft_sbm_methods}

Stage 1 of LP-QAOA uses $p_1=4$ transverse-field mixer layers and coarse phase separators,
\[
    \ket{\phi_1(\theta_1)}
    =\prod_{\ell=p_1}^{1}
    e^{-i\beta_{1,\ell}\sum_iX_i}
    e^{-i\gamma_{1,\ell}C_{\mathrm{coarse}}}
    \ket{+}^{\otimes24},
\]
and minimizes the expected coarse energy.  After freezing the selected stage-1 circuit, stage 2 applies one fine phase separator and one learned-projector mixer,
\[
    \ket{\psi_2(\theta_2)}
    =e^{-i\beta_{2,1}R_{\phi_1}}
    e^{-i\gamma_{2,1}C_{\mathrm{fine}}}
    \ket{\phi_1},
    \qquad
    R_{\phi_1}=\Id-\ket{\phi_1}\bra{\phi_1},
\]
and minimizes the expected fine energy.

The comparisons include the selected stage-1 circuit alone (LP stage-1 objective QAOA), a warm-start circuit that follows the same frozen preparation with one fine phase separator and one transverse-field mixer, and standard QAOA on $C_{\mathrm{fine}}$ at $p=4,6,8$.  Every method used Adam for 320 parameter updates from four independent initializations.  Learning rates were $0.04$ for stage 1, stage 2 and standard QAOA, and $0.02$ for warm-start QAOA\@.  One initialization used a smooth angle schedule and the other three sampled each angle uniformly from $[-\pi,\pi]$.  For each method and graph, the checkpoint and initialization with the lowest expected training energy were selected before ground-state probability was evaluated.  All state-vector calculations used $2^{24}$ complex amplitudes with 32-bit real and imaginary components.  These comparisons use the same optimization budget for every graph at a given method and depth.

\subsection{Aligned-graph comparisons}
\label{app:soft_sbm_aligned}

Resource units follow Eqs.~\eqref{eq:soft_ru} and \eqref{eq:sbm_continuous_cost}.  At $N=24$, the fine and coarse phase separators each contain $\binom{24}{2}=276$ two-body terms, the transverse-field mixer costs 24 RU and the selective phase costs $24^2=576$ RU\@.  The frozen coarse preparation therefore costs
\begin{equation*}
    \RU(U_1)=4(276+24)=1{,}200.
\end{equation*}
The per-preparation costs used in the comparison are
\begin{align*}
    \RU_{\mathrm{direct}}(p)&=p(276+24)+276,\\
    \RU_{\mathrm{stage\text{-}1}}(4)&=\RU_{\mathrm{standard}}(4)=1{,}476,\\
    \RU_{\mathrm{standard}}(6)&=2{,}076,
    &\RU_{\mathrm{standard}}(8)&=2{,}676,\\
    \RU_{\mathrm{warm}}(4,1)&=1{,}200+(276+24)+276=1{,}776,\\
    \RU_{\mathrm{LP}}(4,1)&=1{,}200+(276+2\times1{,}200+576)+276=4{,}728.
\end{align*}
Table~\ref{tab:app_soft_sbm} reports medians over the nine aligned graphs and paired LP-QAOA win counts.  The standard QAOA probability envelope takes the largest probability among $p=4,6,8$ separately for each graph; its cost envelope takes the smallest continuous cost.

\begin{table}[!htbp]
\centering
\caption{SBM outcomes on the nine coarse--fine aligned graphs at $(p_{\mathrm{in}},p_{\mathrm{out}})=(0.80,0.36)$.}
\label{tab:app_soft_sbm}
\small
\resizebox{\textwidth}{!}{%
\begin{tabular}{llrrrrr}
\toprule
Method & Depth & RU & Median $P_{\mathrm{succ}}$ & LP probability wins & Median $\mathcal C_{99}^{\mathrm{cont}}$ & LP cost wins \\
\midrule
LP-QAOA & $(4,1)$ & 4,728 & 0.7771 & -- & 14,505 & -- \\
LP stage-1 objective QAOA & $4$ & 1,476 & 0.2932 & 9/9 & 19,592 & 9/9 \\
Warm-start QAOA & $(4,1)$ & 1,776 & 0.2902 & 9/9 & 23,856 & 9/9 \\
Standard QAOA & $4$ & 1,476 & 0.1655 & 9/9 & 37,577 & 9/9 \\
Standard QAOA & $6$ & 2,076 & 0.3385 & 9/9 & 23,137 & 6/9 \\
Standard QAOA & $8$ & 2,676 & 0.4338 & 9/9 & 21,666 & 7/9 \\
Standard QAOA envelope & $4,6,8$ & -- & 0.4338 & 9/9 & 20,379 & 6/9 \\
\bottomrule
\end{tabular}}
\end{table}

LP-QAOA assigned more probability to the ground-state pair than every comparator on every graph in the main comparison.  The continuous-cost comparison also accounts for the larger circuit required to replay and uncompute the stage-1 preparation.  LP-QAOA had lower cost than the standard QAOA depth envelope on six of these nine graphs.

\subsection{Coarse--fine mismatch control}
\label{app:soft_sbm_mismatch}

In the coarse--fine mismatch control, the fine optimum differs from the planted assignment by two spin flips, while the coarse Hamiltonian favours the planted basin.  LP-QAOA assigned ground-state probability $2.021\times10^{-6}$ at continuous cost $1.078\times10^{10}$; standard QAOA at $p=4,6,8$ reached probabilities of $0.05444$--$0.1891$ at costs of $5.50\times10^4$--$1.21\times10^5$.  LP-QAOA reached the lowest expected fine energy, $-41.868$ compared with $-41.630$ for standard QAOA at $p=8$, while its output distribution remained concentrated in the coarse-favoured low-energy basin.  This control identifies coarse--fine ground-state alignment as a condition for amplification in this benchmark.  Complete outcomes, including the LP stage-1 objective and warm-start QAOA comparisons, are provided in Source Data.

\FloatBarrier

\ifdefined\supplementaryonly
\clearpage
\printbibliography[title={Supplementary References}]
\fi
\fi
\end{document}